\documentclass[a4paper,11pt]{article}

\usepackage{graphicx}
\usepackage{latexsym}
\usepackage{amsmath,amssymb,mathtools}
\usepackage[ruled,vlined,linesnumbered]{algorithm2e}
\SetKwInOut{Input}{Input}\SetKwInOut{Output}{Output}
\usepackage[numbers]{natbib}  
\usepackage{hyperref,url}
\usepackage{doi}
\usepackage{bm}
\usepackage{tikz}
\usetikzlibrary{arrows.meta,positioning,math,calc}

\newcommand{\algref}[1]{Algorithm~\ref{alg:#1}}
\renewcommand{\eqref}[1]{(\ref{eq:#1})}
\newcommand{\secref}[1]{Section~\ref{sec:#1}}

\newcommand{\figref}[1]{Figure~\ref{fig:#1}}
\newcommand{\corref}[1]{Corollary~\ref{cor:#1}}
\newcommand{\lemref}[1]{Lemma~\ref{lem:#1}}
\newcommand{\lineref}[1]{Line~\ref{line:#1}}

\newcommand{\thmref}[1]{Theorem~\ref{thm:#1}}

\newcommand{\MA}{\mathcal{A}}
\newcommand{\MB}{\mathcal{B}}

\newcommand{\MF}{\mathcal{F}}
\newcommand{\MG}{\mathcal{G}}
\newcommand{\MI}{\mathcal{I}}

\newcommand{\MN}{\mathcal{N}}
\newcommand{\MP}{\mathcal{P}}

\newcommand{\MR}{\mathcal{R}}
\newcommand{\MS}{\mathcal{S}}

\newcommand{\MY}{\mathcal{Y}}
\newcommand{\MZ}{\mathcal{Z}}

\newcommand{\bbA}{\mathbb{A}}
\newcommand{\bbB}{\mathbb{B}}

\newcommand{\bbR}{\mathbb{R}}

\newcommand{\bbZ}{\mathbb{Z}}

\newcommand{\cl}{{\mathit{cl}}}
\newcommand{\propag}{\mathrm{pr}}

\newcommand{\Tmin}{\tau_{\mathrm{min}}}
\newcommand{\Tmem}{\tau_{\mathrm{mem}}}
\newcommand{\Tclo}{\tau_{\mathrm{clo}}}
\newcommand{\Tcol}{\tau_{\mathrm{col}}}

\newcommand{\deggeq}[1]{\mbox{deg$\geq$#1}\xspace}
\newcommand{\outdeggeq}[1]{\mbox{deg$^+$$\geq$#1}\xspace}
\newcommand{\indeggeq}[1]{\mbox{deg$^-$$\geq$#1}\xspace}
\newcommand{\true}{\ifmmode\text{\normalfont\scshape True}\else{\normalfont\scshape True}\fi\xspace}
\newcommand{\false}{\ifmmode\text{\normalfont\scshape False}\else{\normalfont\scshape False}\fi\xspace}
\newcommand{\wasInTree}{\ifmmode\text{\normalfont\scshape in-Tree}\else{\normalfont\scshape in-Tree}\fi\xspace}

\newcommand{\probMRS}{{\sc MinRSs of $k$-Core}}
\renewcommand{\widehat}[1]{\cup#1}
\newtheorem{cor}{Corollary}
\newtheorem{lem}{Lemma}

\newtheorem{thm}{Theorem}

\newenvironment{proof}{\medskip
  \noindent{\scshape Proof:}}{\quad $\Box$\medskip}
\newcommand{\myqed}{\quad $\Box$\medskip}

\long\def\invis#1{}

\title{Fast Enumeration of Minimal Removable Sets in Monotone Systems with Application to Core Collapse Analysis\thanks{This work is partially supported by JSPS KAKENHI 25K14993.}} 
\author{Kan Shota\thanks{Graduate School of Informatics, Kyoto University, \href{mailto:shota.kan@amp.i.kyoto-u.ac.jp}{\tt shota.kan@amp.i.kyoto-u.ac.jp}}\and Kazuya Haraguchi\thanks{Department of Logistics and Information Engineering, Tokyo University of Marine Science and Technology, \href{mailto:haraguchi@kaiyodai.ac.jp}{\tt haraguchi@kaiyodai.ac.jp}}}
\date{}

\begin{document}
\maketitle

\begin{abstract}
  In network vulnerability analysis,
  it is crucial to evaluate the robustness of $k$-cores against vertex removals.
  A $k$-core is often fragile since removing a few vertices can
  trigger a large reduction in the core size, a phenomenon known as core collapse.
  In this paper, we study the problem of enumerating all minimal removable sets (MinRSs) of a given $k$-core,
  where a MinRS is a minimal nonempty set of vertices whose removal
  results in a smaller $k$-core graph.
  We consider this problem within a general mathematical framework
  based on monotone systems. 
  We show that, for a monotone system that is given with an underlying graph $G=(V,E)$,
  all MinRSs of a solution can be enumerated in $O((n+m)n\tau_\omega)$ time, where
  $n=|V|$, $m=|E|$ and $\tau_\omega$ denotes the computation time of evaluating the monotone function of the system. 
  Furthermore, if the system satisfies the newly defined in-dominating seed property,
  the complexity drops to $O((n+m) \log n \cdot \tau_\omega)$ time.
  We prove that standard $k$-cores in undirected graphs satisfy this property,
  enabling MinRS enumeration in $O((n+m)\log n)$ time,
  a significant improvement over the baseline.
  We also extend our framework to enumerate all solutions in a given monotone system. This yields an $O((n+m)\log n)$-delay algorithm 
  for all $k$-core subgraphs, outperforming an algorithm given by [Boley et al., {\it Theoretical Computer Science}, 2010].
  Our framework is applicable to various $k$-core extensions,
  including weighted $k$-cores, multi-layer $\bm{k}$-cores, and $(k,\ell)$-cores.
\end{abstract}

\section{Introduction}
\label{sec:intro}

We denote by $\bbZ$ the set of integers.
For $i,j\in\bbZ$ $(i\le j)$,
we let $[i,j]\coloneqq\{i,i+1,\dots,j\}$. 
Let $G = (V,E)$ be an undirected graph
with a vertex set $V$ and an edge set $E$.
We denote $n=|V|$ and $m=|E|$. 
The degree of a vertex $v \in V$ is the number of edges in $G$ that are incident to $v$ and is denoted by $\deg_G(v)$.
We denote by $N_G(v)$ the set of open neighbors of $v$.
For a subset $S \subseteq V$,
we denote by $G[S]$ the subgraph of $G$ induced by $S$
and by $G-S$ the subgraph induced by $V \setminus S$,
i.e., $G[V\setminus S]$.
If $S = \{v\}$, then
we write $G - \{v\}$ as $G - v$ for simplicity.
A subset $S\subseteq V$ (or $G[S]$)
is called the \emph{$k$-core} (\emph{of $G$})
if $G[S]$ is the inclusion-wise maximal induced subgraph
such that the minimum degree is
no less than $k$,
where the $k$-core of $G$ is unique~\cite{S.1983}. 
We call $G$ a \emph{$k$-core graph}
if the $k$-core of $G$ is $G$ itself
(i.e., the minimum degree of $G$ is no less than $k$). 
We regard a null graph as a $k$-core graph. 



%
The $k$-core of a graph $G$ can be computed in linear time by a simple iterative procedure~\cite{BZ.2003}
and has various applications in network analyses~\cite{MGPV.2020};
e.g., social networks~\cite{S.1983,SEF.2018,TLL.2020,UKBM.2011}, 
biological networks~\cite{EAAAM.2015,FBST.2024,IS.2015}. 
For $v\in V$, the $k$-core of $G-v$
may be significantly smaller than that of $G$. 
This means that the $k$-core of a real-world network is
not necessarily robust against attacks, disasters, or failures.
When a network is partially destroyed,
the $k$-core of the remaining graph may be much smaller
than the original one, or it may even be empty.
This phenomenon, known as \emph{core collapse},
indicates the vulnerability of $k$-cores~\cite{BDLMG.2015,GDM.2006}.


\figref{example} illustrates this issue.
It shows a graph $G$ and
the 3-cores of some proper subgraphs
obtained by removing a vertex from $G$.
Note that the 3-core of $G$ is $G$ itself. 
As shown in (ii),
the 3-core of $G - v_3$ is $G - v_3$ itself. 
On the other hand, in (iii) and (iv),
the 3-core of $G - v$ for $v \in \{ v_1, v_8 \}$
is not necessarily equal to $G - v$.
In (iii), the 3-core of $G - v_8$ is $G - \{v_8, v_9\}$.
In (iv),
the 3-core of $G - v_1$ consists of only
four vertices $\{v_2, v_3, v_5, v_6\}$.
Furthermore, the 3-core of $G - v_2$ is empty. 

\begin{figure}[t!]
  \centering
  \begin{tabular}{ccc}
    \multicolumn{3}{c}{
  \begin{tikzpicture}[scale=0.45]
    \coordinate (v1) at (5,6) node at (v1) [label=left:$v_1$] {};
    \coordinate (v2) at (5,4) node at (v2) [label=right:$v_2$] {};
    \coordinate (v3) at (5,3) node at (v3) [label=below:$v_3$] {};
    \coordinate (v4) at (0,-1)  node at (v4) [label=below:$v_4$] {};
    \coordinate (v5) at (10,-1) node at (v5) [label=below:$v_7$] {};
    \coordinate (v6) at (2,1) node at (v6) [label=below:$v_5$] {};
    \coordinate (v7) at (8,1) node at (v7) [label=below:$v_6$] {};
    \coordinate (v8) at (8,4) node at (v8) [label=right:$v_8$] {};
    \coordinate (v9) at (9,2) node at (v9) [label=right:$v_9$] {};
    \draw (v1)--(v2);
    \draw (v1)--(v4);
    \draw (v1)--(v5);
    \draw (v2)--(v3);
    \draw (v2)--(v6);
    \draw (v2)--(v7);
    \draw (v3)--(v6);
    \draw (v3)--(v7);
    \draw (v6)--(v7);
    \draw (v4)--(v5);
    \draw (v4)--(v6);
    \draw (v5)--(v7);
    \draw (v8)--(v1);
    \draw (v8)--(v7);
    \draw (v8)--(v9);
    \draw (v9)--(v7);
    \draw (v9)--(v5);
    \fill[fill=white, draw] (v1) circle (3pt) (v2) circle (3pt);
    \fill[fill=white, draw] (v3) circle (3pt) (v4) circle (3pt);
    \fill[fill=white, draw] (v5) circle (3pt) (v6) circle (3pt);
    \fill[fill=white, draw] (v7) circle (3pt) (v8) circle (3pt) (v9) circle (3pt);
  \end{tikzpicture}
  }
    \\
    \multicolumn{3}{c}{(i) $G$}
    \\
    \\
  \begin{tikzpicture}[scale=0.45]
    \coordinate (v1) at (5,6) node at (v1) [label=right:$v_1$] {};
    \coordinate (v2) at (5,4) node at (v2) [label=right:$v_2$] {};
    \coordinate (v4) at (0,-1)  node at (v4) [label=below:$v_4$] {};
    \coordinate (v5) at (10,-1) node at (v5) [label=below:$v_7$] {};
    \coordinate (v6) at (2,1) node at (v6) [label=below:$v_5$] {};
    \coordinate (v7) at (8,1) node at (v7) [label=below:$v_6$] {};
    \coordinate (v8) at (8,4) node at (v8) [label=right:$v_8$] {};
    \coordinate (v9) at (9,2) node at (v9) [label=right:$v_9$] {};
    \draw (v1)--(v2);
    \draw (v1)--(v4);
    \draw (v1)--(v5);
    \draw (v2)--(v6);
    \draw (v2)--(v7);
    \draw (v6)--(v7);
    \draw (v4)--(v5);
    \draw (v4)--(v6);
    \draw (v5)--(v7);
    \draw (v8)--(v1);
    \draw (v8)--(v7);
    \draw (v8)--(v9);
    \draw (v9)--(v7);
    \draw (v9)--(v5);
    \fill[fill=white, draw] (v1) circle (3pt) (v2) circle (3pt);
    \fill[fill=white, draw] (v4) circle (3pt);
    \fill[fill=white, draw] (v5) circle (3pt) (v6) circle (3pt);
    \fill[fill=white, draw] (v7) circle (3pt) (v8) circle (3pt) (v9) circle (3pt);
  \end{tikzpicture}
  &\begin{tikzpicture}[scale=0.45]
    \coordinate (v1) at (5,6) node at (v1) [label=right:$v_1$] {};
    \coordinate (v2) at (5,4) node at (v2) [label=right:$v_2$] {};
    \coordinate (v3) at (5,3) node at (v3) [label=below:$v_3$] {};
    \coordinate (v4) at (0,-1)  node at (v4) [label=below:$v_4$] {};
    \coordinate (v5) at (10,-1) node at (v5) [label=below:$v_7$] {};
    \coordinate (v6) at (2,1) node at (v6) [label=below:$v_5$] {};
    \coordinate (v7) at (8,1) node at (v7) [label=below:$v_6$] {};
    \draw (v1)--(v2);
    \draw (v1)--(v4);
    \draw (v1)--(v5);
    \draw (v2)--(v3);
    \draw (v2)--(v6);
    \draw (v2)--(v7);
    \draw (v3)--(v6);
    \draw (v3)--(v7);
    \draw (v6)--(v7);
    \draw (v4)--(v5);
    \draw (v4)--(v6);
    \draw (v5)--(v7);
    \fill[fill=white, draw] (v1) circle (3pt) (v2) circle (3pt);
    \fill[fill=white, draw] (v3) circle (3pt) (v4) circle (3pt);
    \fill[fill=white, draw] (v5) circle (3pt) (v6) circle (3pt);
    \fill[fill=white, draw] (v7) circle (3pt); 
  \end{tikzpicture}
  &
  \begin{tikzpicture}[scale=0.45]
    \coordinate (v2) at (5,4) node at (v2) [label=right:$v_2$] {};
    \coordinate (v3) at (5,3) node at (v3) [label=below:$v_3$] {};
    \coordinate (v6) at (2,1) node at (v6) [label=below:$v_5$] {};
    \coordinate (v7) at (8,1) node at (v7) [label=below:$v_6$] {};
    \draw (v2)--(v3);
    \draw (v2)--(v6);
    \draw (v2)--(v7);
    \draw (v3)--(v6);
    \draw (v3)--(v7);
    \draw (v6)--(v7);
    \fill[fill=white, draw] (v2) circle (3pt);
    \fill[fill=white, draw] (v3) circle (3pt);
    \fill[fill=white, draw] (v6) circle (3pt);
    \fill[fill=white, draw] (v7) circle (3pt);
  \end{tikzpicture}
\\
  (ii) 3-core of $G-v_3$ & (iii) 3-core of $G-v_8$ & (iv) 3-core of $G-v_1$
  \end{tabular}
  \caption{3-cores of proper subgraphs of $G$}
  \label{fig:example}
\end{figure}

For a $k$-core graph $G=(V,E)$,
let $Y_1,Y_2,\dots,Y_q\subseteq V$ denote
the minimal nonempty subsets of $V$
such that $G-Y_i$, $i\in[1,q]$ is a $k$-core graph.
We claim that computing $Y_1,Y_2,\dots,Y_q$
would provide useful information
on evaluating the robustness of the $k$-core $G$
against vertex removals. 
For example, the size $|Y_i|$ indicates
the degree of core collapse when $v\in Y_i$
is removed from $G$, where it is easy to see
that $Y_i\cap Y_j=\emptyset$ holds for every $i,j\in[1,q]$ $(i\ne j)$.
Furthermore, for $v\in V\setminus(Y_1\cup Y_2\cup\dots\cup Y_q)$,
the $k$-core of $G-v$ is not any
of $G-Y_i$, $i\in[1,q]$;
it is a proper subset of $G-Y_j$ for some $j\in[1,q]$,
indicating that the removal of $v$
yields a larger collapse than any vertex in $Y_j$. 

In \figref{example}, such minimal nonempty subsets of $V$
are $Y_1=\{v_3\}$ and $Y_2=\{v_8,v_9\}$,
where the 3-cores of $G-Y_1$ and $G-Y_2$ are
shown in (ii) and (iii),
respectively.
For $v\in V\setminus(Y_1\cup Y_2)=\{v_1,v_2,v_4,v_5,v_6,v_7\}$, 
the $k$-core of $G-v$ is not any of $G-Y_1$ or $G-Y_2$;
it is a proper subset of $G-Y_1$ and/or $G-Y_2$;
e.g., the 3-core of $G-v_1$ in (iv)
is a proper subgraph of $G-Y_2$. 

Let us introduce terminologies to formalize a problem based on the above motivation. 
Suppose that an element set $U$ and a property $\Pi$
are given. Let $S\subseteq U$. 
A nonempty subset $Y\subseteq S$
is a \emph{removable set of $S$}
if $S\setminus Y$ satisfies $\Pi$,
where whether or not $S$ satisfies $\Pi$ does not matter. 
We abbreviate a removable set into an RS for simplicity
and call an inclusion-wise minimal RS a \emph{MinRS}.
For example, when we discuss $k$-cores, 
a MinRS of a graph $G=(V,E)$
is a minimal nonempty subset $Y\subseteq V$
such that $G-Y$ is a $k$-core graph. 

We formalize the problem \probMRS: 
Given a $k$-core graph $G=(V,E)$, find all MinRSs. 
Obviously,
the problem can be solved by the following simple algorithm:
compute the MinRS of $G-v$
for each $v\in V$ and collect the minimal subsets
among them. This algorithm takes at least $\Omega((n+m)n+\Tcol)$ time,
where $\Tcol$ denotes the time for collecting minimal subsets
and $\Tcol=O(n^3)$ holds. 


In this paper, we show that 
the problem \probMRS\ %
can be solved in 
$O((n+m)\log n)$ time
under a general framework based on 
what we call a monotone system.  
For a ground set $U$ and $\MF\subseteq 2^U$,
we call $(U,\MF)$ a \emph{set system},
or a system for short.
When $U$ is clear from the context, we may regard
$\MF$ as a system for simplicity. 
Let $\MZ=(Z,\leqslant)$ be a poset; 
$\omega: 2^U\times U\to Z$ be a function;
and $\theta\in Z$. 
For $S\subseteq U$ and $u\in U$, 
we may state $\omega(S,u)$ as $\omega_S(u)$. 
Given a tuple $(U,\MZ,\omega,\theta)$, 
we call $S\subseteq U$ a \emph{$\theta$-solution} if
$\theta\leqslant\omega_S(u)$ holds for every $u\in S$.
We regard an empty set as a $\theta$-solution.
We define 
$\MS(U,\MZ,\omega,\theta)\coloneqq\{S\subseteq U\mid S\text{ is a }\theta\text{-solution}\}$
to be the family of all $\theta$-solutions.
\begin{itemize}
  \item 
We say that $\omega$ is \emph{monotone} if
$\omega_S(u)\leqslant\omega_{S'}(u)$ holds
for any $S\subseteq S'\subseteq U$ and $u\in U$.
We call $(U,\MS(U,\MZ,\omega,\theta))$
(or $\MS(U,\MZ,\omega,\theta)$)
a \emph{monotone system} if $\omega$ is monotone.
\item 
Assuming an underlying graph $G=(V,E)$, 
we say that $\omega$ is \emph{local}
(\emph{with respect to $G$})~\cite{BZ.2002,BatZav.2011}
if $\omega_S(u)=\omega_{S\cap N_G(u)}(u)$ holds
for any $S\subseteq V$ and $u\in V$,
that is, $\omega_S(u)$ depends only on the neighbors of $u$
that belong to $S$.
We call $(V,\MS(V,\MZ,\omega,\theta))$ (or $\MS(V,\MZ,\omega,\theta)$)
\emph{local monotone} if $\omega$ is local and monotone.
To emphasize $G$,
we may state it as $\MS(V,\MZ,\omega,\theta;G)$. 
\end{itemize}



Suppose that we are given a
local monotone system $\MS=\MS(V,\MZ,\omega,\theta;G)$.
We consider the problem of finding all MinRSs $Y$ of $V$
such that $V\setminus Y$ is a $\theta$-solution. 
As we will see, the problem is a generalization of \probMRS. 
Let us denote by $\tau_\omega$ and $\kappa_\omega$
upper bounds on the time and space complexity
to compute $\omega_S(u)$ for $S\subseteq V$ and $u\in V$,
respectively. Our first theorem is summarized as follows. 

\begin{thm}
  \label{thm:main_local_system}
  For a local monotone system $\MS(V,\MZ,\omega,\theta; G)$,
  we can enumerate all MinRSs of $V$ in
  $O((n+m)n\cdot\tau_\omega)$ time
  and $O(n+m+\kappa_\omega)$ space.
\end{thm}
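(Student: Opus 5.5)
Since $\omega$ is monotone, $\MS$ is closed under union: if $S,S'$ are $\theta$-solutions and $u\in S$, then $\theta\leqslant\omega_S(u)\leqslant\omega_{S\cup S'}(u)$. Hence every $X\subseteq V$ has a unique maximal $\theta$-solution contained in it, which we denote $\cl(X)$. We compute $\cl(X)$ by the peeling procedure familiar from $k$-cores. Start from $X$. Repeatedly remove a vertex $u$ with $\theta\not\leqslant\omega_{X'}(u)$, where $X'$ is the current set. After each removal, re-evaluate only the neighbors of the removed vertex; this suffices because $\omega$ is local. Monotonicity ensures that a vertex removed at some point is never part of a solution inside the current set, so the output is exactly $\cl(X)$. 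Each vertex is evaluated once initially and then once per incident removed edge, so the cost is $O((n+m)\tau_\omega)$ time. The space is $O(n+m+\kappa_\omega)$, since we keep only a queue and membership flags and reuse the workspace for $\omega$.

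Next we characterize the MinRSs. For $v\in V$, let $R_v\coloneqq V\setminus\cl(V\setminus\{v\})$. Then $R_v$ is an RS containing $v$. Moreover, any RS $Y\ni v$ satisfies $V\setminus Y\subseteq\cl(V\setminus\{v\})$, because $V\setminus Y$ is a solution inside $V\setminus\{v\}$; therefore $R_v\subseteq Y$. Two consequences follow. First, $R_v$ is the unique minimum RS containing $v$. Second, every MinRS $Y$ equals $R_v$ for each $v\in Y$, since $R_v\subseteq Y$ and minimality forces equality. In particular, distinct MinRSs are disjoint, which matches the remark in the introduction. So the MinRSs are exactly the inclusion-minimal members of $\{R_v\mid v\in V\}$. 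Furthermore, $R_v$ is minimal if and only if $R_u=R_v$ for every $u\in R_v$: if some proper RS $Y\subsetneq R_v$ existed, then any $u\in Y$ would give $R_u\subseteq Y\subsetneq R_v$.

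The algorithm loops over $v\in V$. For each $v$, it computes $R_v$ by one peeling run, for $O((n+m)\tau_\omega)$ time. It then decides whether $R_v$ is minimal and outputs it only if $v$ is the smallest-index vertex in $R_v$, which avoids duplicates. Over all $v$ this gives the claimed $O((n+m)n\cdot\tau_\omega)$ time.

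The main obstacle is the minimality test without exceeding either bound. The naive test compares $R_v$ against every $R_u$, which costs too much time, and storing all $R_u$ costs $\Theta(n^2)$ space. I would avoid both with a different test. $R_v$ is minimal iff no $u\in R_v$ has $v\notin R_u$. So for a fixed $v$ we need, for each $u\in R_v$, whether $v\in R_u$, i.e.\ whether $v\notin\cl(V\setminus\{u\})$. Computing these naively again means $n$ peelings per $v$. Instead I would store only $O(n)$ information across the whole loop. In a first pass over $u$, record for each $u$ a single bit: whether $R_u$ is minimal as judged by sizes. Concretely, keep $s(u)\coloneqq|R_u|$ in an array. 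Since $R_u\subseteq R_v$ for all $u\in R_v$, $R_v$ is minimal iff $s(u)=s(v)$ for all $u\in R_v$: equal sizes together with inclusion give $R_u=R_v$. So we run two passes. The first pass computes all sizes $s(\cdot)$ in $O(n)$ space. The second pass recomputes each $R_v$, checks the size condition in $O(n)$ time, and outputs when the test succeeds and $v=\min R_v$. Both passes fit within $O((n+m)n\tau_\omega)$ time and $O(n+m+\kappa_\omega)$ space, which establishes \thmref{main_local_system}.
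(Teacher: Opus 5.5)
Your proof is correct, and it takes a genuinely different route from the paper. Your $R_v$ is exactly the paper's closure $\rho(\{v\})$. The two facts you rely on are essentially \lemref{unique} and \lemref{closure}: $R_v$ is the least RS containing $v$, and $u\in\rho(\{v\})$ implies $\rho(\{u\})\subseteq\rho(\{v\})$.

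The paper proceeds differently. It builds the propagation digraph, takes its bottom SCCs as seeds, and runs the pointed-partition and 1-SR-digraph merging of \algref{local_minrs}. The bound then comes from an amortization showing that \algref{extrho} is invoked only $O(n)$ times over all $O(\log n)$ rounds.

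You instead run the baseline from the introduction, computing one full closure per vertex. You replace its expensive collection step by the size test $|R_u|=|R_v|$ for all $u\in R_v$, and use a second pass so that only $n$ sizes are stored. This route has three advantages:
\begin{itemize}
\item it is more elementary;
\item it needs no separate treatment of the case where $V$ is not a $\theta$-solution, whereas the paper needs $V$ to be a solution for \lemref{local};
\item it shows that the $\Tcol$ term the paper charges to the baseline is avoidable, so the baseline already meets this bound.
\end{itemize}

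What the paper's machinery buys is \thmref{in-dominating}. \algref{extrho} halts as soon as it reaches a foreign seed. Under the in-dominating seed property, the regions $C_{\MB}$ it can explore are pairwise disjoint, so a whole round costs $O((n+m)\tau_\omega)$. Your method must complete all $n$ closures, so it cannot beat $\Theta(n(n+m))$ in the worst case. For example, with $k=2$ on a cycle, $R_v=V$ for every $v$. So your route does not extend to the $O((n+m)\log n\cdot\tau_\omega)$ bound.
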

Note that whether $V$ itself is a $\theta$-solution
or not does not matter. 
To show the theorem, we examine mathematical
properties of a monotone system
$\MS=\MS(V,\MZ,\omega,\theta)$, not limited to
local monotone systems, and design an algorithm
to enumerate all MinRSs of $V$, where
we exploit two auxiliary digraphs, that is
propagation digraphs and seed-relation (SR) digraphs.
Then we analyze the computational complexity of the algorithm
when $\MS$ is local monotone with respect to
an underlying graph $G$, yielding \thmref{main_local_system}.

Next, we show that the time complexity can be
improved if a local monotone system satisfies
what we call the in-dominating seed property. 
\begin{thm}
  \label{thm:in-dominating}
  For a local monotone system $\MS=\MS(V,\MZ,\omega,\theta; G)$,
  if $\MS$ satisfies the in-dominating seed property,
  then we can enumerate all MinRSs of $V$ in
  $O((n+m)\log n\cdot\tau_\omega)$ time
  and $O(n+m+\kappa_\omega)$ space.
\end{thm}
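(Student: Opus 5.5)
We first set up the structure behind \thmref{main_local_system} and then speed up its bottleneck.

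For each $v\in V$, the \emph{closure} $\cl(v)$ is the complement of the largest $\theta$-solution contained in $V\setminus\{v\}$. By monotonicity this is the set of vertices removed by the peeling process started from $v$. Three facts follow from monotonicity:
\begin{itemize}
\item Every RS $Y$ with $v\in Y$ contains $\cl(v)$, because peeling from $V\setminus\{v\}$ can only remove vertices that must be removed anyway.
\item Every MinRS equals $\cl(v)$ for each of its members $v$. So the MinRSs are exactly the inclusion-minimal sets among $\{\cl(v)\mid v\in V\}$, and they are pairwise disjoint.
\item Call $v$ a \emph{seed} if $\cl(v)$ is a MinRS. Then $u\in\cl(v)$ implies $\cl(u)\subseteq\cl(v)$.
\end{itemize}
Hence $v$ is a seed if and only if $v\in\cl(u)$ for every $u\in\cl(v)$, i.e.\ $\cl(v)$ is a sink strongly connected component of the propagation digraph, which has an arc $v\to u$ whenever $u\in\cl(v)$. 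The baseline algorithm computes each $\cl(v)$ by peeling, at cost $O((n+m)\tau_\omega)$ per vertex, which gives the bound of \thmref{main_local_system}.

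For \thmref{in-dominating}, I would use the in-dominating seed property to avoid computing every closure in full. The idea is to process vertices in an order in which closures are discovered from the "bottom" of the seed-relation digraph. I expect the property to say, roughly, that for each non-seed $u$ some seed's closure is reachable or dominated along in-arcs. Concretely:
\begin{enumerate}
\item Run a peeling from $v$, but abort it as soon as it reaches a vertex already known to be a non-seed, or a vertex whose closure has already been found. Any vertex aborted this way is itself a non-seed, since $\cl(v)$ would then strictly contain a known closure or contain a non-seed.
\item To bound the total work, run peelings from several candidate vertices in an interleaved, smaller-first fashion. Either simulate them in parallel or always advance the peeling with the currently smallest explored size.
\item When a peeling aborts, charge its work to the smaller side. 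This is the standard "small-to-large" argument: each vertex and edge is charged $O(\log n)$ times.
\item The in-dominating seed property is what should guarantee the abort condition. Once two explorations meet, the one that is not a seed can be certified as such, and its explored part can be discarded or absorbed into a structure of at least double the size.
\end{enumerate}

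The main obstacle is precisely this charging argument: proving that each vertex or edge is rescanned only $O(\log n)$ times. I would try first to maintain a union-find-like structure of explored regions in which a vertex is re-explored only when its region at least doubles. I would then use the in-dominating property to show that a region's exploration can always be stopped at an in-neighbor that dominates it, so no work is lost except within the smaller region.

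Each evaluation of $\omega$ costs $\tau_\omega$, which gives a total of $O((n+m)\log n\cdot\tau_\omega)$. Space stays $O(n+m+\kappa_\omega)$, since only counters, labels, and the union-find structure are stored.
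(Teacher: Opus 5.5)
Your facts about $\cl(v)=\rho(\{v\})$ are correct, but the part that would prove the theorem is missing. You say yourself that the charging argument is the main obstacle, and steps 2--4 never show that a vertex or edge is scanned only $O(\log n)$ times.

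Your setup also cannot supply it. You replaced the paper's propagation digraph $H^{\propag}_\MS$ (one-step arcs $(u,v)$ with $\omega_{V\setminus\{u\}}(v)\not\geqslant\theta$) and its bottom SCCs by the digraph with arcs $v\to u$ for $u\in\cl(v)$. Your ``seeds'' are the members of MinRSs. If the in-dominating property is read in your digraph, it is vacuous. For a non-seed $v$, $\cl(v)$ is an RS, so it contains a MinRS $Y$ with $v\notin Y$, and $v$ already has an arc into $Y$. A property that every monotone system has cannot give a speedup.

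In the paper, a seed is a bottom SCC $B$ of the one-step digraph. Such a $B$ may satisfy $\rho(B)\supsetneq B$, and different seeds may have nested closures. The property says that every non-seed vertex has a \emph{single-removal} arc into a seed vertex.

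Your step 4 also fails as stated, because two peelings meeting is a symmetric event. If $\cl(v)=\cl(v')$, both start vertices are seeds. Otherwise the meeting does not tell you which one to discard. What is needed is a directed certificate such as \lemref{twominRS}(i): if some $u\in\rho(B)\setminus B$ has a one-step arc into a seed $B'$, then $\rho(B')\subseteq\rho(B)$.

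The paper's proof gets the $\log n$ from the number of rounds, not from per-vertex doubling. By \lemref{minRS_necsuf} only closures of seeds matter, so the seeds are kept in a pointed partition $\bbB$. In each round, \algref{extrho} runs a truncated peeling from a representative $B\in\Sigma(\MB)$ of each active group $\MB$. It halts at the first peeled vertex that has a one-step arc into a seed of another group $\MB'$. This gives an arc $(\MB,\MB')$ of a 1-SR digraph by \lemref{sigma}(i). Otherwise it returns $\rho(\MB)$, which is a MinRS by \lemref{sigma}(ii). Contracting weakly connected components at least halves $\mu$ by \lemref{pointed}(iv), so there are $O(\log n)$ rounds.

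The in-dominating property is used only to make one round cost $O((n+m)\tau_\omega)$. Every vertex has a seed vertex among its closed out-neighbors, so the sets $C_{\MB}$ are pairwise disjoint (\lemref{C_disjoint}). A search from $\MB$ stops at the first picked vertex outside $C_{\MB}$, which is tested in $O(1)$ via the precomputed labels $\ell(v)$. Hence the search expands only vertices of $C_{\MB}$ and scans only edges incident to them. Neither the round structure nor this disjointness appears in your proposal, so the claimed bound remains unproved.
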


Our framework can be applied to a lot of extensions of $k$-cores~\cite{MGPV.2020};
e.g., generalized cores~\cite{BZ.2002,BatZav.2011},
$(k,\ell)$-cores in digraphs~\cite{GTV.2013},
weighted $k$-cores~\cite{GSH.2012}, 
multi-layer $\bm{k}$-cores~\cite{GBG.2017}.
For these extensions, we show that all MinRSs of $V$
can be generated in $O((n+m)n)$ time
by \thmref{main_local_system},
whereas the baseline algorithm requires at least
$\Omega((n+m)n+\Tcol)$ time. 
Furthermore, for $k$-cores in undirected graphs,
we can show the following theorem, using
the in-dominating seed property and \thmref{in-dominating}. 

\begin{thm}
  \label{thm:deg-at-least-k}
  The problem \probMRS\ can be solved  
  in $O((n+m)\log n)$ time
  and $O(n+m)$ space.
\end{thm}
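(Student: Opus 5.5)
The plan is to reduce \probMRS\ to \thmref{in-dominating}. First, model $k$-cores as a local monotone system. Take $U=V$, $\MZ=(\bbZ,\le)$, $\theta=k$, and $\omega_S(u)\coloneqq|S\cap N_G(u)|$.

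\begin{itemize}
\item \emph{Monotone:} $S\subseteq S'$ implies $|S\cap N_G(u)|\le|S'\cap N_G(u)|$.
\item \emph{Local:} $\omega_S(u)$ depends only on $S\cap N_G(u)$, by definition.
\item \emph{Right solutions:} a set $S$ is a $\theta$-solution exactly when $G[S]$ has minimum degree at least $k$, i.e., when $G[S]$ is a $k$-core graph. The null graph corresponds to $S=\emptyset$, which is a $\theta$-solution by convention.
\end{itemize}

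Hence the MinRSs of $V$ in $\MS(V,\MZ,\omega,k;G)$ are exactly the minimal nonempty $Y$ such that $G-Y$ is a $k$-core graph. These are precisely the objects \probMRS\ asks for.

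Second, bound the cost of evaluating $\omega$. Evaluating $\omega_S(u)$ from scratch costs $\deg_G(u)$, which is too expensive. So I would argue that the algorithm behind \thmref{in-dominating} only needs $\omega$ in an incremental way: each vertex keeps a counter of its neighbours still in the current set, and the counter is updated in $O(1)$ whenever a vertex is added or removed. This gives $\tau_\omega=O(1)$ amortised per access and $\kappa_\omega=O(1)$ per vertex, i.e., $O(n)$ words overall. Substituting into \thmref{in-dominating} yields $O((n+m)\log n)$ time and $O(n+m)$ space. This step needs a remark that every evaluation of $\omega$ in that algorithm happens along a propagation, that is, a peeling process, where such counters are valid. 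This is the same justification used for the linear-time $k$-core algorithm~\cite{BZ.2003}.

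Third, and the main obstacle, is proving that this degree-based system satisfies the in-dominating seed property. The definition appears later in the paper, so I would unfold it there. I expect it to ask that, in the seed-relation digraph, the seeds of a MinRS can be chosen so that some seed $v$ in-dominates the others: removing $v$ alone triggers a propagation covering the removable region. For undirected degree counts, my approach would be a symmetry argument. Suppose $u$ collapses once $v$ is removed, via a chain of degree drops $v=x_0,x_1,\dots,x_t=u$. Each $x_{i+1}$ has degree exactly $k$ in the current graph and loses the neighbour $x_i$. Because edges are undirected, $x_{i+1}$'s drop below $k$ is also felt by $x_i$. From this I would argue that the propagation closures of vertices in the same minimal set are mutually reachable, or nested in the required dominating way.

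Concretely, I would first show that a MinRS $Y$ induces a connected subgraph. Next, I would show that every $y\in Y$ has $\deg_{G-(Y\setminus\{y\})}(y)<k$, in a form enabling the peeling order to start anywhere in $Y$. Finally, I would match this against the formal condition of the property. If mutual reachability fails in general, the fallback is to prove that the SR digraph restricted to $Y$ has a vertex reaching all others, using that degree losses are counted symmetrically along undirected edges.
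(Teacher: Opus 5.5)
Your first two steps match the paper: the same local monotone system $\omega_S(u)=|S\cap N_G(u)|$ with $\theta=k$, and the same degree-counter implementation giving $\tau_\omega=O(1)$. Two small corrections there: adding $u$ to $Y$ costs $O(\deg_G(u))$ counter updates, not $O(1)$, and the counters must be restored after each call of \algref{extrho}, at a cost proportional to that call's work.

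The gap is the third step. You guessed the in-dominating seed property wrongly, so your plan targets a different statement. The property is not about MinRSs or the SR digraph. It says that the set $\cup\MB^{\propag}_\MS$ of seed vertices is an in-dominating set of the propagation digraph $H^{\propag}_\MS$. That is, every vertex lying in no bottom SCC has an out-arc to a vertex lying in one. Neither connectivity of a MinRS nor the fact that each $y$ in a MinRS $Y$ violates the threshold in $G-(Y\setminus\{y\})$ bears on this.

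Your mutual-reachability idea is also false. A seed $B$ is a bottom SCC, so nothing in $\rho(B)\setminus B$ is reachable from $B$ in $H^{\propag}_\MS$. For example, take $k=2$ and two triangles $wxy$ and $zpq$ joined by the edge $wz$. The seed $\{x,y\}$ has closure $\{x,y,w\}$, yet $w$ has no incoming arc at all.

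The missing idea is to write down $H^{\propag}_\MS$ explicitly. Because $G$ is a $k$-core graph, $\deg_{G-u}(v)<k$ holds if and only if $uv\in E$ and $\deg_G(v)=k$. So the arcs are exactly the pairs $(u,v)$ with $uv\in E$ and $\deg_G(v)=k$. This has three consequences:
\begin{itemize}
\item Only degree-$k$ vertices have in-arcs. Every arc leaving a degree-$k$ vertex $u$ enters a degree-$k$ vertex $v$, and then $(v,u)$ is an arc too. Hence each degree-$k$ vertex lies in a bottom SCC.
\item A vertex of degree greater than $k$ with no degree-$k$ neighbor is isolated, hence a singleton seed.
\item A vertex of degree greater than $k$ with a degree-$k$ neighbor $v$ has the arc to $v$, and $v$ is a seed vertex.
\end{itemize}
This is exactly in-domination. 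With it, \thmref{in-dominating} and $\tau_\omega=O(1)$ give the claimed bounds.
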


For algorithmic studies on core collapse,
given an undirected graph $G$ and integers $b,x,k$,
the problem {\sc Collapsed $k$-Core} asks for whether
there is a subset $S$ of at most $b$ vertices
such that the $k$-core of $G-S$ has at most $x$ vertices. 
The problem is NP-hard~\cite{ZZQZL.2017}, 
and fixed-parameter tractability~\cite{LMS.2021} and
mathematical optimization models~\cite{CSSAL.2023} are investigated. 
The papers~\cite{HSLS.2023,LZWX.2024,LZWYX.2024,ZLWZX.2023} study problems on core collapse 
caused by edge removal. 
For example, Honcharov et al.~\cite{HSLS.2023} study 
the problem of finding a spanning subgraph with 
the minimum number of edges
such that
the core number of each vertex is the same as in the input graph.


The paper is organized as follows.
We prepare notations and terminologies in \secref{prel}.
In \secref{mmw}, 
we discuss monotone systems to
analyze their mathematical properties 
and to present an algorithm that enumerates
all MinRSs of the ground set of a given system.
Then in \secref{lms},
we explain how we achieve computational efficiency
for local monotone systems
and those satisfying
the in-dominating seed property,
along with the proofs for Theorems~\ref{thm:main_local_system}
and~\ref{thm:in-dominating}. 
We also describe how we apply the framework
to $k$-cores and the extensions, where we provide a proof for \thmref{deg-at-least-k}.
Finally, we consider the problem of 
enumerating ``all'' $\theta$-solutions in a monotone system.
For the enumeration of all $k$-core subgraphs in a given graph, our algorithm achieves a better time bound than an algorithm based on Boley et al.'s framework~\cite{BHPW.2010}. 
We conclude the paper in \secref{conc}.

\section{Preliminaries}
\label{sec:prel}

Throughout the paper,
we use $V$ to represent the vertex set of a graph
or the ground set of elements.
We use the terms ``maximal'' or ``minimal''
to indicate ``inclusionwise-maximal'' or ``inclusionwise-minimal'' if no confusion arises.
If the domain of a function $f$ is a family of subsets
(e.g., $2^V$ for a ground set $V$),  
we write $f(\{x_1,x_2,\dots,x_k\})$
as $f(x_1,x_2,\dots,x_k)$ for simplicity.

For a ground set $V$, 
we represent a subset of $V$ by italic
(e.g., $A\subseteq V$);
a family of subsets of $V$ by calligraphic
(e.g., $\MA\subseteq 2^V$); and
a collection of families of subsets of $V$ by
blackboard bold (e.g., $\bbA\subseteq 2^{2^V}$).
We use the union operator $\cup$
to represent descending the set hierarchy
(i.e., $\cup\bbA=\bigcup_{\MA\in\bbA}\MA\subseteq 2^V$,
$\cup\MA=\bigcup_{A\in\MA}A\subseteq V$). 



For a graph $G$, let $S$ be a subset of vertices. 
We may abbreviate $G[S]$ into $S$ for simplicity
when $G[S]$ appears in a subscript
if no confusion arises;
e.g., $N_{G[S]}(v)$ may be written as $N_S(v)$
for a vertex $v$.  

For an undirected graph $G=(V,E)$, let $v\in V$.
we denote by $N_G[v]=N_G(v)\cup\{v\}$
the set of closed neighbors. 
For $S\subseteq V$,
we let $N_G(S)\coloneqq\big(\bigcup_{v\in S}N_G(v)\big)\setminus S$
and $N_G[S]\coloneqq N_G(S)\cup S$.

For a directed graph (or a digraph) $D=(V,A)$,
let $v\in V$. 
We denote the set of open out-neighbors (resp., in-neighbors)
of $v$ by $N^+_D(v)$ (resp., $N^-_D(v)$).
We also denote the set of closed out-neighbors (resp., in-neighbors) of $v$
by $N^+_D[v]\coloneqq N^+_D(v)\cup\{v\}$
(resp., $N^-_D[v]\coloneqq N^-_D(v)\cup\{v\}$).
For convenience, we define $N_D(v)\coloneqq N^+_D(v)\cup N^-_D(v)$ and $N_D[v]\coloneqq N^+_D[v]\cup N^-_D[v]$.
We denote by $R_D(v)$ 
the set of vertices that are reachable from $v$.  
We regard $v\notin R_D(v)$ and
let $R_D[v]\coloneqq R_D(v)\cup\{v\}$. 
For $S\subseteq V$, we let $R_D(S)\coloneqq\big(\bigcup_{v\in S}R_D(v)\big)\setminus S$ and $R_D[S]\coloneqq R_D(S)\cup S$. 
We abbreviate a weakly connected component and a strongly connected component
into WCC and SCC, respectively. 
In the SCC decomposition, 
we call an SCC that has no outgoing arcs
a \emph{bottom SCC},
and a vertex in a bottom SCC a \emph{bottom vertex}.



Let $\MZ=(Z,\leqslant)$ be a partially ordered set (poset)
that consists of a ground set $Z$ and a partial order $\leqslant$ on $V$.
For $x,y\in Z$, we may write $y\geqslant x$ if $x\leqslant y$.
A mapping $\cl:Z\to Z$ is a \emph{closure operator}
(\emph{on $\MP$})
if it satisfies the extensivity (i.e., $x\leqslant\cl(x)$),
the monotonicity (i.e., $x\leqslant y$ $\Rightarrow$ $\cl(x)\leqslant\cl(y)$), and
the idempotence (i.e., $\cl(\cl(x))=\cl(x)$).
The element $\cl(x)$ is called the \emph{closure of $x$}. 


\section{Monotone Systems}
\label{sec:mmw}

In this section, we examine mathematical properties
of monotone systems $\MS(V,\MZ,\omega,\theta)$ in \secref{mmw_prop}
and design an algorithm to enumerate all MinRSs of $V$ in \secref{mono_alg}. 


\subsection{Mathematical Properties}
\label{sec:mmw_prop}
For a monotone system $\MS(V,\MZ,\omega,\theta)$,
let $S\subseteq V$ be a non-empty subset. 
If $S$ is not a $\theta$-solution, then there is $u\in S$
such that $\omega_S(u) \not\geqslant \theta$. 
We say that $u$ is \emph{violating $S$}. 
Also there is an RS $Y\subseteq S$ of $S$
such that $S\setminus Y$ is a $\theta$-solution;
recall that we regard $\emptyset$ as a $\theta$-solution. 
A violating element $u$ belongs to $Y$
since otherwise we would have
$\omega_{S}(u)\geqslant\omega_{S\setminus Y}(u)\geqslant\theta$, contradicting
$\omega_{S}(u)\not\geqslant\theta$.
In particular, if $Y$ is a MinRS of $S$, then
$S\setminus Y'$ is not a $\theta$-solution for any
$Y'\subsetneq Y$ and 
there is $v\in Y\setminus Y'$ such that $\omega_{S\setminus Y'}(v) \not\geqslant \theta$. 
This simple observation is used in the following discussions. 

The following lemma generalizes
the fact that, if an undirected graph $G$
is not a $k$-core graph, then
the $k$-core of $G$ is unique, or
equivalently, the MinRS of $G$ is unique,
to the context of monotone systems. 

\begin{lem}
  \label{lem:unique}
  Given a monotone system
  $\MS(V,\MZ,\omega,\theta)$, 
  let $S\subseteq V$.
  The maximal $\theta$-solution
  among subsets of $S$ is unique.   
\end{lem}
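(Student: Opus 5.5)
The plan is to show that the family of $\theta$-solutions contained in $S$ is closed under union. Uniqueness of the maximal one then follows at once.

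\textbf{Step 1 (closure under union).} Let $A,B\subseteq S$ be $\theta$-solutions and take any $u\in A\cup B$. Suppose first that $u\in A$. Then $\theta\leqslant\omega_A(u)$ because $A$ is a $\theta$-solution. Since $A\subseteq A\cup B$, monotonicity of $\omega$ gives $\omega_A(u)\leqslant\omega_{A\cup B}(u)$. Transitivity of the partial order then yields $\theta\leqslant\omega_{A\cup B}(u)$. The case $u\in B$ is symmetric. Hence $A\cup B$ is a $\theta$-solution, and it is contained in $S$.

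\textbf{Step 2 (uniqueness).} The family $\{T\subseteq S\mid T\text{ is a }\theta\text{-solution}\}$ is nonempty, because $\emptyset$ is regarded as a $\theta$-solution. It is also finite, since $V$ is finite. Let $T^\ast$ be the union of all its members. Applying Step 1 repeatedly (finitely many times) shows that $T^\ast$ is itself a $\theta$-solution contained in $S$. Every $\theta$-solution $T\subseteq S$ satisfies $T\subseteq T^\ast$, so $T^\ast$ is the maximum of the family.

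Now let $M$ be any maximal $\theta$-solution among subsets of $S$. Then $M\subseteq T^\ast$, and maximality of $M$ forces $M=T^\ast$. Therefore the maximal $\theta$-solution is unique.

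\textbf{Main obstacle.} There is essentially none. The only care needed is in Step 1: use that $\omega$ is monotone in its set argument, with $u$ fixed, and that $\leqslant$ is transitive. The argument needs no totality of $\MZ$.

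A side remark useful later in the paper: $S\setminus T^\ast$ is then the unique MinRS of $S$ when $S$ is not a $\theta$-solution.
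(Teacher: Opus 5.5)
Your proof is correct. It rests on the same mechanism as the paper's proof but is organized differently.

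\textbf{The paper's route.} It argues by contradiction in the language of removable sets. It assumes $S$ is not a $\theta$-solution and has two distinct MinRSs $Y_1,Y_2$. The minimality of $Y_1$ shows that $S\setminus(Y_1\cap Y_2)$ is not a $\theta$-solution, and a violating element $v$ of it must lie in $Y_1\setminus Y_2$. This gives the contradiction $\omega_{S\setminus(Y_1\cap Y_2)}(v)\geqslant\omega_{S\setminus Y_2}(v)\geqslant\theta$. Since $S\setminus(Y_1\cap Y_2)=(S\setminus Y_1)\cup(S\setminus Y_2)$, this is your Step 1 applied to the two competing maximal solutions.

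\textbf{What your direct route buys.}
\begin{itemize}
\item It exhibits the maximum explicitly as the union of all $\theta$-solutions inside $S$.
\item It needs neither the case split on whether $S$ is a $\theta$-solution nor the violating element $u\in Y_1\cap Y_2$.
\item It works for arbitrary unions, so finiteness is not actually needed: any $u\in T^\ast$ lies in some member $T$, and $\theta\leqslant\omega_T(u)\leqslant\omega_{T^\ast}(u)$.
\item It makes \lemref{closure} almost immediate. $V\setminus\rho(C)$ becomes the union of all $\theta$-solutions disjoint from $C$, which can only shrink as $C$ grows, so monotonicity of $\rho$ follows at once.
\end{itemize}

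\textbf{What the paper's form buys.} Its contradiction argument fits the MinRS vocabulary used right afterwards. It is also reused nearly verbatim in the paper's proof of \lemref{closure}.

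Your closing remark, that $S\setminus T^\ast$ is the unique MinRS when $S$ is not a $\theta$-solution, is correct. It is exactly the form in which the paper applies the lemma.
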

\begin{proof}
  If $S$ itself is a $\theta$-solution, then we are done.
  Otherwise, suppose that there are two different MinRSs of $S$, 
  say $Y_1,Y_2\subseteq S$, for contradiction.
  There is $u\in S$ such that
  $\omega_S(u)\not\geqslant\theta$ since $S$ is not a $\theta$-solution.
  Both $Y_1$ and $Y_2$ contain such $u$ 
  and hence $u\in Y_1\cap Y_2$ holds.
  By the minimality,
  neither $Y_1\setminus Y_2$ nor $Y_2\setminus Y_1$ is empty.
  The subset $S\setminus (Y_1\cap Y_2)$
  is not a $\theta$-solution,
  and there is an element $v\in Y_1\setminus Y_2$ such that
  $\omega_{S\setminus(Y_1\cap Y_2)}(v) \not\geqslant \theta$.
  However, this would yield
  $\omega_{S\setminus(Y_1\cap Y_2)}(v)\geqslant\omega_{S\setminus Y_2}(v)\geqslant\theta$, 
  a contradiction.  
\end{proof}

For $C\subseteq V$, let $Y$ denote the unique superset of $C$
such that $V\setminus Y$ is the maximal $\theta$-solution among subsets of $V\setminus C$.
The uniqueness of such $Y$ is guaranteed by \lemref{unique}.
We define a function $\rho:2^V\to2^V$ to be $\rho(C)\triangleq Y$.
In other words, when $C\ne\emptyset$, $\rho(C)$ is the unique RS of $V$
that is minimal among supersets of $C$.
Note that $\rho(C)$ is an RS of $V$, but not necessarily a MinRS of $V$.

\begin{lem}
  \label{lem:closure}
  Given a monotone system
  $\MS(V,\MZ,\omega,\theta)$, the function $\rho$ is a closure operator on the poset
  $(2^V,\subseteq)$. 
\end{lem}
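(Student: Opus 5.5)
We check the three axioms of a closure operator directly. The only tool we need is a reformulation of $\rho$ that follows from \lemref{unique}. For $C\subseteq V$, write $M(C)$ for the unique maximal $\theta$-solution among subsets of $V\setminus C$. Then $\rho(C)=V\setminus M(C)$.

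We first record that $M(C)$ is the largest $\theta$-solution contained in $V\setminus C$: every $\theta$-solution $T\subseteq V\setminus C$ satisfies $T\subseteq M(C)$. This is where the real work lies, since \lemref{unique} only speaks of a unique maximal element, not a greatest one. To upgrade, we show that the union of two $\theta$-solutions $T_1,T_2$ is again a $\theta$-solution. For $u\in T_1\cup T_2$, say $u\in T_i$, monotonicity gives $\omega_{T_1\cup T_2}(u)\geqslant\omega_{T_i}(u)\geqslant\theta$. Hence for any $\theta$-solution $T\subseteq V\setminus C$, the set $T\cup M(C)$ is a $\theta$-solution inside $V\setminus C$. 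Maximality of $M(C)$ then forces $T\subseteq M(C)$. If we prefer to lean only on \lemref{unique}, the same union argument shows that any two maximal $\theta$-solutions must coincide, so it is consistent with the lemma.

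The three axioms now follow.

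\emph{Extensivity.} $M(C)\subseteq V\setminus C$, so $C\subseteq V\setminus M(C)=\rho(C)$.

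\emph{Monotonicity.} Suppose $C\subseteq C'$. Then $M(C')\subseteq V\setminus C'\subseteq V\setminus C$, and $M(C')$ is a $\theta$-solution. By the greatest-element property, $M(C')\subseteq M(C)$, so $\rho(C)\subseteq\rho(C')$.

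\emph{Idempotence.} We have $V\setminus\rho(C)=M(C)$. The largest $\theta$-solution inside $M(C)$ is $M(C)$ itself, since $M(C)$ is a $\theta$-solution. Hence $M(\rho(C))=M(C)$ and $\rho(\rho(C))=\rho(C)$.

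The main obstacle is the greatest-element step, that is, closure of $\theta$-solutions under union. Everything else is bookkeeping. We also note that the empty set is a $\theta$-solution by convention, so $M(C)$ always exists.
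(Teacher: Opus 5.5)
Your proof is correct, but it follows a different route from the paper's.

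The paper treats extensivity and idempotence as immediate and proves monotonicity by cases: $\rho(C_1)=V$; $C_2=\emptyset$, where it uses that every RS of $V$ contains the unique MinRS; and the general case, argued by contradiction. In that last case, if $\rho(C_2)\not\subseteq\rho(C_1)$, then $V\setminus(\rho(C_1)\cap\rho(C_2))$ is a subset of $V\setminus C_2$ strictly larger than $V\setminus\rho(C_2)$, so it is not a $\theta$-solution. Its violating element must lie in $\rho(C_2)\setminus\rho(C_1)$, and it contradicts monotonicity of $\omega$ against the $\theta$-solution $V\setminus\rho(C_1)$.

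Since $V\setminus(\rho(C_1)\cap\rho(C_2))=(V\setminus\rho(C_1))\cup(V\setminus\rho(C_2))$, that contradiction is exactly your union-closure fact, applied to two particular $\theta$-solutions. By isolating it once, you show that $V\setminus\rho(C)$ is the \emph{greatest}, not merely maximal, $\theta$-solution inside $V\setminus C$. Equivalently, $\rho(C)$ is the smallest set containing $C$ in the intersection-closed family of complements of $\theta$-solutions.

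What your route buys:
\begin{itemize}
\item all three axioms become one-line consequences with no case split;
\item idempotence gets an explicit argument instead of being asserted;
\item \lemref{unique} becomes a corollary rather than a prerequisite.
\end{itemize}

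What the paper's route buys: it stays within the violating-element reasoning it already uses for \lemref{unique} and for the iterative procedure of \algref{rho}, so it introduces no new auxiliary fact.
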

\begin{proof}
  It is immediate from the definition that $\rho$ satisfies
  the extensivity (i.e., $C\subseteq\rho(C)$) and
  the idempotence (i.e., $\rho(\rho(C))=\rho(C)$).
  For the monotonicity,
  let $C_1,C_2\subseteq V$. 
  We show that $\rho(C_2)\subseteq\rho(C_1)$ holds if $C_2\subseteq C_1$. 
  Note that $C_2\subseteq\rho(C_1)$ holds by the extensivity.

  If $\rho(C_1)=V$, then 
  we have $\rho(C_1)=V\supseteq\rho(C_2)$. 
  Suppose $\rho(C_1)\subsetneq V$ and $C_2=\emptyset$.
  If $V$ is a $\theta$-solution, then $\rho(C_2)=\emptyset\subseteq\rho(C_1)$ holds. Otherwise, $\rho(C_2)=\rho(\emptyset)$ is the unique MinRS of $V$.
  Any RS of $V$ should contain $\rho(C_2)$ as a subset, and hence $\rho(C_2)\subseteq\rho(C_1)$. 

  Let us consider the remaining case;
  suppose $\rho(C_1)\subsetneq V$ and $C_2\ne\emptyset$.
  By $C_2\subseteq\rho(C_1)$ and $C_2\subseteq\rho(C_2)$,
  we have $C_2\subseteq\rho(C_1)\cap\rho(C_2)$, implying that
  $\rho(C_1)\cap\rho(C_2)$ is not empty. 
  Suppose $\rho(C_2)\not\subseteq\rho(C_1)$ 
  for contradiction.
  For any $C_2\subseteq R\subsetneq\rho(C_2)$, $V\setminus R$ is not a $\theta$-solution. 
  Then $V\setminus(\rho(C_1)\cap\rho(C_2))$ is not a $\theta$-solution
  since $C_2\subseteq \rho(C_1)\cap\rho(C_2)\subsetneq\rho(C_2)$,
  where the last inclusion relationship holds by 
  $\rho(C_2)\setminus\rho(C_1)\ne\emptyset$. 
  There is an element $v\in\rho(C_2)\setminus\rho(C_1)$
  such that $\omega_{V\setminus(\rho(C_1)\cap\rho(C_2))}(v)\not \geqslant \theta$.
  However, we would have $\omega_{V\setminus(\rho(C_1)\cap\rho(C_2))}(v)\geqslant\omega_{V\setminus\rho(C_1)}(v)\geqslant \theta$, a contradiction. 
\end{proof}

We can compute the closure $\rho(C)$ of $C$
by a simple iterative procedure in \algref{rho}. 
This is regarded as a generalization of the algorithm
that computes the $k$-core of a given graph.

\begin{algorithm}[t!]
  \caption{An algorithm to compute the closure $\rho(C)$ of a subset $C\subseteq V$}
  \label{alg:rho}
  \Input{A subset $C\subseteq V$ in a monotone system $\MS(V,\MZ,\omega,\theta)$}
  \Output{The closure $\rho(C)$ of $C$}
  $Y\gets C$\;
  \lWhile{there is $u\in V\setminus Y$ such that $\omega_{V\setminus Y}(u)\not\geqslant \theta$}
    {
      $Y\gets Y\cup\{u\}$
    }
  output $Y$ as $\rho(C)$\;
\end{algorithm}

\paragraph{Propagation Digraphs.}
To understand the structure of RSs of $V$ in
a monotone system $\MS\coloneqq\MS(V,\MZ,\omega,\theta)$,
we introduce the \emph{propagation digraph} $H^{\propag}_\MS=(V,A^{\propag}_\MS)$ that consists of the vertex set $V$
and the arc set 
$A^{\propag}_\MS\triangleq\{(u,v)\in V\times V\mid \omega_{V\setminus\{u\}}(v)\not\geqslant \theta\}$.
The digraph shows the propagation of
weights falling below the threshold $\theta$
when one element is removed from $V$. 
Any path from a vertex $u$ to other vertex 
represents that, if $u$ is removed from $V$,
then all the other vertices on the path should be removed
to obtain a $\theta$-solution,
that is,
any RS of $V$ that contains $u$ must contain
all vertices on the path,
whereas they may not be sufficient.  
Thus, for the purpose of enumerating MinRSs of $V$,
bottom SCCs in the digraph must be important
since, for every vertex $u$,
there is a bottom SCC which is reachable from $u$.
Let $u\in V$ and $C\subseteq V$.
For notational convenience, 
let us write $R_{H^{\propag}_{\MS}}[u]$
(resp., $R_{H^{\propag}_{\MS}}[C]$; 
$N^+_{H^{\propag}_\MS}(u)$; and $N^+_{H^{\propag}_\MS}[u]$)
as $R^{\propag}_{\MS}[u]$
(resp., $R^{\propag}_{\MS}[C]$; $N^{\propag+}_\MS(u)$; and $N^{\propag+}_\MS[u]$).


We denote by $\MB^{\propag}_\MS\subseteq 2^V$ the family of bottom SCCs
in the propagation digraph $H^{\propag}_\MS$. 
For $B\in\MB^{\propag}_\MS$, if $|B|=1$, say $B=\{u\}$, then
$\{u\}$ is a MinRS of $V$
since its out-degree is zero and hence
$\omega_{V\setminus\{u\}}(v)\geqslant\theta$ holds for all $v\in V\setminus\{u\}$,
indicating that $V\setminus\{u\}$ is a $\theta$-solution.
If $|B|\ge2$, then $B$ is not necessarily an RS;
for $B=\{u_1,u_2,\dots,u_k\}$, 
it is possible that $\omega_{V\setminus B}(v)\not\geqslant \theta$ holds for $v\in V\setminus B$
even when $\omega_{V\setminus\{u_i\}}(v)\geqslant\theta$, $i\in[1,k]$.
If $V$ is not a $\theta$-solution,
then the bottom SCC is unique since
each violating vertex has an incoming edge
from any other vertex. 


Using bottom SCCs, 
we have the following characterization of MinRSs. 

\begin{lem}
  \label{lem:minRS_necsuf}
  Suppose that we are given
  a monotone system $\MS=\MS(V,\MZ,\omega,\theta)$.
  A subset $Y\subseteq V$
  is a MinRS of $V$
  if and only if
  {\rm(i)} there is a bottom SCC $B\in\MB^{\propag}_\MS$
  such that $Y=\rho(B)$; and
  {\rm(ii)} for any bottom SCC $B'\in\MB^{\propag}_\MS$,
  $Y\ne\rho(B')$ implies $Y\not\supseteq\rho(B')$. 
\end{lem}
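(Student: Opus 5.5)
The plan is to reduce the statement to one structural fact. Every RS of $V$ contains $\rho(B)$ for some bottom SCC $B\in\MB^{\propag}_\MS$. Each $\rho(B)$ is itself an RS of $V$. Once both hold, the MinRSs of $V$ are exactly the inclusion-minimal members of the family $\{\rho(B)\mid B\in\MB^{\propag}_\MS\}$, and conditions (i) and (ii) are just a restatement of such minimality.

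\emph{Step 1: an RS is closed under reachability in $H^{\propag}_\MS$.} Let $Y$ be an RS of $V$, let $u\in Y$, and let $(u,v)\in A^{\propag}_\MS$. Suppose $v\notin Y$. Since $V\setminus Y\subseteq V\setminus\{u\}$, monotonicity gives $\omega_{V\setminus Y}(v)\leqslant\omega_{V\setminus\{u\}}(v)$. If $\omega_{V\setminus Y}(v)\geqslant\theta$ held, transitivity would give $\omega_{V\setminus\{u\}}(v)\geqslant\theta$, contradicting the arc. Hence $\omega_{V\setminus Y}(v)\not\geqslant\theta$, which contradicts the assumption that $V\setminus Y$ is a $\theta$-solution containing $v$. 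So $v\in Y$, and by induction $R^{\propag}_\MS[u]\subseteq Y$. From any $u\in Y$ some bottom SCC $B$ is reachable, so $B\subseteq Y$.

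\emph{Step 2: the closure facts.} For a nonempty RS $Y$ we have $\rho(Y)=Y$, because $V\setminus Y$ is already a $\theta$-solution and is therefore the maximal one inside itself. Then the monotonicity of $\rho$ (\lemref{closure}) applied to $B\subseteq Y$ gives $\rho(B)\subseteq\rho(Y)=Y$. Conversely, for any nonempty $C$ (in particular any bottom SCC), $\rho(C)\supseteq C$ is nonempty and $V\setminus\rho(C)$ is a $\theta$-solution by definition, so $\rho(C)$ is an RS of $V$.

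\emph{Step 3: the two directions.}
\begin{itemize}
\item ($\Rightarrow$) Let $Y$ be a MinRS. Step 2 gives a bottom SCC $B$ with $\rho(B)\subseteq Y$, and $\rho(B)$ is an RS, so minimality forces $Y=\rho(B)$; this is (i). If $Y\supseteq\rho(B')$ for another bottom SCC $B'$, the same minimality argument gives $Y=\rho(B')$, which is (ii).
\item ($\Leftarrow$) Let $Y=\rho(B)$ satisfy (i) and (ii). Then $Y$ is an RS by Step 2. Take any RS $Y'\subseteq Y$. By Steps 1--2 there is a bottom SCC $B'$ with $\rho(B')\subseteq Y'\subseteq Y$. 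Condition (ii) then forces $Y=\rho(B')$, so $Y'=Y$, and $Y$ is minimal.
\end{itemize}

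The main obstacle is Step 1, namely justifying that arcs of the propagation digraph are forced for \emph{every} RS and not only for $\rho(\{u\})$. The monotonicity argument sketched above handles this. The rest consists of routine uses of \lemref{closure} and the definition of $\rho$.
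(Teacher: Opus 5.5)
Your proof is correct and follows essentially the same route as the paper: every RS $Y$ contains a bottom SCC $B$ reachable from it in $H^{\propag}_\MS$, so $\rho(B)\subseteq\rho(Y)=Y$, and minimality then gives both directions. The only differences are that you explicitly justify, via monotonicity of $\omega$, why an RS is closed under propagation arcs (the paper asserts this without proof), and in the sufficiency direction you argue with an arbitrary RS $Y'\subseteq Y$ instead of applying the necessity part to a MinRS strictly inside $\rho(B)$.
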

\begin{proof}
  For the necessity,
  $Y$ must contain all bottom SCCs $B\in\MB^{\propag}_\MS$
  that are reachable from $Y$,
  and such $B$ always exists. 
  Then $Y=\rho(Y)\supseteq\rho(B)$ holds.
  The closure $\rho(B)$ is an RS of $V$
  and, by the minimality of $Y$,
  $Y=\rho(B)$ holds, showing (i).
  (ii) is immediate.

  For the sufficiency, we have $Y=\rho(B)$ by (i)
  and $\rho(B)$ is an RS of $V$ by the definition of $\rho$.
  If $\rho(B)$ is not a MinRS of $V$,
  then there is a MinRS $Y'\subsetneq\rho(B)$ that is a proper subset of $\rho(B)$.
  By the necessity of MinRS,
  there would be $B'\in\MB^{\propag}_\MS$ such that
  $\rho(B')=Y'$,
  yielding $\rho(B')=Y'\subsetneq\rho(B)=Y$,
  contradicting (ii). 
\end{proof}

\invis{  
  \begin{itemize}
  \item[\rm (i)]
    For any MinRS $Y$ of $V$,
    there is a bottom SCC $B\in\MB^{\propag}_\MS$
    of $H^{\propag}_{\MS}$ such that $Y=\rho(B)$. 
  \item[\rm (ii)] Let $B\in\MB^{\propag}_\MS$ be a bottom SCC
    of $H^{\propag}_{\MS}$.
    The closure $\rho(B)$ is a MinRS of $V$
    if and only if
    no other bottom SCC $B'\in\MB^{\propag}_\MS$ $(B'\ne B)$
    satisfies $\rho(B')\subsetneq\rho(B)$. 
  \end{itemize}
\begin{proof}
  (i) The MinRS $Y$ must contain
  all bottom SCCs in $H^{\propag}_\MS$
  that are reachable from a vertex in $Y$. That is,
  $Y\supseteq B$ holds for all $B\in\MB^{\propag}_\MS$.
  Then $Y=\rho(Y)\supseteq\rho(B)$ holds.
  The closure $\rho(B)$ is an RS of $V$
  and, by the minimality of $Y$,
  $Y=\rho(B)$ holds.   

  \noindent (ii)
  By the definition, $\rho(B)$ is an RS.
  For the contraposition,
  if $\rho(B)$ is not a MinRS,
  then there is a MinRS $Y\subsetneq\rho(B)$ that is a proper subset of $\rho(B)$.
  By (i), there is $B''\in\MB^{\propag}_\MS$ such that $B''\subseteq Y$,
  yielding $\rho(B'')\subseteq\rho(Y)=Y\subsetneq\rho(B)$.  
\end{proof}
}

We see that any MinRS of $V$ can be generated
by computing the closure $\rho(B)$ of some bottom SCC
$B\in\MB^{\propag}_\MS$. 
Thus we call a bottom SCC in the propagation graph
a \emph{seed} and vertices in a seed \emph{seed vertices}.
We can enumerate all MinRSs of $V$ as follows:
(1) Construct the propagation digraph $H^{\propag}_\MS$.
(2) Compute the seeds, that is, the bottom SCCs of $H^{\propag}_\MS$.
(3) For each seed $B\in\MB^{\propag}_\MS$, compute $\rho(B)$ by \algref{rho}.
(4) Collect the minimal subsets in
$\{\rho(B)\mid B\in\MB^{\propag}_\MS\}$;
they are MinRSs of $V$ by \lemref{minRS_necsuf}.
However, by this approach,
it must be hard to achieve a better time bound
than the baseline algorithm mentioned in \secref{intro}
since the number of seeds
is $|\MB^{\propag}_\MS|=O(n)$. 


\subsection{Algorithms to Enumerate MinRSs}
\label{sec:mono_alg}
Again, let $\MS\coloneqq\MS(V,\MZ,\omega,\theta)$
be a monotone system. 
In this subsection, we describe an alternative algorithm
to enumerate all MinRSs of $V$.
As shown in the next section, 
this algorithm achieves a better time bound than
the baseline algorithm when $\MS$ is local monotone.


In the proposed algorithm,
we use a partition of the seed set $\MB^{\propag}_\MS$
to maintain candidates of seeds
whose closures are MinRSs of $V$. 
For a subset $\MB\subseteq\MB^{\propag}_\MS$ of seeds,
we define $\Sigma(\MB)\coloneqq\{B^\ast\in\MB\mid\forall B\in\MB,\ \rho(B^\ast)\subseteq\rho(B)\}$. 
We call $\MB$ \emph{pointed} if $\Sigma(\MB)\ne\emptyset$.
When $\MB$ is pointed,
$\rho(B^\ast)=\rho(B^{\ast\ast})$ holds for any $B^\ast,B^{\ast\ast}\in\Sigma(\MB)$,
and we write $\rho(B^\ast)$ or $\rho(B^{\ast\ast})$
as $\rho(\MB)$ for simplicity. 
Let $\bbB$ be a partition of the seed set $\MB^{\propag}_\MS$.
We call $\bbB$ \emph{pointed} if every $\MB\in\bbB$ is pointed. 
\invis{
  for every $i\in[1,q]$,
  there is a seed $B^\ast_i\in\MB_i$ that satisfies
  $\rho(B^\ast_i)\subseteq\rho(B)$ for all $B\in\MB_i$. 
  For a pointed partition $\bbB=\{\MB_1,\MB_2,\dots,\MB_q\}$,
  we define $\Sigma_\bbB(\MB_i)\coloneqq\{B^\ast_i\in\MB_i\mid\forall B\in\MB_i,\ \rho(B^\ast_i)\subseteq\rho(B)\}$, $i\in[1,q]$. 
  It is clear that $\Sigma_\bbB(\MB_i)$ is not empty and
  $\rho(B^\ast_i)=\rho(B^{\ast\ast}_i)$ holds for
  $B^\ast_i,B^{\ast\ast}_i\in\Sigma_\bbB(\MB_i)$.
}
For example, the partition $\bbB_\bot\coloneqq\bigcup_{B\in\MB^{\propag}_\MS}\{\{B\}\}$ 
that consists of singletons of seeds is pointed. 
The following lemma motivates us
to utilize a pointed partition of
the seed set $\MB^{\propag}_\MS$. 

\begin{lem}
  \label{lem:minrs_pointed}
  Suppose that we are given
  a monotone system $\MS=\MS(V,\MZ,\omega,\theta)$.
  Let $\bbB$ 
  be a pointed partition of $\MB^{\propag}_\MS$.
  For any MinRS $Y$ of $V$,
  there exists $\MB\in\bbB$ such that $Y=\rho(\MB)$. 
\end{lem}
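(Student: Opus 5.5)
The plan is to combine \lemref{minRS_necsuf} with the definition of a pointed part. Fix a MinRS $Y$ of $V$. By condition (i) of \lemref{minRS_necsuf}, there is a seed $B\in\MB^{\propag}_\MS$ with $Y=\rho(B)$. Since $\bbB$ is a partition of $\MB^{\propag}_\MS$, exactly one part $\MB\in\bbB$ contains $B$. This is the part we will show satisfies $Y=\rho(\MB)$.

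Because $\bbB$ is pointed, $\Sigma(\MB)\ne\emptyset$, so we may pick $B^\ast\in\Sigma(\MB)$. By the definition of $\Sigma$, we have $\rho(B^\ast)\subseteq\rho(B')$ for every $B'\in\MB$. Applying this to $B'=B$ gives $\rho(B^\ast)\subseteq\rho(B)=Y$.

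It remains to exclude a proper inclusion. There are two ways to do this.
\begin{itemize}
\item Via condition (ii) of \lemref{minRS_necsuf}, applied to the seed $B^\ast$: if $\rho(B^\ast)\ne Y$, then $Y\not\supseteq\rho(B^\ast)$. This contradicts the inclusion just obtained, so $\rho(B^\ast)=Y$.
\item Directly: $\rho(B^\ast)$ is nonempty because it contains the seed $B^\ast$, and it is an RS of $V$ by the definition of $\rho$. A proper inclusion in $Y$ would therefore contradict the minimality of $Y$.
\end{itemize}
Since $\rho(\MB)$ is by definition $\rho(B^\ast)$ for any $B^\ast\in\Sigma(\MB)$, and all such choices give the same closure, we conclude $Y=\rho(\MB)$.

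I do not expect a real obstacle here. The only care needed is to confirm that $\rho(B^\ast)$ is a genuine nonempty RS, or equivalently to invoke (ii) correctly. The case $B^\ast=B$ is trivial and is covered by the same argument.
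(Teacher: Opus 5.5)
Your proposal is correct and follows the paper's proof essentially verbatim: obtain a seed $B$ with $\rho(B)=Y$ from \lemref{minRS_necsuf}, take the part $\MB\ni B$, pick $B^\ast\in\Sigma(\MB)$ to get $\rho(\MB)=\rho(B^\ast)\subseteq\rho(B)=Y$, and conclude equality from the minimality of $Y$. Your extra remark that $\rho(B^\ast)$ is a nonempty RS, so that minimality applies, makes explicit a detail the paper leaves implicit.
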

\begin{proof}
  By \lemref{minRS_necsuf},
  there is a seed $B\in\MB^{\propag}_\MS$ such that
  $\rho(B)=Y$.
  Let $\MB\in\bbB$ denote the subset in $\bbB$
  such that $B\in\MB$. The partition $\bbB$ is pointed,
  and hence $\MB$ is pointed.
  Any $B^\ast\in\Sigma(\MB)$ satisfies
  $\rho(\MB)=\rho(B^\ast)\subseteq\rho(B)=Y$,
  where the inclusion relationship
  must hold by equality due to the minimality of $Y$. 
\end{proof}

Informally, the proposed algorithm enumerates
MinRSs of $V$ as follows;
starting from the trivial pointed partition $\bbB=\bbB_\bot$,
we repeat updating $\bbB$ by
$\bbB\gets(\bbB\setminus\{\MB,\MB'\})\cup\{\MB\cup\MB'\}$
for $\MB,\MB'\in\bbB$
such that $\rho(\MB)\subseteq\rho(\MB')$
(resp., $\rho(\MB')\subseteq\rho(\MB)$) holds,
where $\MB\cup\MB'$ is pointed in such a case
since $\Sigma(\MB\cup\MB')\supseteq\Sigma(\MB)\ne\emptyset$
(resp., $\Sigma(\MB\cup\MB')\supseteq\Sigma(\MB')\ne\emptyset$).
Finally we obtain a pointed partition $\bbB$
such that, for any $\MB\in\bbB$,
no $\MB'\in\bbB$ $(\MB'\ne\MB)$ satisfies
$\rho(\MB)\supseteq\rho(\MB')$. 
We can show that there is a one-to-one correspondence
between $\bbB$ and the MinRSs of $V$
(i.e., for each MinRS $Y$,
there is a unique $\MB\in\bbB$ such that $\rho(\MB)=Y$),
by which we are done. 

\paragraph{SR Digraphs and 1-SR Digraphs.}
Maintaining
a pointed partition $\bbB=\{\MB_i\}_{i=1}^q$,
the algorithm runs with another auxiliary digraph. 
Let $F^\ast[\bbB]\coloneqq\{(\MB_i,\MB_j)\in\bbB\times\bbB\mid \rho(\MB_i)\supseteq\rho(\MB_j),\ i\ne j\}$.
We call the digraph $H^\ast_\bbB=(\bbB,F^\ast[\bbB])$
the \emph{seed-relation} (\emph{SR}) \emph{digraph of $\bbB$}.
We call a subset $\MB\in\bbB$ of seeds 
a \emph{seed-subset node} in the context of an SR digraph.  
Let $H=(\bbB,F)$ be a spanning subgraph of $H^\ast_\bbB$.
We call $H$ a \emph{one-degree SR} (\emph{1-SR}) \emph{digraph of $\bbB$}
if $\deg^+_H(\MB)\le 1$ holds for every $\MB\in\bbB$; and 
$\deg^+_H(\MB)=1$ holds whenever $\deg^+_{H^\ast_\bbB}(\MB)\ge1$.
In other words, $H$ is obtained by picking up precisely
one outgoing arc for each seed-subset node in $H^\ast_\bbB$
if one exists. 
Observe that every WCC of a 1-SR digraph is
either an in-tree (i.e., an arborescence such that all arcs are reversed)
or a functional graph (i.e., all out-degrees are one). 
For an in-tree WCC, we call the unique node with out-degree zero
the \emph{root} of the WCC.
An example of a 1-SR digraph is shown in \figref{1-SR}. 

\begin{figure}[t!]
  \centering
  \resizebox{\textwidth}{!}{%
    \begin{tikzpicture}[
  >=Latex,
  node distance=1.4cm and 2.0cm,
  box/.style={draw, minimum width=8mm, minimum height=8mm, inner sep=0pt},
  lbl/.style={font=\large},
  wlbl/.style={font=\large},
  flow/.style={->, shorten <=2pt, shorten >=2pt}
]

\node[box] (b11) at (0,0) {};
\node[box] (b12) at (3,0) {};
\node[lbl, above=2mm of b11] {$\{B_{1,1}\}$};
\node[lbl, above=2mm of b12] {$\{B_{1,2}\}$};
\draw[flow] (b11.north east) to[out=30,in=150] (b12.north west);
\draw[flow] (b12.south west) to[out=210,in=330] (b11.south east);
\node[wlbl, text=blue!70!black] at (1.5,-1.0) {$\mathbb{B}_1=\{\{B_{1,1}\},\{B_{1,2}\}\}$};

\node[box] (b13) at (5.0,0) {};
\node[box] (b14) at (7.5,0) {};
\node[lbl, above=2mm of b13] {$\{B_{1,3}\}$};
\node[lbl, above=2mm of b14] {$\{B_{1,4}\}$};
\draw[flow] (b13.north east) to[out=30,in=150] (b14.north west);
\draw[flow] (b14.south west) to[out=210,in=330] (b13.south east);
\node[wlbl, text=blue!70!black] at (6.25,-1.0) {$\mathbb{B}_2=\{\{B_{1,3}\},\{B_{1,4}\}\}$};

\node[box] (b31) at (9.5,0.3) {};
\node[box] (b32) at (11.6,0.3) {};
\node[lbl, above=2mm of b31] {$\{B_{3,1}\}$};
\node[lbl, above=2mm of b32] {$\{B_{3,2}\}$};
\draw[flow] (b31.north east) to[out=25,in=155] (b32.north west);
\draw[flow] (b32.south west) to[out=205,in=335] (b31.south east);
\node[wlbl, text=blue!70!black] at (10.55,-0.85) {$\mathbb{B}_4=\{\{B_{3,1}\},\{B_{3,2}\}\}$};

\node[box] (b33) at (9.5,-3.2) {};
\node[box] (b34) at (11.6,-3.2) {};
\node[lbl, above=2mm of b33] {$\{B_{3,3}\}$};
\node[lbl, above=2mm of b34] {$\{B_{3,4}\}$};
\draw[flow] (b33.north east) to[out=25,in=155] (b34.north west);
\draw[flow] (b34.south west) to[out=205,in=335] (b33.south east);
\node[wlbl, text=blue!70!black] at (10.55,-4.4) {$\mathbb{B}_5=\{\{B_{3,3}\},\{B_{3,4}\}\}$};

\node[box] (b21) at (3.3,-2.8) {};
\node[box] (b22) at (4.8,-4.7) {};
\node[box] (b23) at (2.2,-4.7) {};
\node[lbl, above=1mm of b21] {$\{B_{2,1}\}$};
\node[lbl, right=1mm of b22] {$\{B_{2,2}\}$};
\node[lbl, left=2mm of b23] {$\{B_{2,3}\}$};
\draw[flow] (b21) -- (b22);
\draw[flow] (b23.north east) to[out=25,in=155] (b22.north west);
\draw[flow] (b22.south west) to[out=205,in=335] (b23.south east);
\node[wlbl, text=blue!70!black] at (3.5,-5.8) {$\mathbb{B}_3=\{\{B_{2,1}\},\{B_{2,2}\},\{B_{2,3}\}\}$};

\node[box] (b41) at (14.0,0.1) {};
\node[box] (b51) at (14.0,-2.3) {};
\node[box] (b61) at (14.0,-4.6) {};
\node[lbl, right=1mm of b41] {$\{B_{4,1}\}$};
\node[lbl, right=1mm of b51] {$\{B_{4,2}\}$};
\node[lbl, right=1mm of b61] {$\{B_{5,1}\}$};
\draw[flow] (b41) -- (b51);
\draw[flow] (b51) -- (b61);
\node[wlbl, text=blue!70!black] at (14.0,-5.5) {$\mathbb{B}_6=\{\{B_{4,1}\},\{B_{4,2}\},\{B_{5,1}\}\}$};

\end{tikzpicture}
  }
  \caption{A 1-SR digraph $H=(\bbB,F)$ for $\bbB=\bbB_\bot$,
    where $\MB^{\propag}_\MS=\{B_{1,1},B_{1,2},\dots,B_{5,1}\}$
    is a set of 14 seeds. Each seed constitutes a singleton in $\bbB_\bot$
    and corresponds to a seed-subset node indicated by a square.     
    For $i\in[1,5]$, we denote by $\rho_i$ the closure of a seed $B_{i,j}$;
    e.g., $\rho_1=\rho(B_{1,1})=\dots=\rho(B_{1,4})$.
    Among $\rho_1,\dots,\rho_5$,
    we assume $\rho_2\subsetneq\rho_1$ and $\rho_5\subsetneq\rho_4$ for the inclusion relationship,
    and hence $\rho_2$, $\rho_3$ and $\rho_5$ are MinRSs by \lemref{minRS_necsuf}. 
  }
  \label{fig:1-SR}
\end{figure}

In the proposed algorithm,
we do not deal with SR digraphs explicitly
since its size is $O(n^2)$,
but construct 1-SR digraphs $H=(\bbB,F)$ repeatedly,
updating the pointed partition $\bbB$,
where the size of $H$ is $O(n)$.

Let us introduce mathematical properties of
auxiliary digraphs and describe how $\bbB$ is updated. 
For a pointed partition $\bbB$,
let $H=(\bbB,F)$ be a 1-SR digraph. 
Let us denote by $s$ the number of WCCs in $H$.
For $j\in[1,s]$, we denote by $H_j$ the $j$-th WCC
and by $\bbB_j$ the collection of seed-subset nodes in $H_j$,
that is, $H_j=H[\bbB_j]$ and
$\bbB=\bbB_1\cup\bbB_2\cup\dots\cup\bbB_s$. 
We define a partition $\bbB[F]$ of $\MB^{\propag}_\MS$ by
$\bbB[F]\triangleq\{\cup\bbB_1,\cup\bbB_2,\dots,\cup\bbB_s\}$,
where $\cup\bbB_j$ is the subset of seeds in the
seed-subset nodes in $H_j$. 
Let $z_j\coloneqq|\bbB_j|$. 
We define $\mu(H)$ to be the number of seed-subset nodes in $H$
that belong to functional WCCs, i.e.,
$\mu(H)\triangleq\sum_{j\in[1,s]:\,H_j\textrm{ is functional}} z_j$.
In \figref{1-SR}, for example, 
$\bbB_6=\{\{B_{4,1}\},\{B_{4,2}\},\{B_{5,1}\}\}$ is a set of singletons
whereas $\cup\bbB_6=\{B_{4,1},B_{4,2},B_{5,1}\}$ is the set of seeds
that appear in $\bbB_6$.
We also see that $H_1,\dots,H_5$ are functional,
that $H_6$ is an in-tree and that $\mu(H)=11$ holds. 

In the following lemma,
(ii) says that $\bbB[F]$ is a pointed partition.
The algorithm employs $\bbB[F]$ as the successor
pointed partition of $\bbB$.
(iv) is a useful property for bounding the iteration times
of the algorithm, where (iii) is used
in the proof for (iv). 
(i) will be used in later analyses.

\begin{lem}
  \label{lem:pointed}
  Suppose that we are given
  a monotone system $\MS=\MS(V,\MZ,\omega,\theta)$.
  Let $\bbB = \bbB_1 \cup \bbB_2 \cup \dots \cup \bbB_s$ be a pointed partition of $\MB^{\propag}_\MS$ and
  $H=(\bbB,F)$ be a 1-SR digraph of $\bbB$.  
  \begin{itemize} 
  \item[\rm (i)] For an in-tree WCC of $H$,
    let $\MB$ denote the root seed-subset node.
    Then $\rho(\MB)$ is a MinRS of $V$. 
  \item[\rm (ii)]
    $\bbB[F]$ is a pointed partition of $\MB^{\propag}_\MS$.
  \item[\rm (iii)]
    Let $H_i=H[\bbB_i]$ be an in-tree WCC of $H$. 
    Then $\cup\bbB_i$ has no outgoing arcs
    in the SR digraph of $\bbB[F]$.
  \item[\rm (iv)]
    Let $H'=(\bbB[F],F')$ be a 1-SR digraph of $\bbB[F]$.
    It holds that $\mu(H')\le\mu(H)/2$. 
  \end{itemize}
\end{lem}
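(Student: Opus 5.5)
The four parts can be proved in order (i), (ii), (iii), (iv), with (i) feeding into (iii) and (iii) into (iv). Throughout, an arc $(\MB,\MB')$ of $H$ is an arc of $H^\ast_\bbB$, so it records $\rho(\MB)\supseteq\rho(\MB')$. For seeds the convention is that $\rho(\MB)\subseteq\rho(B)$ for every $B\in\MB$, which holds because $\MB$ is pointed.

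For (i), let $\MB$ be the root of an in-tree WCC. Its out-degree in $H$ is zero, and by the definition of a 1-SR digraph this forces $\deg^+_{H^\ast_\bbB}(\MB)=0$. Suppose $\rho(\MB)$ were not a MinRS of $V$. Since $\rho(\MB)$ is an RS, some MinRS $Y\subsetneq\rho(\MB)$ exists. By \lemref{minrs_pointed} there is $\MB'\in\bbB$ with $\rho(\MB')=Y$. Then $\MB'\ne\MB$, because the inclusion is proper, and $(\MB,\MB')\in F^\ast[\bbB]$. This contradicts the out-degree being zero.

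For (ii), fix a WCC $H_j$ and argue by reachability. If $H_j$ is an in-tree, every node reaches the root $\MB_r$ by a directed path. Composing the inclusions along the path gives $\rho(\MB_r)\subseteq\rho(\MB)$ for every node $\MB$ of $H_j$, and hence $\rho(\MB_r)\subseteq\rho(B)$ for every seed $B\in\cup\bbB_j$. So every $B^\ast\in\Sigma(\MB_r)$ lies in $\Sigma(\cup\bbB_j)$. If $H_j$ is functional, every node reaches the unique cycle. Going around the cycle shows that all nodes on it have the same $\rho$-value. Picking any cycle node $\MB_c$, the same argument gives $\Sigma(\MB_c)\subseteq\Sigma(\cup\bbB_j)$. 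In both cases $\cup\bbB_j$ is pointed.

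For (iii), let $H_i$ be an in-tree WCC with root $\MB_r$. By the proof of (ii), $\rho(\cup\bbB_i)=\rho(\MB_r)$, and by (i) this is a MinRS. Suppose the SR digraph of $\bbB[F]$ had an arc from $\cup\bbB_i$ to some $\cup\bbB_j$ with $j\ne i$. Then $\rho(\cup\bbB_j)\subseteq\rho(\MB_r)$. Since $\rho(\cup\bbB_j)$ is an RS, minimality forces equality. Now $\rho(\cup\bbB_j)=\rho(B^\ast)$ for some $B^\ast\in\Sigma(\cup\bbB_j)$, and $B^\ast$ lies in some node $\MB'\in\bbB_j$. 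This node satisfies $\rho(\MB_r)=\rho(B^\ast)\supseteq\rho(\MB')$: the inclusion $\rho(\MB')\subseteq\rho(B^\ast)$ holds because $\MB'$ is pointed, and the equality is the one just derived. Since $\MB'\ne\MB_r$ (they lie in different WCCs), $(\MB_r,\MB')\in F^\ast[\bbB]$. This contradicts $\deg^+_{H^\ast_\bbB}(\MB_r)=0$. I expect this to be the main obstacle, because it requires passing carefully between the pointed structures at the two levels, $\bbB$ and $\bbB[F]$.

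For (iv), by (iii) each node of $H'$ coming from an in-tree WCC of $H$ has out-degree zero in $H'$. Such a node therefore cannot lie in a functional WCC of $H'$, since in a functional WCC all out-degrees are one. Hence the nodes of functional WCCs of $H'$ are among the sets $\cup\bbB_j$ with $H_j$ functional, one node per such WCC. Each functional WCC of $H$ contains a cycle of length at least $2$, because $F^\ast$ has no loops, so $z_j\ge2$. Therefore
$$\mu(H')\le\#\{j: H_j\text{ functional}\}\le\tfrac12\sum_{j:\,H_j\text{ functional}}z_j=\mu(H)/2.$$
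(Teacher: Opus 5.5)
Your proof is correct and follows the paper's argument part by part. In (i) you use the root's zero out-degree in $H^\ast_\bbB$; in (ii) every node reaches the root or a cycle node; in (iii) you derive a contradiction with the root's out-degree; and in (iv) you count $\mu(H')\le s_2\le\mu(H)/2$. The only deviations are minor: in (i) you argue via \lemref{minrs_pointed} rather than checking condition (ii) of \lemref{minRS_necsuf} directly. In (iii) your appeal to (i) to upgrade the inclusion to an equality is unnecessary, since $\rho(\MB')\subseteq\rho(B^\ast)=\rho(\cup\bbB_j)\subseteq\rho(\MB_r)$ already yields the forbidden arc $(\MB_r,\MB')$.
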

\begin{proof}
  (i) Let $B\in\MB^{\propag}_\MS$ be any seed. 
  If $B\in\MB$, then $\rho(B)\supseteq\rho(\MB)$ holds by
  the definition. 
  Otherwise, let $\MB'\in\bbB$ denote the seed-subset
  node such that $B\in\MB'$. 
  Then $\rho(\MB)\not\supseteq\rho(B)$ must hold
  since otherwise $\rho(\MB)\supseteq\rho(B)\supseteq\rho(\MB')$
  would hold,
  contradicting that $H$ contains no arc $(\MB,\MB')$.
  We have seen that there is no seed $B\in\MB^{\propag}_\MS$ that satisfies
  $\rho(\MB)\supsetneq\rho(B)$, and hence
  $\rho(\MB)$ is a MinRS of $V$ by \lemref{minRS_necsuf}. 

  \smallskip
  \noindent(ii)
  We show that $\cup\bbB_j$ is pointed
  for each $j\in[1,s]$, that is,
  there is a seed $B^\ast\in(\cup\bbB_j)$
  such that $\rho(B^\ast)\subseteq\rho(B)$ holds for any $B\in(\cup\bbB_j)$. 
  For $B\in(\cup\bbB_j)$, let $\MB\in\bbB_j$
  denote the seed-subset node such that $B\in\MB$.
  Choose the seed-subset node $\MB^\ast\in\bbB_j$ as follows;
  if $H_j$ is an in-tree, then choose the root seed-subset node
  as $\MB^\ast$, and otherwise (i.e., if $H_j$ is functional),
  choose any seed-subset node on the only cycle as $\MB^\ast$. 
  The chosen seed-subset node $\MB^\ast\in\bbB$
  is pointed, and
  there is $B^\ast\in\Sigma(\MB^\ast)$ such that
  $\rho(B^\ast)=\rho(\MB^\ast)$. 
  Then it holds that $\rho(B)\supseteq\rho(\MB)\supseteq\rho(\MB^\ast)=\rho(B^\ast)$.

  \smallskip
  \noindent(iii)
  Let $\MB^\ast$ denote the root seed-subset node of $H_i$. 
  There is a seed $B^\ast\in\MB^\ast$ such that
  $\rho(\MB^\ast)=\rho(B^\ast)\subseteq\rho(B)$ holds
  for every
  $B\in(\cup\bbB_i)$. 
  Then $B^\ast\in\Sigma(\cup\bbB_i)$ holds and thus
  we have $\rho(B^\ast)=\rho(\cup\bbB_i)$. 
  In the SR digraph of $\bbB[F]$,
  if there is an arc $(\cup\bbB_i,\cup\bbB_j)$
  for some $j\in[1,s]$,
  then 
  $\rho(\cup\bbB_i) \supseteq \rho(\cup\bbB_j)$ holds.
  There would exist $\MB\in\bbB_j$ such that
  $\rho(\MB)=\rho(\cup\bbB_j)\subseteq\rho(\cup\bbB_i)=\rho(\MB^\ast)$, 
  contradicting that the arc $(\MB^\ast,\MB)$ does not
  exist in $H$.
  
  \smallskip
  \noindent(iv) 
  Let $s_1$ (resp., $s_2$)
  denote the number of in-tree (resp., functional) WCCs of $H$, where $s=s_1+s_2$. 
  A functional digraph consists of at least two nodes,
  and thus we have $2s_2\le\mu(H)$.
  In $H'$, there are $s$ seed-subset nodes
  each of which corresponds to a WCC of $H$. 
  By (iii), each in-tree WCC of $H$
  becomes the root seed-subset node
  of an in-tree WCC of $H'$. 
  Then in $H'$, there are at least $s_1$ in-tree WCCs
  and at most $s_2$ seed-subset nodes belong to functional WCCs. 
  Hence we have $\mu(H')\le s_2\le\mu(H)/2$. 
\end{proof}

See \figref{1-SR_r2} for an example of $H'=(\bbB[F],F')$
that is constructed for the 1-SR digraph 
$H=(\bbB,F)$ in \figref{1-SR}.
In this example, we see $s=6$, $s_1=1$, $s_2=5$, $\mu(H)=11$,
and $\mu(H')=2$, where $\mu(H')\le\mu(H)/2$ holds. 

\begin{figure}[t!]
  \centering
  \resizebox{0.9\textwidth}{!}{%

\begin{tikzpicture}[
  >=Latex,
  node distance=1.4cm and 2.0cm,
  box/.style={draw, double, minimum width=8mm, minimum height=8mm, inner sep=0pt},
  lbl/.style={font=\large},
  wlbl/.style={font=\large},
  flow/.style={->, shorten <=2pt, shorten >=2pt}
]

  
\node[box] (a1) at (0,0) {};
\node[box] (a2) at (2.2,0) {};
\node[box] (a3) at (1.1,-1.7) {};
\draw[flow] (a1.south) to[out=-70,in=155] (a3.north west);
\draw[flow] (a2.south) to[out=-110,in=25] (a3.north east);
\node[wlbl, text=blue!70!black, above=4mm] at (-1,0) {$\widehat{\mathbb{B}_1}=\{B_{1,1},B_{1,2}\}$};
\node[wlbl, text=blue!70!black, above=4mm] at (a2.east) {$\widehat{\mathbb{B}_2}=\{B_{1,3},B_{1,4}\}$};
\node[wlbl, text=blue!70!black, below=2mm] at (a3.south) {$\widehat{\mathbb{B}_3}=\{B_{2,1},B_{2,2},B_{2,3}\}$};

\node[box] (b1) at (7.0,0) {};
\node[box] (b2) at (7.0,-1.6) {};
\draw[flow] (b1.east) to[out=-20,in=20] (b2.east);
\draw[flow] (b2.west) to[out=160,in=-160] (b1.west);
\node[wlbl, text=blue!70!black, above=2mm] at (b1.north) {$\widehat{\mathbb{B}_4}=\{B_{3,1},B_{3,2}\}$};
\node[wlbl, text=blue!70!black, below=2mm] at (b2.south) {$\widehat{\mathbb{B}_5}=\{B_{3,3},B_{3,4}\}$};

\node[box] (c1) at (11.5,-0.8) {};
\node[wlbl, text=blue!70!black, below=2mm] at (c1.south) {$\widehat{\mathbb{B}_6}=\{B_{4,1},B_{4,2},B_{5,1}\}$};

\end{tikzpicture}

  }
  \caption{A 1-SR digraph $H'=(\bbB[F],F')$ that is constructed for
  the 1-SR digraph $H=(\bbB,F)$ in \figref{1-SR}. }
  \label{fig:1-SR_r2}
\end{figure}

Let $\MY$ denote the set of all MinRSs of $V$.
We can enumerate $\MY$ as follows;
starting from $\bbB=\bbB_\bot$,
we repeat constructing a 1-SR digraph $H=(\bbB,F)$ somehow
and updating $\bbB\coloneqq\bbB[F]$,
where $\bbB[F]$ is pointed by \lemref{pointed}(ii),
until $F=\emptyset$.
The iteration is repeated $O(\log n)$ times
since $\mu(H)$ becomes zero in $O(\log n)$ iterations by \lemref{pointed}(iv),
indicating that all WCCs become in-trees.
This means that all seed-subset nodes become isolated in $O(\log n)$ iterations.
Let $\MB_1,\MB_2,\dots,\MB_t$ denote the isolated
seed-subset nodes of the resulting $H$
and $\MR\coloneqq\{\rho(\MB_j)\mid j\in[1,t]\}$. 
It holds that $\MY\subseteq\MR$ by \lemref{minrs_pointed}
and that $\MY\supseteq\MR$ by \lemref{pointed}(i),
by which $\MY$ is obtained.

\paragraph{How to Construct 1-SR Digraphs.}
We show how to construct a 1-SR digraph $H=(\bbB,F)$ of a pointed partition $\bbB$.
For each seed-subset node $\MB\in\bbB$,
we need to decide whether there is $\MB'$
such that $\rho(\MB)\supseteq\rho(\MB')$ or not;
if such $\MB'$ exists, then we include
the arc $(\MB,\MB')$ into $F$, and otherwise,
we conclude that the seed-subset node $\MB$ has no
outgoing arcs.

The main lemma is \lemref{sigma},
and the preceding Lemmas~\ref{lem:twomin}
and~\ref{lem:twominRS} are preparatory.  

\begin{lem}
  \label{lem:twomin}
  Suppose that we are given
  a monotone system $\MS=\MS(V,\MZ,\omega,\theta)$.
  Let $B,B'\in\MB^{\propag}_{\MS}$ $(B\ne B')$ be seeds.
  The following are equivalent:
  \begin{itemize}
  \item[\rm(i)] 
    There exists
    $u\in\rho(B)\setminus B$ such that $R^{\propag}_\MS[u]\cap B'\ne\emptyset$; and
  \item[\rm(ii)]There exists
    $u\in\rho(B)\setminus B$ such that $N^{\propag+}_\MS[u]\cap B'\ne\emptyset$.
  \end{itemize}
\end{lem}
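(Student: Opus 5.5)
The direction (ii)$\Rightarrow$(i) is immediate, since $N^{\propag+}_\MS[u]\subseteq R^{\propag}_\MS[u]$. For (i)$\Rightarrow$(ii), the plan is to take a shortest path in the propagation digraph $H^{\propag}_\MS$ from a witness $u\in\rho(B)\setminus B$ to $B'$, and show that its last vertex before entering $B'$ lies in $\rho(B)\setminus B$. That vertex is then a witness for (ii).

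First I collect two structural facts about $\rho(B)$.
\begin{itemize}
\item[(a)] $\rho(B)$ is closed under out-arcs of $H^{\propag}_\MS$. If $x\in\rho(B)$ and $(x,y)\in A^{\propag}_\MS$, then $\omega_{V\setminus\{x\}}(y)\not\geqslant\theta$. Suppose $y\notin\rho(B)$. Since $V\setminus\rho(B)\subseteq V\setminus\{x\}$, monotonicity gives $\omega_{V\setminus\rho(B)}(y)\leqslant\omega_{V\setminus\{x\}}(y)$. Hence $\omega_{V\setminus\rho(B)}(y)\not\geqslant\theta$ (if it were $\geqslant\theta$, transitivity would give $\omega_{V\setminus\{x\}}(y)\geqslant\theta$). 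This contradicts $V\setminus\rho(B)$ being a $\theta$-solution. So $R^{\propag}_\MS[\rho(B)]=\rho(B)$.
\item[(b)] $B$ is a bottom SCC, so it has no outgoing arcs. Thus no vertex of $B$ has an out-neighbour outside $B$, and in particular none in $B'$, because distinct bottom SCCs are disjoint.
\end{itemize}

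Now assume (i), with witness $u$, and take a shortest path $u=x_0,x_1,\dots,x_\ell$ with $x_\ell\in B'$. Since $R^{\propag}_\MS[u]$ includes $u$ itself, $\ell=0$ is possible. In that case $u\in B'$ and $u\in N^{\propag+}_\MS[u]\cap B'$, so (ii) holds with the same $u$.

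Otherwise $\ell\ge1$. By (a), every $x_i$ lies in $\rho(B)$, and minimality of the path gives $x_i\notin B'$ for $i<\ell$. Let $w\coloneqq x_{\ell-1}$, so that $x_\ell\in N^{\propag+}_\MS(w)\cap B'$.

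It remains to check $w\notin B$. If $w\in B$, then $w$ would have an out-neighbour $x_\ell\in B'$ outside $B$, contradicting (b). Hence $w\in\rho(B)\setminus B$, and $w$ witnesses (ii).

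The only step needing care is fact (a): closedness of $\rho(B)$ under propagation arcs, which combines monotonicity of $\omega$ with the transitivity of $\leqslant$ through the $\not\geqslant$ relation. Everything else is bookkeeping about bottom SCCs.
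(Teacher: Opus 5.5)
Your proof is correct. It rests on the same key fact as the paper's proof: everything reachable from $u$ in $H^{\propag}_\MS$ stays inside $\rho(B)$, while $B\cap B'=\emptyset$. Where you differ is in the choice of witness.

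The paper takes the endpoint $v\in R^{\propag}_\MS[u]\cap B'$ itself as the witness:
\begin{itemize}
\item $v\in\rho(B)$ by reachability;
\item $v\notin B$ by disjointness;
\item $v\in N^{\propag+}_\MS[v]\cap B'$ simply because the out-neighborhood in (ii) is closed.
\end{itemize}
The trick you reserve for the case $\ell=0$ therefore works for $x_\ell$ in every case. The shortest path, the penultimate vertex $w$, and your fact (b) about bottom SCCs are not needed.

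Your route does buy a slightly finer conclusion. When $u\notin B'$, it yields a witness in $\rho(B)\setminus(B\cup B')$ with a genuine arc into $B'$, though the closed neighborhood is still needed when $u\in B'$.

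Your fact (a) is also a real gain in rigor. The paper asserts $R^{\propag}_\MS[u]\subseteq\rho(u)\subseteq\rho(B)$ without proof, relying on the informal discussion of propagation digraphs and on Lemma~\ref{lem:closure}. You instead derive that $\rho(B)$ is closed under propagation arcs directly, from the monotonicity of $\omega$ and the fact that $V\setminus\rho(B)$ is a $\theta$-solution.
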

\begin{proof}
  (ii) $\Longrightarrow$ (i) is trivial
  since $N^{\propag+}_\MS[u]\subseteq R^{\propag}_\MS[u]$ holds.
  For (i) $\Longrightarrow$ (ii),
  by the definition of $R^{\propag}_\MS[u]$,
  there is a path from $u$ to some vertex $v \in R^{\propag}_\MS[u] \cap B'$
  in the propagation digraph $H^{\propag}_\MS$.
  We have $v\in R^{\propag}_\MS[u] \subseteq \rho(u)\subseteq \rho(B)$.
  The vertex $v$ belongs to $B'$
  and $B\cap B'=\emptyset$,
  and hence $v\in\rho(B)\setminus B$.
  It is obvious that $v\in N^{\propag+}_\MS[v]$
  and thus $v\in N^{\propag+}_\MS[v]\cap B'$ holds.
  \invis
  {
  By the monotonicity of $\rho$, we have
  $\rho(B) \supseteq R^{\propag}_\MS[B] \supseteq R^{\propag}_\MS[u]$,
  which implies that $v \in R^{\propag}_\MS[u] \subseteq \rho(B)$.
  Since $B \neq B'$, we have $v \in \rho(B) \setminus B$
  and $v \in N_{\MS}^{\propag+}[v] \cap B' \neq \emptyset$.
  }
\end{proof}

\begin{lem}
  \label{lem:twominRS}
  Suppose that we are given
  a monotone system $\MS=\MS(V,\MZ,\omega,\theta)$.
  Let $B,B'\in\MB^{\propag}_{\MS}$ $(B\ne B')$ be seeds.
  \begin{itemize}
  \item[\rm(i)] 
    If there exists
    $u\in\rho(B)\setminus B$ such that $N^{\propag+}_\MS[u]\cap B'\ne\emptyset$,
    then 
    $\rho(B')\subseteq\rho(B)$ holds.
  \item[\rm(ii)]Otherwise, $\rho(B')\not\subseteq\rho(B)$
    holds.
  \end{itemize}
\end{lem}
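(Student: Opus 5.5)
The plan is to reduce both parts to one closure fact about the propagation digraph: if $x\in\rho(B)$ and $(x,y)\in A^{\propag}_\MS$, then $y\in\rho(B)$. With that fact, part (i) follows by showing $B'\subseteq\rho(B)$ and then applying \lemref{closure}. Part (ii) follows simply because $B'$ is missing from $\rho(B)$.

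First I establish the closure fact. If $\rho(B)=V$ there is nothing to show. Otherwise $V\setminus\rho(B)$ is a $\theta$-solution. Suppose $y\notin\rho(B)$. Since $x\in\rho(B)$, we have $V\setminus\rho(B)\subseteq V\setminus\{x\}$. Monotonicity of $\omega$ then gives $\omega_{V\setminus\rho(B)}(y)\leqslant\omega_{V\setminus\{x\}}(y)$. Also $\theta\leqslant\omega_{V\setminus\rho(B)}(y)$, because $y\in V\setminus\rho(B)$ and that set is a $\theta$-solution. By transitivity of the partial order, $\theta\leqslant\omega_{V\setminus\{x\}}(y)$. This contradicts $(x,y)\in A^{\propag}_\MS$, so $y\in\rho(B)$. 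Iterating along paths, $R^{\propag}_\MS[x]\subseteq\rho(B)$ for every $x\in\rho(B)$.

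For (i), take $u\in\rho(B)\setminus B$ and $v\in N^{\propag+}_\MS[u]\cap B'$. Either $v=u$, or $(u,v)$ is an arc; in both cases the closure fact gives $v\in\rho(B)$. Now $B'$ is a strongly connected component, so every vertex of $B'$ is reachable from $v$. Hence $B'\subseteq R^{\propag}_\MS[v]\subseteq\rho(B)$. By the monotonicity and idempotence of $\rho$ from \lemref{closure}, $\rho(B')\subseteq\rho(\rho(B))=\rho(B)$.

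For (ii), assume that every $u\in\rho(B)\setminus B$ satisfies $N^{\propag+}_\MS[u]\cap B'=\emptyset$. Since $u\in N^{\propag+}_\MS[u]$, this gives $(\rho(B)\setminus B)\cap B'=\emptyset$. Also $B\cap B'=\emptyset$, because distinct bottom SCCs are disjoint. Therefore $B'\cap\rho(B)=\emptyset$. Since $B'$ is nonempty and $B'\subseteq\rho(B')$ by extensivity, we conclude $\rho(B')\not\subseteq\rho(B)$.

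The only step that needs care is the closure fact: we must compare $V\setminus\rho(B)$ with $V\setminus\{x\}$ in the correct direction of monotonicity, and handle the degenerate case $\rho(B)=V$ separately. The rest is bookkeeping with strong connectivity and \lemref{closure}.
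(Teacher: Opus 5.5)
Your proof is correct and follows essentially the paper's route: in (i) you make rigorous, via the fact that $\rho(B)$ is closed under arcs of $H^{\propag}_\MS$, what the paper says informally (computing $\rho(B)$ forces removal of $u$, then $v$, and eventually all of $B'$), and in (ii) you reach $\rho(B)\cap B'=\emptyset$ the same way the paper does. The differences are cosmetic: you treat $u\in B'$ and $u\notin B'$ uniformly, and in (ii) you use $u\in N^{\propag+}_\MS[u]$ directly, whereas the paper detours through \lemref{twomin} and handles $\rho(B)=B$ as a separate case.
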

\begin{proof}
  (i) If $u\in B'$, then
  we have $\rho(B')=\rho(u)\subseteq\rho(B)$.
  Otherwise, there is an element $v\in N^{\propag+}_{\MS}(u)\cap B'$.
  To obtain $\rho(B)$ by the iterative procedure,
  we need to remove $u$, its neighbor $v$,
  and eventually all elements in $B'$.
  Hence $\rho(B)\supseteq\rho(B')$ holds. 

  \noindent(ii) If $\rho(B)\setminus B=\emptyset$ (i.e., $\rho(B)=B$),
  then
  $\rho(B')\not\subseteq\rho(B)$ holds
  since $B'\subseteq\rho(B')$ and $B=\rho(B)$ are disjoint. 
  Suppose that $N^{\propag+}_\MS[u]\cap B'=\emptyset$ for any $u\in\rho(B)\setminus B$.
  By \lemref{twomin}, $R^{\propag}_\MS[u]\cap B'=\emptyset$ holds
  for any $u\in\rho(B)\setminus B$.
  If $\rho(B) \cap B' \neq \emptyset$,
  then let  $v \in \rho(B) \cap B'$.
  However, by $B\cap B'=\emptyset$,
  we would have $v \in \rho(B) \setminus B$
  and $v \in R^{\propag}_\MS[v] \cap B' \neq \emptyset$,
  a contradiction.
  Then $\rho(B)\cap B'=\emptyset$ holds, 
  implying 
  $\rho(B)\not\supseteq\rho(B')$.
\end{proof}

\begin{lem}
  \label{lem:sigma}
  Suppose that we are given
  a monotone system $\MS=\MS(V,\MZ,\omega,\theta)$.
  For a pointed partition $\bbB$ of $\MB^{\propag}_{\MS}$,
  let $\MB\in\bbB$ be a seed-subset node
  and $B\in\Sigma(\MB)$ be a seed whose closure is minimal
  in $\MB$.  
  \begin{itemize}
  \item[\rm(i)]
    If there are $\MB'\in\bbB$ $(\MB'\ne\MB)$,
    $B'\in\MB'$ and $u\in\rho(B)\setminus B$
    such that 
    $N^{\propag}_\MS[u]\cap B'\ne\emptyset$, 
    then $\rho(\MB')\subseteq\rho(B)$ holds.
  \item[\rm(ii)] Otherwise, $\rho(B)$ is a MinRS of $V$. 
  \end{itemize}
\end{lem}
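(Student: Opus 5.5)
Both parts reduce to \lemref{twominRS} and \lemref{minRS_necsuf}, using the fact that $B\in\Sigma(\MB)$ means $\rho(B)=\rho(\MB)\subseteq\rho(B'')$ for every $B''\in\MB$. One small point first: the statement writes $N^{\propag}_\MS[u]$. I read this as the closed out-neighborhood $N^{\propag+}_\MS[u]$, which is what \lemref{twominRS} uses. Even if in-neighbors were meant, an in-neighbor $w\in B'$ of $u$ would still lie in $\rho(B)$, because an arc $(w,u)$ with $u\in\rho(B)$ and $w\in B'$ is handled by the reachability argument below. To avoid that detour, I will argue with out-neighbors.

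For (i), $B$ and $B'$ are distinct seeds, since they lie in different blocks of the partition $\bbB$. The hypothesis gives $u\in\rho(B)\setminus B$ with $N^{\propag+}_\MS[u]\cap B'\ne\emptyset$. \lemref{twominRS}(i) then yields $\rho(B')\subseteq\rho(B)$. Since $\MB'$ is pointed, $\rho(\MB')\subseteq\rho(B')$, by the definition of $\Sigma(\MB')$ applied to the element $B'\in\MB'$. Chaining the two inclusions gives $\rho(\MB')\subseteq\rho(B)$.

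For (ii), I will verify condition (ii) of \lemref{minRS_necsuf} for $Y=\rho(B)$; condition (i) holds trivially with the seed $B$. So I take any seed $B''$ with $\rho(B'')\ne\rho(B)$ and show $\rho(B'')\not\subseteq\rho(B)$. There are two cases.

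\textbf{Case $B''\in\MB$.} Minimality of $B$ in $\MB$ gives $\rho(B)\subseteq\rho(B'')$. If also $\rho(B'')\subseteq\rho(B)$, the two closures would be equal, which is excluded.

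\textbf{Case $B''\in\MB'$ for some $\MB'\ne\MB$.} The negated hypothesis says that for every $u\in\rho(B)\setminus B$ we have $N^{\propag+}_\MS[u]\cap B''=\emptyset$. \lemref{twominRS}(ii) then gives $\rho(B'')\not\subseteq\rho(B)$.

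In both cases no seed has a closure strictly contained in $\rho(B)$, so $\rho(B)$ is a MinRS of $V$ by \lemref{minRS_necsuf}.

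The only delicate point is ensuring the hypothesis of \lemref{twominRS} is available: it requires $B\ne B''$ to be distinct seeds. In case 2 this follows because $\bbB$ is a partition and $\MB'\ne\MB$. The rest is direct bookkeeping with the definition of $\Sigma$.
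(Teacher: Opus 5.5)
Your proof is correct and follows the paper's argument essentially step for step. Part (i) applies \lemref{twominRS}(i) and chains it with $\rho(B')\supseteq\rho(\MB')$. Part (ii) splits the seeds into those in $\MB$, handled by $B\in\Sigma(\MB)$, and those in other blocks, handled by \lemref{twominRS}(ii), and then invokes \lemref{minRS_necsuf}. Your reading of $N^{\propag}_\MS[u]$ as $N^{\propag+}_\MS[u]$ is what the paper's proof implicitly uses. The in-neighbor aside is also right, but not for the reason you give, since there is no ``reachability argument below'' in your write-up. The actual reason is that no arc leaves a bottom SCC, so an arc $(w,u)$ with $w\in B'$ forces $u\in B'$, which reduces to the out-neighbor case.
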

\begin{proof}
  For (i), we see $B\ne B'$.  By \lemref{twominRS}(i),
  we have  
  $\rho(B)\supseteq\rho(B')\supseteq\rho(\MB')$. 
  For (ii), for any seed $B'\in\MB'$ $(\MB'\ne\MB)$, 
  $\rho(B')\not\subseteq\rho(B)$ holds
  by \lemref{twominRS}(ii).
  For any seed $B'\in\MB$, it holds that $\rho(B')\supseteq\rho(\MB)=\rho(B)$ by the definition of $B$.
  No $\rho(B')$ is a proper subset of $\rho(B)$,
  and by \lemref{minRS_necsuf},
  $\rho(B)$ is a MinRS of $V$. 
\end{proof}

For a pointed partition $\bbB$
and a seed-subset node $\MB\in\bbB$,
we explain how to decide whether
there is another seed-subset node $\MB'\in\bbB$
such that $\rho(\MB)\supseteq\rho(\MB')$.
We exploit \lemref{sigma}(i) and \algref{rho}.
Let $B\in\Sigma(\MB)$. 
Starting from $Y\gets B$, 
we attempt to construct the closure $\rho(B)$ in the manner
of \algref{rho}. During the construction,
we repeatedly add a violating vertex $u\in V\setminus Y$
(i.e., $\omega_{V\setminus Y}(u)\not\geqslant\theta$) to $Y$.
This $u$ must belong to $\rho(B)\setminus B$. 
If there is a seed $B'\in\MB'$ $(\MB'\ne\MB)$
such that $N^{\propag+}_\MS[u]\cap B'\ne\emptyset$,
then $\rho(\MB')\subseteq\rho(B)$ holds by \lemref{sigma}(i),
indicating that the SR digraph includes
the arc $(\MB,\MB')$, and so does a 1-SR digraph. 
On the other hand, if no violating vertex is left
as the result of the repetition,
then $Y=\rho(B)$ holds, that is, the closure is constructed. 
In this case, the seed subset-node $\MB$ has
no outgoing arcs.

For a given seed-subset node $\MB\in\bbB$,
we summarize in \algref{extrho}
the algorithm that identifies whether or not there is $\MB'\ne\MB$ such that
$(\MB,\MB')\in F^\ast[\bbB]$ and outputs the closure
$\rho(\MB)$ if no such $\MB'$ exists. 
Executing \algref{extrho} for each $\MB\in\bbB$,
we can construct a 1-SR digraph $H=(\bbB,F)$,
where $F$ is the set of all $(\MB,\MB')$
that are output by \algref{extrho}. 

\begin{algorithm}[t!]
  \caption{An algorithm for a monotone system to
    identify whether or not there is $\MB'\ne\MB$ such that
    $(\MB,\MB')\in F^\ast[\bbB]$; and outputs the closure
    $\rho(\MB)$ if no such $\MB'$ exists}
  \label{alg:extrho}
  \Input{A monotone system $\MS=\MS(V,\MZ,\omega,\theta)$,
    a pointed partition $\bbB$ of $\MB^{\propag}_\MS$, 
    a seed subset $\MB\in\bbB$, and
    a seed $B\in\Sigma(\MB)$}
  \Output{An ordered pair $(\MB,\MB^\dagger)$ or the closure $\rho(\MB)$, where the former indicates $\rho(\MB)\supseteq\rho(\MB^\dagger)$}
  $Y\gets B$\;
  initialize a set $Q\gets\{v\in V\setminus Y\mid \omega_{V\setminus Y}(v)\not\geqslant\theta\}$\;\label{line:extrho_pre}
  \While{$Q\neq\emptyset$\label{line:extrho_while}}
        {
          $u \gets$ an arbitrary vertex in $Q$\;\label{line:extrho_while_select}
          $Q\gets Q\setminus\{u\}$\;
          \If{there is $\MB'\in\bbB\setminus\{\MB\}$ such that
            $N^{\propag+}_\MS[u]\cap B'\neq\emptyset$ for some $B'\in\MB'$\label{line:extrho_if}}{
            output $(\MB,\MB')$ and halt\;\label{line:extrho_if_output}}
          $Y\gets Y\cup\{u\}$\;\label{line:extrho_insertY}
          \For{each $w \in V \setminus (Y \cup Q)$\label{line:extrho_for}}{
            \If{$\omega_{V\setminus Y}(w)\not\geqslant\theta$}{$Q\gets Q\cup\{w\}$\;}\label{line:extrho_if_Q_update}
          }
        }
        output $Y$ as $\rho(\MB)$\;
\end{algorithm}


\paragraph{Algorithms for Enumerating MinRSs of $V$.}
Now we are ready to present
an algorithm to enumerate all MinRSs of $V$.
Starting from $\bbB=\bbB_\bot$,
the algorithm iteratively constructs a 1-SR digraph $H=(\bbB,F)$
and updates $\bbB\coloneqq\bbB[F]$ until $F=\emptyset$.
Finally we have a pointed partition $\bbB$ such that
the 1-SR digraph consists of isolated seed-subset nodes. 
The algorithm outputs the closure of each node
as a MinRS of $V$.
The preceding discussion guarantees
the correctness of the algorithm.

The algorithm is summarized in \algref{local_minrs}.
In \algref{local_minrs}, for a seed-subset node
$\MB\subseteq\MB^{\propag}_\MS$, 
a variable $\sigma(\MB)$ stores
a seed in $\Sigma(\MB)$; and
a variable $\wasInTree(\MB)$ indicates
whether the seed-subset node $\MB$
belonged to an in-tree of the 1-SR digraph
that was constructed in the last iteration,
where it is initialized for each singleton to $\false$ in \lineref{minrs_init}. 

\begin{algorithm}[t!]
  \caption{An algorithm to compute all MinRSs of $V$}
  \label{alg:local_minrs}
  \Input{A monotone system $\MS=\MS(V,\MZ,\omega,\theta)$ and
    its propagation graph $H^{\propag}_\MS$
    }
  \Output{All MinRSs of $V$}
  $\bbB\gets \bbB_\bot$\; 
  \For{$\MB=\{B\}\in\bbB$}
      {$\sigma(\MB)\gets B$\tcp*{$\sigma(\MB)$ stores a seed in $\Sigma(\MB)$ for \algref{extrho}}
        $\wasInTree(\MB)\gets \false\label{line:minrs_init}$\tcp*{\algref{extrho} is invoked only for $\MB$ such that $\wasInTree(\MB)=\false$}
      }
      
      \Repeat{$F=\emptyset$}{\label{line:local_minrs_repeat_start}
        \tcp{Construct the arc set $F$ for the current $\bbB$}
        $F\gets\emptyset$\;
    \For{$\MB\in\bbB$ with $\wasInTree(\MB)=\false$\label{line:for_MB_local_minrs}}{
      execute \algref{extrho} for $(\MS,\bbB,\MB,\sigma(\MB))$,
      where either $(\MB,\MB^\dagger)$ or $\rho(\MB)$ is returned\label{line:from_minrs_extrho_call}\;
      \lIf{$(\MB,\MB^\dagger)$ is returned}{$F\gets F\cup\{(\MB,\MB^\dagger)\}$}\label{line:local_minrs_returned}
    }
    \tcp{Update $\bbB$ to $\bbB[F]$ if $F\ne\emptyset$}
    \If{$F\ne\emptyset$\label{line:if_F_not_empty_local_minrs}}{
      $H_1,H_2,\dots,H_q\gets$
      WCCs of $H=(\bbB,F)$ such that 
      at least one seed-subset node $\MB$ with $\wasInTree(\MB)=\false$ is contained\label{line:compute_WCCs}\;
      \For{$i=1,2,\dots,q$}{
        $\bbB_i \gets $ the collection
        of seed-subset nodes in $H_i$\;
        \If{$H_i$ is an in-tree}{
          $\MB\gets$ the root seed-subset node of $H_i$\;
          $\wasInTree(\widehat{\bbB_i})\gets \true$\;
        }
        \Else{$\MB\gets$ any seed-subset node in the only cycle of $H_i$\;
        $\wasInTree(\widehat{\bbB_i})\gets \false$\label{line:wasInTree_false}\;}
        $\sigma(\widehat{\bbB_i})\gets \sigma(\MB)$\;
      }
      $\bbB\gets\bbB[F]$\;\label{line:local_minrs_repeat_end}
    }
  }
  output $\{\rho(\MB)\mid \MB\in\bbB\}$ as the set of all MinRSs of $V$\;\label{line:alg_terminate}
\end{algorithm}


In \lineref{from_minrs_extrho_call},
we execute \algref{extrho} to obtain
either $(\MB,\MB^\dagger)$ or $\rho(\MB)$
only for a seed-subset node $\MB\in\bbB$
with $\wasInTree(\MB)=\false$.
This is because we do not need to
do it for $\MB'$ with $\wasInTree(\MB')=\true$
(i.e., $\MB'$ corresponds to
an in-tree of the 1-SR digraph in the last iteration);
by \lemref{pointed}(iii), it is guaranteed
that the seed-subset node $\MB'$ has no outgoing arcs
in the 1-SR digraph that is to be constructed
in the current iteration.
As we will see in the next section,
this leads to an improvement in time complexity. 


For the time complexity,
\algref{local_minrs} runs in $O(n^3\log n\cdot\tau_\omega)$ time
since the repeat-loop from
\lineref{local_minrs_repeat_start} to~\ref{line:local_minrs_repeat_end}
is iterated $O(\log n)$ times by \lemref{pointed}(iv),
and the most time-consuming part, the for-loop from 
\lineref{for_MB_local_minrs} to~\ref{line:local_minrs_returned},
takes $O(n^3\cdot\tau_\omega)$ time; 
\algref{extrho} is called $O(n)$ times
and runs in $O(n^2\cdot\tau_\omega)$ time per each call. 

In the next section, we improve the time complexity
when $\MS$ is local monotone, along with space complexity analyses. 


\section{Local Monotone Systems}
\label{sec:lms}

In this section, we consider the MinRS enumeration
problem on local monotone systems.
In \secref{mmw_improve_eff},
we analyze the computational complexity
of the MinRS enumeration algorithm that
was presented as \algref{local_minrs}
when we are given a local monotone system, 
as a proof for \thmref{main_local_system}.  
In \secref{in-dominating_property},
we improve the complexity when
the local monotone system satisfies
what we call the in-dominating seed property,
as a proof for \thmref{in-dominating}. 
Finally in \secref{examples_local_monotone_systems},
we provide application examples of the framework,
including a proof for \thmref{deg-at-least-k}.

\subsection{Complexity of MinRS Enumeration}
\label{sec:mmw_improve_eff}

\invis{
In the remainder of this section,
we assume that we are given a local monotone system $\MS(V, \MZ, \omega, \theta; G)$,
where $G=(V,E)$ is an undirected or directed graph.
The locality condition is equivalent to the condition that,
for any vertex $u$ and any two subsets $S,T\subseteq V$
with $S\cap N_G(u)=T\cap N_G(u)$,
$\omega_S(u)=\omega_{T}(u)$ holds.
There are many examples of local functions.
The function value $\omega_S(u)$
for a vertex subset $S \subseteq V$ and a vertex $u$ in a graph $G$ can be defined as
\begin{itemize}
  \item the degree (resp. in-/out-degree) of $u$
  in the induced subgraph $G[S \cup \{u\}]$
  when $G$ is undirected (resp. directed);
  \item the maximum or sum of weights on edges between $u$ and its neighbors in $S$ when $G$ is a non-negative edge-weighted graph;
  \item the number of triangles 
  that $u$ participates in $G[S \cup \{u\}]$;
  etc.
\end{itemize}

We call a monotone system $\MS(V,\MZ,\omega,\theta; G)$
with a local function $\omega$ on the underlying graph $G$
a \emph{local monotone system}.
For a local monotone system $\MS = \MS(V, \MZ, \omega, \theta; G)$,
the propagation digraph $H^{\propag}_\MS$ is a subgraph of (oriented) $G$.
}

For a local monotone system $\MS=\MS(V,\MZ,\omega,\theta;G)$
with an underlying graph $G$,
it is easy to see that the propagation graph
is a subgraph of (oriented) $G$
if $V$ is a $\theta$-solution. 

\begin{lem}
  \label{lem:local}
  Suppose that we are given
  a local monotone system~$\MS(V,\MZ,\omega,\theta;G=(V,E))$
  such that $V$ is a $\theta$-solution.
  Let $(u,v)$ be any arc
  in the propagation digraph $H^{\propag}_\MS$.
  If $G$ is undirected, then
  $uv\in E$ holds. 
  If $G$ is directed, then
  $(u,v)\in E$ and/or $(v,u)\in E$ holds. 
\end{lem}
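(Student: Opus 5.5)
We argue by contraposition: assuming $u$ and $v$ are not adjacent in $G$, we show that $(u,v)$ cannot be an arc of $H^{\propag}_\MS$, i.e., that $\omega_{V\setminus\{u\}}(v)\geqslant\theta$. Recall that $(u,v)\in A^{\propag}_\MS$ means $\omega_{V\setminus\{u\}}(v)\not\geqslant\theta$. The argument uses only the locality of $\omega$ together with the hypothesis that $V$ is a $\theta$-solution; monotonicity is not needed.

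First, observe that the definition of an arc requires $v\ne u$ implicitly. Indeed, $\omega_{V\setminus\{u\}}(u)$ is evaluated, but arcs of interest are between distinct vertices. If $u=v$ is allowed as a loop, we would exclude it, since loops do not affect reachability or the SCC structure. We therefore assume $u\ne v$, so $v\in V\setminus\{u\}$.

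Now let $G$ be undirected and suppose $uv\notin E$, i.e., $u\notin N_G(v)$. Then
$(V\setminus\{u\})\cap N_G(v)=V\cap N_G(v)$.
By locality this gives
$\omega_{V\setminus\{u\}}(v)=\omega_{(V\setminus\{u\})\cap N_G(v)}(v)=\omega_{V\cap N_G(v)}(v)=\omega_V(v)$.
Since $V$ is a $\theta$-solution and $v\in V$, we have $\omega_V(v)\geqslant\theta$. Hence $\omega_{V\setminus\{u\}}(v)\geqslant\theta$, so $(u,v)$ is not an arc, which is a contradiction.

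For directed $G$, the only point to settle is what $N_G(v)$ means in the locality definition. We interpret it as $N_G(v)=N^+_G(v)\cup N^-_G(v)$, following the convention in the preliminaries. If neither $(u,v)$ nor $(v,u)$ lies in $E$, then $u\notin N_G(v)$, and the same chain of equalities applies verbatim.

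The main (mild) obstacle is exactly this interpretive point: the locality definition was stated with $N_G(u)$ for an undirected graph. We will note explicitly that for digraphs we use the convention $N_D(v)=N^+_D(v)\cup N^-_D(v)$. This convention yields the ``and/or'' conclusion of the lemma. If one instead uses only in-neighbors (or only out-neighbors), the same argument proves the stronger conclusion with one specific orientation.
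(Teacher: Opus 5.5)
Your proof is correct and is essentially the paper's argument: assume $u$ and $v$ are non-adjacent, apply locality to get $\omega_{V\setminus\{u\}}(v)=\omega_V(v)\geqslant\theta$ from the assumption that $V$ is a $\theta$-solution, and read $N_G(v)$ for digraphs as $N^+_G(v)\cup N^-_G(v)$. Your special treatment of $u=v$ is unnecessary: $u\notin N_G(u)$, so the same chain of equalities already shows that $(u,u)$ is never an arc, and nothing needs to be excluded by fiat.
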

\begin{proof}
  The whole vertex set $V$ is a $\theta$-solution,
  and hence $\omega_{V}(x)\geqslant \theta$ holds
  for any $x \in V$.
  For an undirected (resp., a directed) $G$,
  suppose that there is no
  edge (resp., arc)
  between $u$ and $v$. 
  We have $\omega_{V\setminus\{u\}}(v) = \omega_{(V\setminus\{u\}) \cap N_G(v)}(v) = \omega_{V \cap N_G(v)}(v) = \omega_V(v) \geqslant \theta$, implying that $(u,v)$ is not an arc in $H^{\propag}_\MS$.
\end{proof}

We can decide
whether $V$ is a $\theta$-solution or not
in $O(n\tau_\omega)$ time and $O(\kappa_\omega)$ space. 
If $V$ is not a $\theta$-solution, then 
we can obtain its unique MinRS by \algref{rho},
which can be done in $O((n+m)\tau_\omega)$ time
and $O(n+m+\kappa_\omega)$ space. 
Otherwise, before executing \algref{local_minrs}, 
we need to construct the propagation digraph $H^{\propag}_\MS$
and then to decompose it into SCCs to obtain seeds.
When $V$ is a $\theta$-solution, 
  let $u,v$ be any vertices in the underlying graph $G=(V,E)$.
  If $G$ is undirected (resp., directed), 
  then by \lemref{local},
  an arc $(u,v)$ exists in $H^{\propag}_\MS$
  only when $uv\in E$ (resp., $(u,v)\in E$ and/or $(v,u)\in E$).
  Hence we can obtain the arc set of $H^{\propag}_\MS$
  by evaluating whether $\omega_{V\setminus\{u\}}(v)\geqslant\theta$ or not for each $u\in V$ and $v\in N_G(u)$.
The propagation graph $H^{\propag}_\MS$
consists of $n$ vertices and $O(n+m)$ arcs. 
We see that construction of $H^{\propag}_\MS$
and its SCC decomposition can be done 
in $O((n+m)\tau_\omega)$ time and $O(n+m+\kappa_\omega)$ space.

We examine the computational complexity of
\algref{extrho} for local monotone systems,
which is repeatedly called as a subroutine
from \algref{local_minrs}. 
\begin{lem}
  \label{lem:1-br_local}
  Given a local monotone system $\MS=\MS(V, \MZ, \omega, \theta; G)$,
  a pointed partition $\bbB$ of $\MB^{\propag}_\MS$,
  a seed-subset node $\MB\in\bbB$ and a seed $B\in\Sigma(\MB)$,
  if $V$ is a $\theta$-solution, then
  \algref{extrho} runs correctly in
  $O((n+m)\tau_\omega)$ time
  and $O(n+m+\kappa_\omega)$ space.
\end{lem}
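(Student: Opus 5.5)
Correctness comes directly from the earlier lemmas, so most of the work is a careful implementation of \algref{extrho} using locality. For correctness, every vertex $u$ taken from $Q$ satisfies $\omega_{V\setminus Y}(u)\not\geqslant\theta$ for the current $Y$ with $B\subseteq Y\subseteq\rho(B)$. Hence $u\in\rho(B)\setminus B$, as argued right after \lemref{unique}. If the test in \lineref{extrho_if} succeeds, \lemref{sigma}(i) gives $\rho(\MB')\subseteq\rho(B)=\rho(\MB)$, so $(\MB,\MB')$ is an arc of $H^\ast_\bbB$. If the loop finishes without halting, then $Y$ has been grown exactly as in \algref{rho}, so $Y=\rho(B)=\rho(\MB)$. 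Moreover, no $u\in\rho(B)\setminus B$ satisfied the condition, and \lemref{sigma}(ii) together with \lemref{twominRS}(ii) shows that $\MB$ has no outgoing arc. One point needs care: the \textbf{for} loop in \lineref{extrho_for} may only inspect neighbours of $u$. This is justified by locality, since adding $u$ to $Y$ changes $\omega_{V\setminus Y}(w)$ only when $u\in N_G(w)$ (or $w$ adjacent to $u$ in the directed case). Any other $w$ was either already violating, and hence already in $Q$ or $Y$, or is still non-violating.

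For the running time, we first maintain a label array $\mathrm{lab}:V\to\bbB\cup\{\bot\}$ that records, for each seed vertex, the seed-subset node containing its seed. This array can be built once per outer iteration of \algref{local_minrs} in $O(n)$ time, or is assumed to be given. With it, the test in \lineref{extrho_if} amounts to checking whether any $x\in N^{\propag+}_\MS[u]$ has $\mathrm{lab}(x)\notin\{\bot,\MB\}$, which costs $O(1+\deg^+_{H^{\propag}_\MS}(u))$. Because $V$ is a $\theta$-solution, \lemref{local} gives $\deg^+_{H^{\propag}_\MS}(u)\le\deg_G(u)$. Each vertex enters $Q$ at most once, and $Y$ and $Q$ are kept as boolean arrays. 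The initialization in \lineref{extrho_pre} likewise only needs to look at $N_G(B)$: every vertex outside $N_G[B]$ has $\omega_{V\setminus B}(v)=\omega_V(v)\geqslant\theta$ by locality.

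Summing over all processed $u$, the neighbour scans cost $O(\sum_u(1+\deg_G(u)))=O(n+m)$, and each of these steps performs one evaluation of $\omega$. This gives $O((n+m)\tau_\omega)$ time in total. The space is the graph, $H^{\propag}_\MS$, the arrays $Y$, $Q$ and $\mathrm{lab}$, and the working space of a single $\omega$ evaluation, which is $O(n+m+\kappa_\omega)$.

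The step I expect to be the main obstacle is evaluating $\omega_{V\setminus Y}(w)$ in $O(\tau_\omega)$ time without materializing $V\setminus Y$ as a fresh set for every call. My plan is to pass the complement implicitly through the boolean array for $Y$. By locality the evaluation only reads $N_G(w)$, so this fits within the $\tau_\omega$ bound under the paper's cost model. A second subtlety is resetting the arrays between calls without paying $O(n)$ each time. I will handle this by undoing only the entries that were touched, whose number is within the same $O(n+m)$ budget.
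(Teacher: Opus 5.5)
Your proposal is correct and follows essentially the same argument as the paper: each vertex enters $Q$ at most once, locality confines the scan in \lineref{extrho_for} to $N_G(u)$, and the test in \lineref{extrho_if} costs $O(\deg_G(u))$, so everything sums to $O((n+m)\tau_\omega)$ time in $O(n+m+\kappa_\omega)$ space. The paper simply pays $O(n\tau_\omega)$ for initialization and leaves the rest implicit; your additional implementation details are sound refinements of that, but this lemma's bound does not need them. These details are the label array for seed-subset membership, restricting the initialization to $N_G(B)$, and resetting only the touched entries.
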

\begin{proof}
  %
  For the correctness, see the discussion on \algref{extrho} in \secref{mono_alg}. Here we analyze the complexity.
  
In \algref{extrho}, we maintain a subset $Q$ of vertices.
We may assume that it takes $O(1)$ time
to insert a vertex into $Q$ or to delete a vertex from $Q$ 
since $Q$ can be implemented by a queue or a stack.
In \lineref{extrho_pre},
to initialize $Q$, we evaluate $\omega_{V\setminus Y}(u)$
for each $u \in V \setminus Y$,
which can be done in $O(n\tau_\omega)$ time.

We claim that, during the algorithm, 
each vertex $v$ 
is inserted into $Q$ at most once.
It is inserted into $Q$
in Line~\ref{line:extrho_pre} or~\ref{line:extrho_if_Q_update}
when $\omega_{V\setminus Y}(v) \not\geqslant \theta$ holds.
Once $v$ is inserted into $Q$,
it continues to belong to $Q$
until it is picked up as $u$ in \lineref{extrho_while_select}.
After that, \algref{extrho} halts
or it is inserted into $Y$
and is not inserted to $Q$ again, showing the claim. 
We see that a vertex $v$ is inserted into $Q$
when $\omega_{V\setminus Y}(v) \not\geqslant \theta$ holds
for the first time.
This can happen only when a neighbor $u\in N_G(v)$
is added to $Y$ in \lineref{extrho_insertY}
since $\omega$ is local monotone.
Then in \lineref{extrho_for},
it suffices to search 
$N_G(u) \setminus (Y \cup Q)$,
instead of $V \setminus (Y \cup Q)$,
for vertices $w$ such that
$\omega_{V\setminus Y}(w)\not\geqslant\theta$.
The for-loop of \lineref{extrho_for}
takes $O(n+m\tau_\omega)$ time in all. 

%

In \lineref{extrho_if}, we decide 
whether there is $\MB'\in\bbB$ $(\MB'\ne\MB)$
such that $N^{\propag+}_\MS[u]\cap B'\neq\emptyset$ for some $B'\in\MB'$, which can be done in $O(\deg_G(u))$ time. 
Summing over all vertices, the total time 
is $O(n+m)$.
%
We see that \algref{extrho} runs in $O((n+m)\tau_\omega)$ time.


For the space complexity,
we use $O(n+m)$ space to store the underlying graph $G$,
the propagation digraph $H^{\propag}_\MS$
and sets such as $Y$ and $Q$.
We also use $\kappa_\omega$ space
to compute $\omega_S(u)$
for a vertex $u$ and a subset $S\subseteq V$.
Thus, \algref{extrho} runs in $O(n+m+\kappa_\omega)$ space.
\end{proof}

\begin{lem}
  \label{lem:local_minrs}
  Given a local monotone system $\MS=\MS(V, \MZ, \omega, \theta; G)$ and the propagation graph $H^{\propag}_\MS$,
  if $V$ is a $\theta$-solution, then
  \algref{local_minrs} runs correctly in
  $O((n+m)n\tau_\omega)$ time
  and $O(n+m+\kappa_\omega)$ space.
\end{lem}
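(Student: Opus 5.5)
Correctness was already argued in \secref{mono_alg}: each iteration builds a 1-SR digraph of the current pointed partition (by \lemref{sigma} and the correctness part of \lemref{1-br_local}), $\bbB[F]$ is again pointed by \lemref{pointed}(ii), and at termination every seed-subset node is isolated, so its closure is a MinRS by \lemref{pointed}(i) and every MinRS appears by \lemref{minrs_pointed}. The one point to check is the skip of nodes with $\wasInTree=\true$. Such a node is the contraction of an in-tree WCC of the previous 1-SR digraph, so by \lemref{pointed}(iii) it has no outgoing arc in the SR digraph of the new partition. Skipping it therefore still yields a valid 1-SR digraph. Also, $\sigma(\cup\bbB_i)$ is set to the root or a cycle node's stored seed, and the proof of \lemref{pointed}(ii) shows this seed lies in $\Sigma(\cup\bbB_i)$. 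For the final output, each $\rho(\MB)$ was returned by \algref{extrho} at the iteration in which $\MB$ became isolated or a root, so I would store it then; alternatively it can be recomputed once by \algref{rho}.

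For the time bound I would avoid multiplying the $O(\log n)$ iteration count by the number of calls. Instead I would count the total number of calls to \algref{extrho} over the whole run and charge each call $O((n+m)\tau_\omega)$ by \lemref{1-br_local}. In iteration $t$, \algref{extrho} is called only for nodes with $\wasInTree=\false$. By the update rule, these are exactly the nodes that came from functional WCCs of the previous 1-SR digraph, or the initial singletons when $t=1$. Let $H^{(t)}$ be the 1-SR digraph built in iteration $t$ restricted to these nodes, and let $c_t$ be the number of calls. Then $c_1=|\MB^{\propag}_\MS|\le n$, and $c_{t+1}\le\mu(H^{(t)})$. \lemref{pointed}(iv) is stated for the full 1-SR digraph, but the in-tree nodes carried over are isolated roots and do not contribute to $\mu$, so it gives $\mu(H^{(t+1)})\le\mu(H^{(t)})/2$. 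Hence $c_{t+1}\le n/2^{t-1}$, and $\sum_t c_t=O(n)$ is a geometric series. This yields $O(n)$ calls and $O((n+m)n\tau_\omega)$ time for all calls. The geometric decay is the main subtlety; if it fails, the crude bound of $O(n)$ calls per iteration over $O(\log n)$ iterations gives only $O((n+m)n\log n\,\tau_\omega)$.

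The remaining bookkeeping in each iteration costs $O(n)$ per iteration: computing WCCs of $H=(\bbB,F)$ with $|F|\le|\bbB|\le n$, classifying each WCC as an in-tree or functional, finding a root or cycle node, and merging seed-subset nodes while updating $\sigma$ and $\wasInTree$. With $O(\log n)$ iterations this adds $O(n\log n)$, which is dominated. Line~\ref{line:extrho_if} also needs each vertex mapped to its seed and each seed to its current seed-subset node. I would maintain these maps with union-find or relabel everything in $O(n)$ per iteration, again $O(n\log n)$ in total.

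For space, \algref{extrho} calls are sequential and each uses $O(n+m+\kappa_\omega)$ space that can be reused. The partition, $F$, the labels, and the stored closures total $O(n)$, because the closures of distinct final nodes are disjoint MinRSs and the stored sets can be freed when merged. Together with $G$ and $H^{\propag}_\MS$, this gives $O(n+m+\kappa_\omega)$ space.
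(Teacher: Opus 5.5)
Your proposal is correct and follows essentially the paper's argument. Correctness is deferred to the discussion in \secref{mono_alg}, and the $\log n$ factor is removed by showing that the number of nodes with $\wasInTree=\false$, and hence the number of calls to \algref{extrho}, decays geometrically; this gives $O(n)$ calls of cost $O((n+m)\tau_\omega)$ each, with the bookkeeping and the space bound dominated exactly as you describe. The only difference is cosmetic: the paper gets the halving directly, since every functional WCC consists of at least two $\false$-nodes, rather than routing it through $\mu$ and \lemref{pointed}(iv), and your remarks on the skip rule, on $\sigma$, and on storing the output closures make explicit points the paper leaves implicit.
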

\begin{proof}
  For the correctness, see the discussion on \algref{local_minrs} in \secref{mono_alg}. Here we analyze the complexity.
  
  The most time-consuming part is the for-loop
  from \lineref{for_MB_local_minrs} to~\ref{line:local_minrs_returned}.
  A rough analysis may be as follows;
  the for-loop takes $O(|\bbB|(n+m)\tau_\omega)=O((n+m)n\tau_\omega)$ time since \algref{extrho} runs in $O((n+m)\tau_\omega)$ time by
  \lemref{1-br_local}.
  The repeat-loop from \lineref{local_minrs_repeat_start} to~\ref{line:local_minrs_repeat_end}
  iterates $O(\log n)$ times by \lemref{pointed}(iv),
  and the overall time complexity is 
  $O((n+m)n\log n\cdot\tau_\omega)$.


  We claim that the $O(\log n)$ factor
  can be dropped from the above time bound.
  Observe that, in the for-loop
  from \lineref{for_MB_local_minrs}
  to~\ref{line:local_minrs_returned},
  \algref{extrho} is executed only for
  a seed-subset node $\MB\in\bbB$
  with $\wasInTree(\MB)=\false$.
  Let $s$ denote 
  the number of seed-subset nodes
  $\MB$ with $\wasInTree(\MB)=\false$.
  We can implement the for-loop
  so that it iterates precisely $s$ times;
  e.g., by maintaining the $s$ seed-subset nodes by queue.
  
  It holds that $s=n$ 
  in the first iteration of the repeat-loop,
  and in subsequent iterations,
  $s$ is the number of functional WCCs
  in the 1-SR digraph of the last iteration.
  Similarly to \lemref{pointed}(iv),
  we can show that the number $s$ is at most halved
  in the next iteration.
  Therefore, the for-loop
  from \lineref{for_MB_local_minrs}
  to~\ref{line:local_minrs_returned} iterates
  at most $n+\frac{1}{2}n+(\frac{1}{2})^2n+\dots=O(n)$ times,
  indicating that \algref{extrho} is called
  $O(n)$ times over an execution of \algref{local_minrs}, showing the claim. 
  
  In \lineref{if_F_not_empty_local_minrs},
  we can compute required WCCs of $H=(\bbB,F)$
  by running a graph search algorithm
  from each seed-subset node $\MB\in\bbB$
  with $\wasInTree(\MB)=\false$.
  The size of WCCs is $O(s)$ and 
  this part takes $O(s)$ time per iteration
  of the repeat-loop,
  and thus $O(n)$ time over the algorithm. 
  One can readily see that the time complexity of the remaining parts
  is $O((n+m)n\tau_\omega)$.

For the space complexity,
we use $O(n+m)$ space to store the underlying graph $G$,
the propagation digraph $H^{\propag}_\MS$
and sets such as $Y$ and $Q$ for \algref{extrho},
and we use $O(n)$ space to store the 1-SR digraph $H=(\bbB,F)$ and the pointed partition $\bbB$.
In addition,
we use $O(n + m + \kappa_\omega)$ space
in each execution of \algref{extrho}.
Thus, the overall space complexity is $O(n+m+\kappa_\omega)$.
\end{proof}

\paragraph{Proof of \thmref{main_local_system}.}
Based on the discussion at the beginning of this subsection,
it suffices to consider the case
when $V$ is a $\theta$-solution.
We can construct the propagation graph
and compute the SCC decomposition in 
$O((n+m)\tau_\omega)$ time and $O(n+m+\kappa_\omega)$ space.
By \lemref{local_minrs},
\algref{local_minrs} outputs all MinRSs of $V$ in
$O((n+m)n\tau_\omega)$ time and $O(n+m+\kappa_\omega)$ space,
showing \thmref{main_local_system}. 
\myqed

\subsection{Efficiency Enabled by In-Dominating Seed Property}\label{sec:in-dominating_property}



For a digraph $D=(V,A)$,
a subset $S \subseteq V$ is an \emph{in-dominating set} in $D$
if, for every vertex $u \in V \setminus S$,
there is a vertex $v \in S$ such that $(u,v) \in A$,
i.e., $N^+_D(u) \cap S \neq \emptyset$ holds.
We say that a monotone system $\MS=\MS(V,\MZ,\omega,\theta)$, 
not necessarily local,
satisfies the \emph{in-dominating seed property}
if the set $\widehat{\MB^\propag_\MS}$ of seed vertices
is an in-dominating set
in the propagation graph $H^{\propag}_\MS$.

Let $\MB\subseteq \MB^\propag_\MS$ be a subset of seeds
in the propagation digraph $H^{\propag}_\MS$.
We define $C_\MB$ to be the set of elements $u\in V$
such that, for any closed out-neighbor $v\in N^{\propag+}_\MS[u]$
in the propagation digraph,
$v$ belongs to $\cup\MB$ whenever $v$ is a seed vertex,
that is, 
\begin{align}
  C_\MB
  &\triangleq
  \{u \in V \mid
  N^{\propag+}_\MS[u] \cap (\widehat{\MB^\propag_\MS})
  \subseteq
  (\widehat{\MB})
  \}
  = \{u \in V \mid
  N^{\propag+}_\MS[u] \cap (\widehat{\MB^\propag_\MS}\setminus \widehat{\MB})
  = \emptyset
  \}
  \nonumber\\
  &= \{ u \in V \mid
  \text{\(N^{\propag+}_\MS[u] \cap (\widehat{\MB'}) = \emptyset \) for any \(\MB' \in \bbB \setminus \{\MB\}\)} \}
  \nonumber\\
  &= \{ u \in V \mid
  \text{\(N^{\propag+}_\MS[u] \cap B' = \emptyset\)
  for any \(B'\in\MB'\) with \(\MB'\neq\MB\)} \}.\label{eq:C_MB}
\end{align}
Obviously we have $(\cup\MB)\subseteq C_\MB$,
and $C_\MB\setminus(\cup\MB)$ is the set of
elements $u\in V$ such that every out-neighbor of $u$
that is a seed vertex 
belongs to a seed in $\MB$. 

Given a monotone system $\MS$,
let $\bbB$ be a partition of the seed set $\MB^{\propag}_\MS$. 
The following lemma says that,
if $\MS$ satisfies the
in-dominating seed property, then
$C_\MB\cap C_{\MB'}=\emptyset$ holds for
any $\MB,\MB'\in\bbB$. 
\begin{lem}
  \label{lem:C_disjoint}
  For a monotone system $\MS=\MS(V,\MZ,\omega,\theta)$,
  let $\MB,\MB'\subseteq\MB^{\propag}_\MS$ be subsets of seeds
  such that $\MB\cap\MB'=\emptyset$.
  If $\MS$ satisfies the in-dominating seed property,
  then it holds that $C_\MB\cap C_{\MB'}=\emptyset$. 
\end{lem}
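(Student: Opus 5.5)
Suppose for contradiction that some $u\in C_\MB\cap C_{\MB'}$ exists. By the first form of the definition \eqref{C_MB}, membership in $C_\MB$ means $N^{\propag+}_\MS[u]\cap(\cup\MB^{\propag}_\MS)\subseteq\cup\MB$, and membership in $C_{\MB'}$ means $N^{\propag+}_\MS[u]\cap(\cup\MB^{\propag}_\MS)\subseteq\cup\MB'$. Seeds are bottom SCCs and hence pairwise disjoint as vertex sets. Therefore $\MB\cap\MB'=\emptyset$ implies $(\cup\MB)\cap(\cup\MB')=\emptyset$. Intersecting the two inclusions gives
\[
N^{\propag+}_\MS[u]\cap(\cup\MB^{\propag}_\MS)\subseteq(\cup\MB)\cap(\cup\MB')=\emptyset,
\]
so the closed out-neighborhood of $u$ contains no seed vertex at all.

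Next I would split into two cases on $u$. If $u$ is itself a seed vertex, then $u\in N^{\propag+}_\MS[u]\cap(\cup\MB^{\propag}_\MS)$, which contradicts the emptiness just derived. If $u$ is not a seed vertex, then $u\in V\setminus(\cup\MB^{\propag}_\MS)$. The in-dominating seed property then supplies a seed vertex $v$ with $(u,v)\in A^{\propag}_\MS$. This $v$ lies in $N^{\propag+}_\MS(u)\cap(\cup\MB^{\propag}_\MS)$, again a contradiction. Hence $C_\MB\cap C_{\MB'}=\emptyset$.

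The only point needing any care is the disjointness of $\cup\MB$ and $\cup\MB'$. This follows because distinct bottom SCCs are vertex-disjoint, so disjointness of the seed families carries over to disjointness of their vertex unions. Apart from that, the argument is a direct unwinding of the definitions, and I do not expect a genuine obstacle.
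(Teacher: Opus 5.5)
Your proof is correct and follows the paper's argument essentially step for step. You intersect the two defining inclusions to get $N^{\propag+}_\MS[u]\cap(\cup\MB^{\propag}_\MS)=\emptyset$, rule out $u$ being a seed vertex because $u\in N^{\propag+}_\MS[u]$, and otherwise contradict the in-dominating seed property. One small addition on your side: you state explicitly that $(\cup\MB)\cap(\cup\MB')=\emptyset$ holds because distinct seeds are vertex-disjoint bottom SCCs, which the paper uses without comment.
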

\begin{proof}
  Suppose $C_\MB\cap C_{\MB'}\ne\emptyset$ for contradiction.
  Let $v\in C_\MB\cap C_{\MB'}$.
  By the definition of $C_\MB$ and $C_{\MB'}$,
  we have $N^{\propag+}_\MS[v] \cap (\widehat{\MB^\propag_\MS})\subseteq ((\cup\MB)\cap(\cup\MB'))=\emptyset$,
  where the last equality holds by $\MB\cap\MB'=\emptyset$.
  We claim that
  $v$ is not a seed vertex (i.e., $v\in(\cup\MB^\propag_\MS)$),
  since otherwise
  $N^{\propag+}_\MS[v] \cap (\widehat{\MB^\propag_\MS})$
  would not be empty.
  In the propagation digraph, by $N^{\propag+}_\MS[v] \cap (\widehat{\MB^\propag_\MS})=\emptyset$,
  no out-neighbor of $v$ would be a seed vertex,
  indicating that $\cup\MB^{\propag}_\MS$
  is not in-dominating, a contradiction. 
\end{proof}

Observe that, if $\MS$ does not satisfy the in-dominating seed property, then $C_\MB\cap C_{\MB'}$ is not empty;
there is a vertex $v\in V$ such that no out-neighbor
is a seed vertex, and $v$ belongs
to both $C_\MB$ and $C_{\MB'}$ by the definition. 

\invis{
The following lemma is about the iteration times
of the while-loop in \algref{extrho}.
The lemma is rather general in the sense that
we do not assume that $\MS$ is local or
satisfies the in-dominating seed property.

\begin{lem}
  \label{lem:C_loop_bound}
  Let $\MS=\MS(V,\MZ,\omega,\theta)$ be
  a 
  monotone system; 
  $\bbB$ be a pointed partition of $\MB^{\propag}_\MS$;
  $\MB\in\bbB$ be a seed subset; 
  and $B \in \Sigma(\MB)$ be a seed.
  In the execution of \algref{extrho}
  for the input $(\MS, \bbB, \MB, B)$,
  the while-loop from \lineref{extrho_while} to~\ref{line:extrho_if_Q_update}
  is repeated at most $|C_\MB|+1$ times.
\end{lem}
\begin{proof}
  Let $u$ denote any vertex chosen
  in \lineref{extrho_while_select}.
  It suffices to show that
  the algorithm terminates if $u\not\in C_\MB$
  since each vertex is inserted into $Q$
  at most once; see the proof for \lemref{1-br_local}.
  The algorithm terminates
  only if the condition in \lineref{extrho_if}
  is satisfied, that is,
  there are $\MB'\in\bbB$ $(\MB'\ne\MB)$ and
  $B'\in\MB'$ such that
  $N^{\propag+}_\MS[u]\cap B'\ne\emptyset$.
  In this case, $u\not\in C_\MB$ holds by~\eqref{C_MB}.
\end{proof}
}

\paragraph{Proof of \thmref{in-dominating}.}
In \algref{local_minrs}, 
recall that the repeat-loop from
\lineref{local_minrs_repeat_start}
to~\ref{line:local_minrs_repeat_end}
iterates $O(\log n)$ times by \lemref{pointed}(iv).
We show that the most time-consuming part,
the for-loop
from \lineref{for_MB_local_minrs} to~\ref{line:local_minrs_returned},
can be done in $O(n+m)$ time per iteration of the repeat-loop. 

Let $\bbB=\{\MB_1,\MB_2,\dots,\MB_z\}$ denote the
pointed partition at the beginning of any iteration
of the repeat-loop. 
For $v\in V$, let $\ell(v)\coloneqq\{i\in[1,z]\mid N^{\propag+}_\MS[v]\cap(\cup\MB_i)\ne\emptyset\}$.
The system $\MS$ satisfies the in-dominating property
and hence $|\ell(v)|\ge1$ holds for all $v\in V$,
and $v\in C_{\MB_i}$ holds
when and only when $\ell(v)=\{i\}$.
We compute $\ell(v)$ for all $v\in V$
as a preprocessing of the for-loop,
which can be done in $O(n+m)$ time.

In each iteration of the for-loop, 
we execute \algref{extrho} for each $\MB_i\in\bbB$
such that $\wasInTree(\MB_i)=\false$.
We claim that
the time complexity is $O(|C_{\MB_i}|+|E[C_{\MB_i}]|+|E(C_{\MB_i})|)$, where $E[C_{\MB_i}]$ (resp., $E(C_{\MB_i})$)
is the subset of edges in the underlying graph $G=(V,E)$
such that both endpoints belong to $C_{\MB_i}$
(resp., exactly one of the endpoints belong
to $C_{\MB_i}$).
To show the claim, in \algref{extrho}, 
recall that each vertex is inserted into $Q$
at most once; see the proof for \lemref{1-br_local}.
Let $u$ denote any vertex chosen
in \lineref{extrho_while_select}. 
The algorithm terminates 
if the condition in \lineref{extrho_if}
is satisfied, that is,
there are $\MB_j\in\bbB$ $(\MB_j\ne\MB_i)$ and
$B'\in\MB_j$ such that
$N^{\propag+}_\MS[u]\cap B'\ne\emptyset$.
In this case, $u\not\in C_{\MB_i}$ holds by~\eqref{C_MB}.
We can decide whether $u\in C_{\MB_i}$ or not in constant time
using $\ell(u)$; if $\ell(u)=\{i\}$,
then $u\in C_{\MB_i}$ holds,
and otherwise, $u\notin C_{\MB_i}$ holds. 
We can run \algref{extrho} runs in linear time
with respect to $|C_{\MB_i}|$ and
the number of arcs incident to $C_{\MB_i}$,
that is $|E[C_{\MB_i}]|+|E(C_{\MB_i})|$.

Finally we see that $\sum_{i=1}^zO(|C_{\MB_i}|+|E[C_{\MB_i}]|+|E(C_{\MB_i})|)=O(n+m)$ since $C_{\MB_i}\cap C_{\MB_j}=\emptyset$, $i,j\in[1,z]$ $(i\ne j)$ by \lemref{C_disjoint},
as required. 

The space complexity follows by \thmref{main_local_system}. 
\myqed


\subsection{Application Examples}
\label{sec:examples_local_monotone_systems}
We show that the MinRSs of a $k$-core $G=(V,E)$
can be computed in $O((n+m)\log n)$ time,
as a proof for \thmref{deg-at-least-k}.
For this, we prove that the local monotone system
that is defined somehow based on $k$-cores 
satisfies the in-dominating seed property.
Then we can apply \thmref{in-dominating} to
the system. 

There are a lot of extensions of $k$-cores~\cite{MGPV.2020}.
It is easy to see that each extension
determines a similar local monotone system,
and that, by \thmref{main_local_system}, 
we can enumerate all MinRSs of
an extended $k$-core in $O((n+m)n)$ time.
So far we have not found any extensions
such that the bounds can be improved
by using \thmref{in-dominating}; 
in these cases, the in-dominating seed property
does not hold.

This paper has focused on enumerating MinRSs of a ground set $V$
of a local monotone system $\MS(V,\MZ,\omega,\theta;G)$. 
We show that, using the notion of SD property~\cite{TH.2025},
we can enumerate all $\theta$-solutions efficiently.
In particular, for standard $k$-cores in undirected graphs,
the resulting algorithm achieves a better
time complexity than Boley et al.'s framework~\cite{BHPW.2010}. 

\subsubsection{Enumeration of MinRSs of \texorpdfstring{$k$}{k}-Cores}
For a positive integer $k$,
let $G=(V,E)$ be a $k$-core graph;
$\MZ=(\bbZ,\le)$; and
$\omega$ denote the function
such that $\omega_S(u)=\deg_{G[S\cup\{u\}]}(u)$
for $S\subseteq V$ and $u\in V$.
We see that $\omega$ is local monotone.
We denote $\MS=\MS(V,\MZ,\omega,k;G)$,
which is a local monotone system
such that $S\subseteq V$ is a $k$-solution
if and only if $S$ induces a $k$-core graph.

We show that $\MS$ satisfies the in-dominating seed
property. Let us examine the structure
of the propagation graph $H^{\propag}_\MS = (V, A^{\propag}_\MS)$, where 
the arc set 
is given by
$A^{\propag}_\MS =\{(u,v)\in V\times V\mid \deg_{G - u}(v) < k\}$. 
The following lemma is obvious and
will be used soon. 
\begin{lem}
  \label{lem:deg-at-least-k_degree_condition}
  For a positive integer $k$,
  let $G=(V,E)$ be a $k$-core graph. 
  For any distinct vertices $u,v \in V$,
  $\deg_{G - u}(v) < k$ holds
  if and only if
  $\deg_G(v) = k$ and $u \in N_G(v)$.
\end{lem}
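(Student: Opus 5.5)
The plan is to prove the two directions separately, using only the definition of $\deg$ in induced subgraphs and the hypothesis that $G$ is a $k$-core graph, i.e., $\deg_G(x)\ge k$ for every $x\in V$. The key identity is that, for distinct $u,v$, the degree of $v$ drops by exactly the indicator of adjacency: $\deg_{G-u}(v)=\deg_G(v)-1$ if $u\in N_G(v)$, and $\deg_{G-u}(v)=\deg_G(v)$ otherwise. This holds because removing $u$ deletes exactly the edges incident to $u$, and among those only the edge $uv$, if present, is incident to $v$.

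For sufficiency, assume $\deg_G(v)=k$ and $u\in N_G(v)$. The identity then gives $\deg_{G-u}(v)=k-1<k$.

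For necessity, assume $\deg_{G-u}(v)<k$. If $u\notin N_G(v)$, the identity gives $\deg_{G-u}(v)=\deg_G(v)\ge k$, a contradiction, so $u\in N_G(v)$. Then $\deg_G(v)-1<k$, i.e., $\deg_G(v)\le k$. Combined with $\deg_G(v)\ge k$, this forces $\deg_G(v)=k$.

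There is no genuine obstacle here; the only point requiring care is using the $k$-core hypothesis, which supplies the lower bound $\deg_G(v)\ge k$ needed to turn $\le k$ into $=k$, and to exclude the non-adjacent case. Distinctness of $u$ and $v$ guarantees that $v$ remains a vertex of $G-u$, so $\deg_{G-u}(v)$ is well-defined.
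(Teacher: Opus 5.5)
Your proof is correct and follows essentially the same route as the paper: both use $\deg_G(v)\ge k$ from the $k$-core hypothesis to force $u\in N_G(v)$, and then use $\deg_{G-u}(v)=\deg_G(v)-1$ together with $\deg_{G-u}(v)<k$ to conclude $\deg_G(v)=k$. Stating the adjacency-indicator identity explicitly simply spells out the step that the paper's terser argument leaves implicit.
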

\begin{proof}
  The sufficiency is obvious. 
  For the necessity,
  we see that $\deg_G(v)\ge k$ holds since $G$ is a $k$-core graph. 
  It holds that $u\in N_G(v)$ by
  $\deg_{G-u}(v)<k\le\deg_G(v)$.
  Further, $\deg_G(v)=k$ must hold
  by $\deg_{G-u}(v)=\deg_G(v)-1$. 
\end{proof}

We partition the vertex set $V$ of a $k$-core graph
$G=(V,E)$ into $V=V_1\sqcup V_2\sqcup V_3$
such that
$V_1 := \{ v \in V \mid \deg_G(v) > k \text{ and } \deg_G(u) > k \text{ for any } u \in N_G(v) \}$;
$V_2 := \{ v \in V \mid \deg_G(v) > k \text{ and } \deg_G(u) = k \text{ for some } u \in N_G(v) \}$; and
$V_3 := \{ v \in V \mid \deg_G(v) = k \}$.
The following lemma analyzes
the structure of the propagation graph
$H^{\propag}_\MS = (V, A^{\propag}_\MS)$.
In particular, (v) says that
$\MS$ satisfies the in-dominating seed property.

\begin{lem}
  \label{lem:deg-at-least-k_propagation_graph}
  For a positive integer $k$,
  let $G=(V,E)$ be a $k$-core graph.
\begin{itemize}
\item[\rm(i)]
A vertex $v$ has an incoming arc in $H^{\propag}_\MS$
if and only if $v\in V_3$.
\item[\rm(ii)]
Each vertex $v\in V_1$ is isolated in $H^{\propag}_\MS$;
in particular, $\{ v \}$ is a seed.
\item[\rm(iii)]
No vertex in $V_2$ belongs to a bottom SCC of $H^{\propag}_\MS$.
\item[\rm(iv)]
It holds that
$\widehat{\MB^{\propag}_\MS}=V_1\cup V_3$.
\item[\rm(v)]
$\widehat{\MB^{\propag}_\MS}$ is an in-dominating set in
$H^{\propag}_\MS$.
\end{itemize}
\end{lem}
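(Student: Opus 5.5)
The whole argument rests on \lemref{deg-at-least-k_degree_condition}. It says that $(u,v)\in A^{\propag}_\MS$ if and only if $uv\in E$ and $\deg_G(v)=k$, i.e., $v\in V_3$. So the propagation digraph is obtained from $G$ by orienting every edge towards each of its endpoints of degree exactly $k$. An edge between two vertices of $V_3$ becomes a pair of opposite arcs, an edge with exactly one endpoint in $V_3$ becomes one arc into that endpoint, and an edge with no endpoint in $V_3$ disappears. I will derive (i)--(v) in order from this description.

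For (i): if $v$ has an incoming arc $(u,v)$, then $v\in V_3$ directly. Conversely, if $v\in V_3$, then $\deg_G(v)=k\ge1$, so $v$ has a neighbor $u$, and $(u,v)$ is an arc. For (ii): a vertex $v\in V_1$ is not in $V_3$, so by (i) it has no incoming arc. An outgoing arc $(v,w)$ would need $w\in N_G(v)\cap V_3$, which the definition of $V_1$ excludes. Hence $v$ is isolated, $\{v\}$ is an SCC with no outgoing arcs, and so $\{v\}$ is a seed.

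For (iii): a vertex $v\in V_2$ has an out-neighbor $w\in N_G(v)\cap V_3$, but by (i) it has no in-neighbor. Its SCC is therefore $\{v\}$, and this SCC has the outgoing arc $(v,w)$, so it is not bottom.

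For (iv), the inclusion $\widehat{\MB^{\propag}_\MS}\subseteq V_1\cup V_3$ follows from (iii) and the partition $V=V_1\sqcup V_2\sqcup V_3$, and $V_1\subseteq\widehat{\MB^{\propag}_\MS}$ is (ii). The remaining step, which I expect to be the main obstacle, is showing that every $v\in V_3$ lies in a bottom SCC. Take $v\in V_3$ and any arc $(v,w)$. Then $w\in V_3$ and $vw\in E$, so $(w,v)$ is also an arc. Thus every arc leaving a $V_3$ vertex is reciprocated, and every vertex reachable from $v$ lies in $V_3$ and can reach $v$ back by induction along the path. Hence every vertex reachable from $v$ is in $v$'s SCC, so that SCC has no outgoing arcs and is bottom. (If $v$ has no out-arcs at all, then $\{v\}$ is trivially a bottom SCC.)

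For (v): every vertex outside $V_1\cup V_3$ lies in $V_2$, and each $v\in V_2$ has an arc $(v,w)$ with $w\in V_3\subseteq\widehat{\MB^{\propag}_\MS}$ by (iv). So the seed vertices form an in-dominating set, which is exactly the in-dominating seed property.
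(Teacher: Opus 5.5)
Your proof is correct and follows the paper's argument essentially step by step: the arc characterization via the preceding degree lemma, then (i)--(v) in the same order and by the same reasoning. The one difference is in (iv), where you make explicit what the paper leaves terse. The paper only notes that arcs cannot leave $V_3$. You add that every arc leaving a $V_3$ vertex is reciprocated, so everything reachable from $v\in V_3$ lies in the SCC of $v$, which is therefore bottom.
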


\begin{proof}
  By \lemref{deg-at-least-k_degree_condition},
  the arc set of the propagation graph
  $H^{\propag}_\MS$ is $A^{\propag}_\MS=\{(u,v)\in V\times V
\mid \deg_G(v)=k\ \text{and}\ u\in N_G(v)\}$.

\noindent
(i) If there is $(u,v)\in A^{\propag}_\MS$, then
$\deg_G(v)=k$ holds by the definition,
and hence $v\in V_3$.
If $v\in V_3$, then $\deg_G(v)=k\geq 1$
and hence $v$ has a neighbor $u$ in $G$.
Then $(u,v)\in A^{\propag}_\MS$ holds.

\noindent
(ii)
Let $v\in V_1$.
Since every neighbor of $v$ has degree at least $k+1$,
no arc leaves $v$.
Moreover, by (i), no arc enters $v$. Hence $v$ is isolated in $H^{\propag}_\MS$,
and $\{ v \}$ is a seed.

\noindent
(iii)
Let $v\in V_2$.
By definition, $v$ has a neighbor $u\in V_3$,
and hence $(v,u)\in A^{\propag}_\MS$.
On the other hand, by (i), no arc enters $v$. Therefore $v$ cannot belong to any seed. 

\noindent


\noindent
(iv)
By (ii), $\{u\}$ is a seed for each $u\in V_1$. 
By (iii), no vertex in $V_2$ belongs to a seed.
For $u\in V_3$, if $u$ has no outgoing arcs
in $H^{\propag}_\MS$, then $\{u\}$ is a seed.
Otherwise, there is an arc $(u,v)$ in $H^{\propag}_\MS$,
and $v\in V_3$ holds by (i).
This indicates the arc $(v,u)$ also exists
in $H^{\propag}_\MS$, and hence $u$ and $v$
belong to the same SCC. This SCC should be a seed
since no arc from $V_3$ to $V_j$, $j\in[1,2]$
can exist by (i). 

\noindent
(v)
For $u\in V\setminus (V_1\cup V_3)=V_2$,
there is an arc $(u,v)\in A^{\propag}_\MS$
such that $v\in V_3$. 
We see that $v\in V_3\subseteq (\widehat{\MB^{\propag}_\MS})$ holds, where the latter relationship is
due to (iv). 
\end{proof}


\paragraph{Proof of \thmref{deg-at-least-k}.}

By \lemref{deg-at-least-k_propagation_graph}(v),
the local monotone system $\MS$ satisfies
the in-dominating seed property.
By \thmref{in-dominating},
we can enumerate all MinRSs of $G$
in $O((n+m)\log n\cdot\tau_\omega)$ time
and $O(n+m+\kappa_\omega)$ space. 


We show that
we can implement the algorithm so that
$\tau_\omega=O(1)$ and $\kappa_\omega=O(n+m)$.
Observe that, in the MinRS enumeration algorithm,
we compute $\omega_S(v)=\deg_S(v)$
for $(S,v)\in 2^V\times V$
in Lines~\ref{line:extrho_pre} and~\ref{line:extrho_if_Q_update} of 
\algref{extrho}. Specifically,
we compute $\omega_{V\setminus Y}(v)$
for a subset $Y\subseteq V$ of vertices,
and the set $Y$ is non-decreasing
with respect to set-inclusion, during
a single execution of \algref{extrho}.

During \algref{extrho}, we maintain
the degree $\omega_{V\setminus Y}(v)$ for
the present $Y\subseteq V$ and each $v\in V$
in an array.
Clearly,
if a vertex $u$ is added to $Y$,
then we decrease by one
the entry of each neighbor $v\in N_G(u)$,
which can be done in $O(1)$ time per $v$,
indicating $\tau_\omega=O(1)$. 
This array is first initialized
in the preprocessing of \algref{local_minrs}. 
\algref{extrho} may update the array
during its execution, and when
it terminates, 
we restore the array in linear time
with respect to the number of visited vertices.
It is easy to see $\kappa_\omega=O(n+m)$.
The overall complexity  
is not affected in the big-O notation. 
\myqed

\subsubsection{Enumeration of MinRSs of Extended \texorpdfstring{$k$}{k}-Cores}
Our framework is applicable to extensions of $k$-cores. 
We take up weighted $k$-cores~\cite{GSH.2012};
multi-layer $\bm{k}$-cores~\cite{GBG.2017}; and
$(k,\ell)$-cores~\cite{GTV.2013} here.
See~\cite{MGPV.2020} for other examples.

\paragraph{Weighted $k$-Cores~\cite{GSH.2012}.}
We denote by $\bbR_{\ge 0}$ the set of non-negative reals. 
Let $\MN=[G,w]$ be a network that consists of
an undirected graph $G=(V,E)$
and an edge-weight function $w:E\to\bbR_{\ge 0}$.
The \emph{weighted degree of a vertex $v\in V$}
is defined to be $\sum_{u\in N_G(v)} w(uv)$.
Let $k\in\bbR_{\ge 0}$ be a constant.
A subset $S\subseteq V$ is the \emph{weighted $k$-core} (\emph{of $\MN$})
if it is the maximal subset such that
every weighted degree in $G[S]$ is no less than $k$. 
The graph $G$ is a \emph{weighted $k$-core graph}
if $G$ itself is the weighted $k$-core of $G$.

Let $\MZ=(\bbR_{\ge0},\le)$; and
$\omega:2^V\times V\to\bbR_{\ge0}$ denote the function such that
$\omega_S(v)=\sum_{u\in N_{S\cup\{v\}}(v)} w(uv)$, $S\subseteq V$, $v\in V$.
One readily sees that $\omega$ is local monotone.
The system $\MS=\MS(V,\MZ,\omega,k;\MN)$ is local monotone such that
$S\subseteq V$ is a $k$-solution if and only if
it induces a weighted $k$-core graph. 

We can enumerate MinRSs of a weighted $k$-core in $O((n+m)n)$ time and $O(n+m)$ space
by \thmref{main_local_system},
where we can implement the algorithm
so that $\tau_\omega=O(1)$ and $\kappa_\omega=O(n+m)$
hold, in an analogous manner with
(unweighted) $k$-cores; see the proof of \thmref{deg-at-least-k}. 

The system does not satisfy the in-dominating
seed property in general. 
See \figref{weight-and-sumdegree_counterexample}.
Consider a local monotone system
$\MS=\MS(V,\MZ,\omega,k;\MN)$ such that
$k=2$ and $\MN=[G,w]$ is a network
that corresponds to an edge-weighted graph in (i).
The propagation digraph is shown in (ii).
We see that the in-dominating seed property
does not hold;
the seed set is $\MB^{\propag}_\MS=\{\{v_1\},\{v_3\},\{v_5\}\}$, and hence $\cup\MB^{\propag}_\MS=\{v_1,v_3,v_5\}$.
This set is not in-dominating in the propagation digraph
since there is no outgoing arc from $v_2\notin(\cup\MB^{\propag}_\MS)$
to any of $\cup\MB^{\propag}_\MS=\{v_1,v_3,v_5\}$.

  
\begin{figure}[t!]
  \centering
  \begin{tabular}{cc}
  \begin{tikzpicture}[scale=0.6,>=Latex,
    flow/.style={->, shorten <=2pt, shorten >=2pt}]
    \tikzmath{
      \a = 3.5;
    }
    \coordinate (v1) at (\a/2, -\a/2) node at (v1) [label=below:$v_1$] {};
    \coordinate (v2) at (0, 0) node at (v2) [label=left:$v_2$] {};
    \coordinate (v3) at (\a, 0) node at (v3) [label=right:$v_3$] {};
    \coordinate (v4) at (0, \a) node at (v4) [label=left:$v_4$] {};
    \coordinate (v5) at (\a, \a)  node at (v5) [label=right:$v_5$] {};
    \coordinate (v6) at (0,2*\a) node at (v6) [label=left:$v_6$] {};
    \coordinate (v7) at (\a,2*\a) node at (v7) [label=right:$v_7$] {};
    \draw (v1)--node [midway, left] {$2$}(v2);
    \draw (v1)--node [midway, right] {$2$}(v3);
    \draw (v2)--node [midway, auto] {$2$}(v3);
    \draw (v2)--node [midway, auto] {$2$}(v4);
    \draw (v4)--node [midway, auto] {$1$}(v5);
    \draw (v5)--node [midway, auto] {$1$}(v7);
    \draw (v7)--node [midway, auto] {$2$}(v6);
    \fill[fill=white, draw] (v1) circle (3pt) (v2) circle (3pt);
    \fill[fill=white, draw] (v3) circle (3pt) (v4) circle (3pt);
    \fill[fill=white, draw] (v5) circle (3pt) (v6) circle (3pt);
    \fill[fill=white, draw] (v7) circle (3pt);
  \end{tikzpicture}
  &
  \begin{tikzpicture}[scale=0.6,>=Latex,
    flow/.style={->, shorten <=2pt, shorten >=2pt}]
    \tikzmath{
      \a = 3.5;
    }
    \coordinate (v1) at (\a/2, -\a/2) node at (v1) [label=below:$v_1$] {};
    \coordinate (v2) at (0, 0) node at (v2) [label=left:$v_2$] {};
    \coordinate (v3) at (\a, 0) node at (v3) [label=right:$v_3$] {};
    \coordinate (v4) at (0, \a) node at (v4) [label=left:$v_4$] {};
    \coordinate (v5) at (\a, \a)  node at (v5) [label=right:$v_5$] {};
    \coordinate (v6) at (0,2*\a) node at (v6) [label=left:$v_6$] {};
    \coordinate (v7) at (\a,2*\a) node at (v7) [label=right:$v_7$] {};
    \draw[flow] (v2) to (v4);
    \draw[flow] (v6) to[out=10,in=170] (v7);
    \draw[flow] (v7) to[out=190,in=350] (v6);
    \draw[flow] (v4) to (v5);
    \draw[flow] (v7) to (v5);
    \fill[fill=white, draw] (v1) circle (3pt) (v2) circle (3pt);
    \fill[fill=white, draw] (v3) circle (3pt) (v4) circle (3pt);
    \fill[fill=white, draw] (v5) circle (3pt) (v6) circle (3pt);
    \fill[fill=white, draw] (v7) circle (3pt);
  \end{tikzpicture}
  \\
  (i) An edge-weighted graph
  & (ii) The propagation digraph 
  \end{tabular}
  \caption{A counterexample
    such that the local monotone system
    based on weighted $k$-cores
    does not satisfy the in-dominating seed property.}
  \label{fig:weight-and-sumdegree_counterexample}
\end{figure}

\paragraph{Multi-layer $\bm{k}$-Cores~\cite{GBG.2017}.}
Let $r$ be a positive integer and
$G_i=(V,E_i)$, $i\in[1,r]$ be graphs 
such that $V$ is the common vertex set.
For $r$-dimensional vectors $\bm{x},\bm{y}\in\bbR^r$,
we write $\bm{x}\leqslant\bm{y}$ if $x_i\le y_i$ holds for each $i\in[1,r]$,
where we denote $\bm{x}=(x_1,x_2,\dots,x_r)$ and $\bm{y}=(y_1,y_2,\dots,y_r)$. 
Obviously $\leqslant$ is a partial order on $\bbR^r$.
We define a function $\omega:2^V\times V\to\bbR^r$ to be
$\omega_S(v)\coloneqq(\deg_{G_1[S]}(v),\deg_{G_2[S]}(v),\dots,\deg_{G_r[S]}(v))$,
$S\subseteq V$, $v\in V$. 
For $\bm{k}\in\bbR^r$, 
a subset $S\subseteq V$ is the \emph{multi-layer $\bm{k}$-core}
(\emph{of $\MG=\{G_1,G_2,\dots,G_r\}$})
if it is the maximal subset such that
$\omega_S(v)\geqslant\bm{k}$ holds for every $v\in S$. 
The definition of a multi-layer $\bm{k}$-core graph is analogous. 

Let $\MZ=(\bbR^r,\leqslant)$ denote a poset. 
The set system $\MS=\MS(V,\MZ,\omega,\bm{k};\MG)$ is local monotone
since the function $\omega$ is local monotone. 
A subset $S\subseteq V$ is a $\bm{k}$-solution
if and only if it induces a multi-layer $\bm{k}$-core graph. 
It is easy to show that
the arc set of the propagation digraph $H^{\propag}_\MS$
is the union of those of the analogous propagation digraphs for $G_1,G_2,\dots,G_r$. 
By \thmref{main_local_system}, we can enumerate MinRSs of $V$ in 
$O((n+M)n)$ time and $O(n+M)$ space,
where we denote $M=|E_1|+|E_2|+\dots+|E_r|$.

We cannot utilize \thmref{in-dominating} to improve the time bound
since the in-dominating seed property does not hold in general;
See \figref{multi-layer_counterexample}. 
In this example, we use $V=\{v_1,v_2,\dots,v_{11}\}$, $r=2$ and $\bm{k}=(2,2)$.
The two underlying graphs $G_1$ and $G_2$
are shown in (i) and (ii), respectively.
For the local monotone system
$\MS=\MS(V,\MZ,\omega,\bm{k};\MG=\{G_1,G_2\})$,
the propagation digraph $H^{\propag}_\MS$ is shown in (iii).
Observe that the arc set of $H^{\propag}_\MS$ is the union
of the arc sets of the propagation graphs
that are constructed independently for $G_1$ and $G_2$. 
The seed set of $H^{\propag}_\MS$ is $\MB^{\propag}_\MS=\{\{v_5\},\{v_6\},\{v_7\},\{v_8\},\{v_9,v_{10},v_{12}\}\}$, and $v_4$ has no outgoing arc to any of them.
Then $\cup\MB^{\propag}_\MS=\{v_5,v_6,\dots,v_{10},v_{12}\}$ is not an in-dominating set in $H^{\propag}_\MS$. 

\begin{figure}[t!]
  \centering
  \begin{tabular}{ccc}
  \begin{tikzpicture}[scale=0.6,>=Latex,
    flow/.style={->, shorten <=2pt, shorten >=2pt}]
    \tikzmath{
      \a = 2;
    }
    \coordinate (v1) at (0, 0) node at (v1) [label=above:$v_{1}$] {};
    \coordinate (v2) at (\a, 0) node at (v2) [label=above:$v_{2}$] {};
    \coordinate (v3) at (0, -\a) node at (v3) [label=below:$v_{3}$] {};
    \coordinate (v4) at (\a, -\a) node at (v4) [label=below left:$v_{4}$] {};
    \coordinate (v5) at (\a, -2*\a)  node at (v5) [label=left:$v_{5}$] {};
    \coordinate (v6) at (2*\a,-2*\a) node at (v6) [label=right:$v_{6}$] {};
    \coordinate (v7) at (\a,-3*\a) node at (v7) [label=left:$v_{7}$] {};
    \coordinate (v8) at (2*\a,-3*\a) node at (v8) [label=right:$v_{8}$] {};
    \coordinate (v9) at (2*\a,0) node at (v9) [label=above:$v_{9}$] {};
    \coordinate (v10) at (3*\a,0) node at (v10) [label=above:$v_{10}$] {};
    \coordinate (v11) at (2*\a,-\a) node at (v11) [label=below right:$v_{11}$] {};
    \coordinate (v12) at (3*\a,-\a) node at (v12) [label=below:$v_{12}$] {};
    \draw (v1)--(v2);
    \draw (v1)--(v3);
    \draw (v2)--(v4);
    \draw (v3)--(v4);
    \draw (v4)--(v5);
    \draw (v5)--(v6);
    \draw (v5)--(v7);
    \draw (v5)--(v8);
    \draw (v6)--(v7);
    \draw (v6)--(v8);
    \draw (v7)--(v8);
    \draw (v6)--(v11);
    \draw (v11)--(v9);
    \draw (v11)--(v12);
    \draw (v9)--(v10);
    \draw (v10)--(v12);
    \fill[fill=white, draw] (v1) circle (3pt) (v2) circle (3pt);
    \fill[fill=white, draw] (v3) circle (3pt) (v4) circle (3pt);
    \fill[fill=white, draw] (v5) circle (3pt) (v6) circle (3pt);
    \fill[fill=white, draw] (v7) circle (3pt) (v8) circle (3pt);
    \fill[fill=white, draw] (v9) circle (3pt) (v10) circle (3pt);
    \fill[fill=white, draw] (v11) circle (3pt) (v12) circle (3pt);
  \end{tikzpicture}
  &\begin{tikzpicture}[scale=0.6,>=Latex,
    flow/.style={->, shorten <=2pt, shorten >=2pt}]
    \tikzmath{
      \a = 2;
    }
    \coordinate (v1) at (0, 0) node at (v1) [label=above:$v_{1}$] {};
    \coordinate (v2) at (\a, 0) node at (v2) [label=above:$v_{2}$] {};
    \coordinate (v3) at (0, -\a) node at (v3) [label=below:$v_{3}$] {};
    \coordinate (v4) at (\a, -\a) node at (v4) [label=below left:$v_{4}$] {};
    \coordinate (v5) at (\a, -2*\a)  node at (v5) [label=left:$v_{5}$] {};
    \coordinate (v6) at (2*\a,-2*\a) node at (v6) [label=right:$v_{6}$] {};
    \coordinate (v7) at (\a,-3*\a) node at (v7) [label=left:$v_{7}$] {};
    \coordinate (v8) at (2*\a,-3*\a) node at (v8) [label=right:$v_{8}$] {};
    \coordinate (v9) at (2*\a,0) node at (v9) [label=above:$v_{9}$] {};
    \coordinate (v10) at (3*\a,0) node at (v10) [label=above:$v_{10}$] {};
    \coordinate (v11) at (2*\a,-\a) node at (v11) [label=left:$v_{11}$] {};
    \coordinate (v12) at (3*\a,-\a) node at (v12) [label=below:$v_{12}$] {};
    \draw (v1)--(v2);
    \draw (v1)--(v3);
    \draw (v1)--(v4);
    \draw (v2)--(v4);
    \draw (v2)--(v3);
    \draw (v2)--(v9);
    \draw (v3)--(v4);
    \draw (v4)--(v5);
    \draw (v5)--(v6);
    \draw (v5)--(v7);
    \draw (v5)--(v8);
    \draw (v6)--(v7);
    \draw (v6)--(v8);
    \draw (v7)--(v8);
    \draw (v6)--(v11);
    \draw (v11)--(v9);
    \draw (v11)--(v12);
    \draw (v11)--(v10);
    \draw (v10)--(v12);
    \draw (v6)--(v10);
    \draw (v6)--(v12);
    \fill[fill=white, draw] (v1) circle (3pt) (v2) circle (3pt);
    \fill[fill=white, draw] (v3) circle (3pt) (v4) circle (3pt);
    \fill[fill=white, draw] (v5) circle (3pt) (v6) circle (3pt);
    \fill[fill=white, draw] (v7) circle (3pt) (v8) circle (3pt);
    \fill[fill=white, draw] (v9) circle (3pt) (v10) circle (3pt);
    \fill[fill=white, draw] (v11) circle (3pt) (v12) circle (3pt);
  \end{tikzpicture}
  &\begin{tikzpicture}[scale=0.6,>=Latex,
    flow/.style={->, shorten <=2pt, shorten >=2pt}]
    \tikzmath{
      \a = 2;
    }
    \coordinate (v1) at (0, 0) node at (v1) [label=above:$v_{1}$] {};
    \coordinate (v2) at (\a, 0) node at (v2) [label=above:$v_{2}$] {};
    \coordinate (v3) at (0, -\a) node at (v3) [label=below:$v_{3}$] {};
    \coordinate (v4) at (\a, -\a) node at (v4) [label=below:$v_{4}$] {};
    \coordinate (v5) at (\a, -2*\a)  node at (v5) [label=left:$v_{5}$] {};
    \coordinate (v6) at (2*\a,-2*\a) node at (v6) [label=right:$v_{6}$] {};
    \coordinate (v7) at (\a,-3*\a) node at (v7) [label=left:$v_{7}$] {};
    \coordinate (v8) at (2*\a,-3*\a) node at (v8) [label=right:$v_{8}$] {};
    \coordinate (v9) at (2*\a,0) node at (v9) [label=above:$v_{9}$] {};
    \coordinate (v10) at (3*\a,0) node at (v10) [label=above:$v_{10}$] {};
    \coordinate (v11) at (2*\a,-\a) node at (v11) [label=below:$v_{11}$] {};
    \coordinate (v12) at (3*\a,-\a) node at (v12) [label=below:$v_{12}$] {};
    \draw[flow] (v4) to (v2);
    \draw[flow] (v4) to (v3);
    \draw[flow] (v11) to (v9);
    \draw[flow] (v11) to (v12);
    \draw[flow] (v1) to[out=10,in=170] (v2);
    \draw[flow] (v2) to[out=190,in=350] (v1);
    \draw[flow] (v1) to[out=-100,in=100] (v3);
    \draw[flow] (v3) to[out=80,in=-80] (v1);
    \draw[flow] (v9) to[out=10,in=170] (v10);
    \draw[flow] (v10) to[out=190,in=350] (v9);
    \draw[flow] (v10) to[out=-100,in=100] (v12);
    \draw[flow] (v12) to[out=80,in=-80] (v10);
    \draw[flow, dashed] (v2) to (v9);
    \fill[fill=white, draw] (v1) circle (3pt) (v2) circle (3pt);
    \fill[fill=white, draw] (v3) circle (3pt) (v4) circle (3pt);
    \fill[fill=white, draw] (v5) circle (3pt) (v6) circle (3pt);
    \fill[fill=white, draw] (v7) circle (3pt) (v8) circle (3pt);
    \fill[fill=white, draw] (v9) circle (3pt) (v10) circle (3pt);
    \fill[fill=white, draw] (v11) circle (3pt) (v12) circle (3pt);
  \end{tikzpicture}
  \\
  (i) A graph $G_1$
  & (ii) A graph $G_2$
  & (iii) The propagation digraph 
  \end{tabular}
  \caption{A counterexample
    such that the local monotone system
    based on multi-layer $\bm{k}$-cores
    does not satisfy the in-dominating seed property;
    in (iii), the dashed arc is derived from $G_2$, whereas
    solid arcs are from $G_1$ or both graphs.}
  \label{fig:multi-layer_counterexample}
\end{figure}

Multi-layer $\bm{k}$-cores can be extended further; 
an underlying graph $G_i$, $i\in[1,r]$, can be directed.
For a function $\omega:2^V\times V\to\bbR^r$, write
$\omega_S(v)=(\omega^{(i)}_S(v))_{i=1}^r$
for $S\subseteq V$ and $v\in V$.
Then any local monotone function $\omega^{(i)}:2^V\times V\to\bbR$
can be utilized for each $i\in[1,r]$. 

\paragraph{$(k,\ell)$-Cores~\cite{GTV.2013}.}
We introduce $(k,\ell)$-cores as an example of such extensions of
multi-layer $\bm{k}$-cores.
Let $k,\ell\in\bbZ$ be constants. 
For a digraph $D=(V,A)$,
a subset $S\subseteq V$ (or $D[S]$) is a \emph{$(k,\ell)$-core of $D$}
if $D[S]$ is the maximal subgraph such that
the minimum out-degree is no less than $k$
and the minimum in-degree is no less than $\ell$.
The definition of a $(k,\ell)$-core graph is analogous.

Let $\MZ=(\bbZ^2,\leqslant)$ denote a poset.
We define the function $\omega:2^V\times V$ to be
$\omega_S(v)\coloneqq(\deg^+_S(v),\deg^-_S(v))$, $S\subseteq V$, $v\in V$.
Let $\bm{\theta}=(k,\ell)$ and $\MG=\{G,G\}$. 
The system $\MS=\MS(V,\MZ,\omega,\bm{\theta};\MG)$
is local monotone, and a subset $S\subseteq V$ is a
$\bm{\theta}$-solution if and only if it induces
a $(k,\ell)$-core graph.
By \thmref{main_local_system}, we can enumerate MinRSs of $V$ in 
$O((n+m)n)$ time and $O(n+m)$ space. 

\invis{
In this subsubsection, we briefly discuss
the out-degree at least $k$ system and
the in-degree at least $k$ system.
We assume that a directed graph $G=(V,A)$ is given,
where $n=|V|$ and $m=|A|$.
The \emph{out-degree at least $k$} (\emph{\outdeggeq{$k$}}) system,
denoted by $\MS^{{\deg}+}_{G,k}$,
and the \emph{in-degree at least $k$} (\emph{\indeggeq{$k$}}) system,
denoted by $\MS^{{\deg}-}_{G,k}$,
are defined similarly to the \deggeq{$k$} system $\MS^{\deg}_{G,k}$,
but with the degree condition replaced by out-degree and in-degree conditions, respectively.
Specifically, for $S \subseteq V$ and $v \in V$,
we define $\omega^{{\deg}+}_S(v) := \deg^+_{G[S \cup \{ v \}]}(v)$ and $\omega^{{\deg}-}_S(v) := \deg^-_{G[S \cup \{ v \}]}(v)$.
Similar to $\MS^{\deg}_{G,k}$,
both $\MS^{{\deg}+}_{G,k}$ and $\MS^{{\deg}-}_{G,k}$ are local monotone systems.
Let $G^\top$ denote the reverse of $G$.
Then $\MS^{{\deg}-}_{G,k} = \MS^{{\deg}+}_{G^\top,k}$ holds,
so we can focus on $\MS^{{\deg}-}_{G,k}$ without loss of generality.
Although 
we write some lemmas and theorems for $\MS^{{\deg}-}_{G,k}$,
the proofs are omitted
due to their similarity to those for $\MS^{\deg}_{G,k}$.

We assume that $V$ is an \indeggeq{$k$} set of $G$,
that is, $\deg^-_G(v) \geq k$ for any $v \in V$
since otherwise
the unique MinRS of $V$ is $\rho(\emptyset)$,
which can be computed in $O(n+m)$ time
by a simple iterative procedure similar to the one for $\MS^{\deg}_{G,k}$ in \secref{deg-at-least-k}.

For $\MS^{{\deg}-}_{G,k}$,
the arc set of the propagation graph $H^{\propag}_\MS$ is given by
$A^{\propag}_\MS = \{(u,v)\in V\times V \mid \deg^-_{G-u}(v) < k\}$
since $\omega^{{\deg}-}_{V\setminus\{u\}}(v) = \deg^-_{G-u}(v)$ for $u \neq v$.
By a similar argument to \lemref{deg-at-least-k_degree_condition},
the following lemma characterizes the condition $\deg^-_{G-u}(v) < k$ for distinct vertices $u$ and $v$ in $V$.
\begin{lem}
  \label{lem:indeg-at-least-k_degree_condition}
  Let $G=(V,A)$ be a directed graph such that
  $V$ is an \indeggeq{$k$} set of $G$ for some positive integer $k$.
  For any distinct vertices $u,v \in V$,
  $\deg^-_{G-u}(v) < k$ holds
  if and only if
  $\deg^-_G(v) = k$ and $u \in N^-_G(v)$.
\end{lem}


Unfortunately,
the \indeggeq{$k$} system $\MS^{{\deg}-}_{G,k}$
does not satisfy the in-dominating seed property in general
(see \figref{indeg-at-least-k_counterexample} for a counterexample).

\begin{figure}[t!]
  \centering
  \begin{tabular}{cc}
  \begin{tikzpicture}[scale=0.45,>=Latex,
    flow/.style={->, shorten <=2pt, shorten >=2pt}]
    \tikzmath{
      \a = 3;
    }
    \coordinate (v1) at (0, 0) node at (v1) [label=left:$v_1$] {};
    \coordinate (v2) at (\a, 0) node at (v2) [label=right:$v_2$] {};
    \coordinate (v3) at (0, \a) node at (v3) [label=left:$v_3$] {};
    \coordinate (v4) at (\a, \a)  node at (v4) [label=right:$v_4$] {};
    \coordinate (v5) at (0,2*\a) node at (v5) [label=left:$v_5$] {};
    \coordinate (v6) at (\a,2*\a) node at (v6) [label=right:$v_6$] {};
    \fill[fill=white, draw] (v1) circle (3pt) (v2) circle (3pt);
    \fill[fill=white, draw] (v3) circle (3pt) (v4) circle (3pt);
    \fill[fill=white, draw] (v5) circle (3pt) (v6) circle (3pt);
    \draw[flow] (v1) to[out=10,in=170] (v2);
    \draw[flow] (v2) to[out=190,in=350] (v1);
    \draw[flow] (v1) to[out=100,in=260] (v3);
    \draw[flow] (v3) to[out=280,in=80] (v1);
    \draw[flow] (v1) to[out=55,in=215] (v4);
    \draw[flow] (v4) to[out=235,in=35] (v1);
    \draw[flow] (v2) to[out=145,in=305] (v3);
    \draw[flow] (v3) to[out=325,in=125] (v2);
    \draw[flow] (v2) to[out=100,in=260] (v4);
    \draw[flow] (v4) to[out=290,in=80] (v2);
    \draw[flow] (v3) to[out=10,in=170] (v4);
    \draw[flow] (v4) to[out=190,in=350] (v3);
    \draw[flow] (v3) to (v5);
    \draw[flow] (v3) to (v6);
    \draw[flow] (v4) to (v6);
    \draw[flow] (v6) to (v5);
  \end{tikzpicture}
  &
  \begin{tikzpicture}[scale=0.45,>=Latex,
    flow/.style={->, shorten <=2pt, shorten >=2pt}]
    \tikzmath{
      \a = 3;
    }
    \coordinate (v1) at (0, 0) node at (v1) [label=left:$v_1$] {};
    \coordinate (v2) at (\a, 0) node at (v2) [label=right:$v_2$] {};
    \coordinate (v3) at (0, \a) node at (v3) [label=left:$v_3$] {};
    \coordinate (v4) at (\a, \a)  node at (v4) [label=right:$v_4$] {};
    \coordinate (v5) at (0,2*\a) node at (v5) [label=left:$v_5$] {};
    \coordinate (v6) at (\a,2*\a) node at (v6) [label=right:$v_6$] {};
    \fill[fill=white, draw] (v1) circle (3pt) (v2) circle (3pt);
    \fill[fill=white, draw] (v3) circle (3pt) (v4) circle (3pt);
    \fill[fill=white, draw] (v5) circle (3pt) (v6) circle (3pt);
    \draw[flow] (v3) to (v5);
    \draw[flow] (v3) to (v6);
    \draw[flow] (v4) to (v6);
    \draw[flow] (v6) to (v5);
  \end{tikzpicture}
  \\
  (i) $G$ & (ii) The propagation graph $H^{\propag}_\MS$ of $\MS^{{\deg}-}_{G,2}$
  \end{tabular}
  \caption{A counterexample for the in-dominating seed property of the \indeggeq{$k$} system $\MS^{{\deg}-}_{G,k}$.
  In this example,
  $v_4$ does not belong to any seed,
  but there is no arc from $v_4$ to any seed in the propagation graph $H^{\propag}_\MS$ of $\MS^{{\deg}-}_{G,2}$.}
  \label{fig:indeg-at-least-k_counterexample}
\end{figure}

For a digraph $D=(V,A)$ and a positive integer $k$,
  we can regard $\MS^{{\deg}\pm}_{D,k}$ as $\MS_{G,k}^{w;{\deg}}$ for the underlying undirected graph $G = (V,E)$ of $D$
  and the weight function $w$ defined by $w(\{u,v\}) = |\{(u,v),(v,u)\}\cap A|$ for $\{u,v\}\in E$.

\subsubsection{Multi-layer System}
In this subsubsection,
we discuss the multi-layer system,
which considers the multiple monotone functions
defined on the same ground set.
We formally define the multi-layer system as follows.
Let $\MZ_1 = (Z_1, \leqslant_1), \dots, \MZ_r = (Z_r, \leqslant_r)$ be posets
and define $\bm{Z} \coloneqq Z_1 \times \cdots \times Z_r$ and
$\bm{\MZ} \coloneqq (\bm{Z}, \leqslant)$ to be the product poset of $\MZ_1, \dots, \MZ_r$
and $\leqslant$ to be the product order of $\leqslant_1, \dots, \leqslant_r$.
Obviously, $\bm{\MZ}$ is also a poset.
Given a ground set $U$ and $r$ monotone systems
$\MS^i = \MS(U, \MZ_i, \omega^i, \theta_i)$ for $i=1,\ldots,r$,
we define the \emph{multi-layer system} to be
$\MS^{1, \dots, r} \coloneqq \MS(U, \bm{\MZ}, \omega, \bm{\theta})$,
where $\omega : 2^U \times U \to \bm{Z}$ is the multi-layer function defined by
$\omega_S(u) \coloneqq (\omega^1_S(u), \dots, \omega^r_S(u))$ for any $S \subseteq U$ and $u \in U$,
and $\bm{\theta} = (\theta_1, \dots, \theta_r) \in \bm{Z}$.
Then $\MS^{1, \dots, r} = \MS^1 \cap \MS^2 \cap \dots \cap \MS^r$ holds,
that is, the set of $\bm{\theta}$-solutions
is the intersection of the sets of $\theta_i$-solutions for all $i$.

The propagation graph $H^{\propag}_{1, \dots, r}$ of $\MS^{1, \dots, r}$
can be characterized by the propagation graphs of the individual systems $\MS^1, \dots, \MS^r$ as follows.
\begin{lem}
  \label{lem:multi-layer_propagation_graph}
  Let $\MS^1, \dots, \MS^r$ be monotone systems on the same ground set $U$,
  and let $\MS^{1, \dots, r}$ be the multi-layer system defined by $\MS^1, \dots, \MS^r$.
  Then the propagation graph $H^{\propag}_{1, \dots, r} = (U, A^{\propag}_{1, \dots, r})$ of $\MS^{1, \dots, r}$ is the union of the propagation graphs $H^{\propag}_1 = (U, A^{\propag}_1), \dots, H^{\propag}_r = (U, A^{\propag}_r)$ of $\MS^1, \dots, \MS^r$, that is,
  $A^{\propag}_{1, \dots, r} = A^{\propag}_1 \cup \dots \cup A^{\propag}_r$.
\end{lem}
\begin{proof}
  By the definition of the propagation graph,
  $A^{\propag}_{1, \dots, r} = \{ (u,v) \in U \times U \mid \omega_{U \setminus \{u\}}(v) \not\geqslant \bm{\theta} \}$.
  Since $\omega_{U \setminus \{u\}}(v) = (\omega^1_{U \setminus \{u\}}(v), \dots, \omega^r_{U \setminus \{u\}}(v))$ and $\bm{\theta} = (\theta_1, \dots, \theta_r)$,
  $\omega_{U \setminus \{u\}}(v) \not\geqslant \bm{\theta}$ holds if and only if $\omega^i_{U \setminus \{u\}}(v) \not\geqslant_i \theta_i$ holds for some $i=1,\ldots,r$,
  that is, $(u,v) \in A^{\propag}_i$.
\end{proof}

Unfortunately,
the multi-layer system
does not satisfy the in-dominating seed property in general
(see \figref{multi_in-out_counterexample} for a counterexample).

\begin{figure}[t!]
  \centering
  \begin{tabular}{cc}
  \begin{tikzpicture}[scale=0.45,>=Latex,
    flow/.style={->, shorten <=2pt, shorten >=2pt}]
    \tikzmath{
      \a = 3;
    }
    \coordinate (v1) at (0, 0) node at (v1) [label=left:$v_1$] {};
    \coordinate (v2) at (\a, 0) node at (v2) [label=right:$v_2$] {};
    \coordinate (v3) at (0, \a) node at (v3) [label=left:$v_3$] {};
    \coordinate (v4) at (\a, \a)  node at (v4) [label=right:$v_4$] {};
    \coordinate (v5) at ($(v3)+(180-72:\a)$) node at (v5) [label=left:$v_5$] {};
    \coordinate (v6) at ($(v5)+(36:\a)$) node at (v6) [label=right:$v_6$] {};
    \coordinate (v7) at ($(v4)+(72:\a)$) node at (v7) [label=right:$v_7$] {};
    \fill[fill=white, draw] (v1) circle (3pt) (v2) circle (3pt);
    \fill[fill=white, draw] (v3) circle (3pt) (v4) circle (3pt);
    \fill[fill=white, draw] (v5) circle (3pt) (v6) circle (3pt);
    \fill[fill=white, draw] (v7) circle (3pt);
    \draw[flow] (v1) to[out=10,in=170] (v2);
    \draw[flow] (v2) to[out=190,in=350] (v1);
    \draw[flow] (v1) to[out=100,in=260] (v3);
    \draw[flow] (v3) to[out=280,in=80] (v1);
    \draw[flow] (v1) to[out=55,in=215] (v4);
    \draw[flow] (v4) to[out=235,in=35] (v1);
    \draw[flow] (v2) to[out=145,in=305] (v3);
    \draw[flow] (v3) to[out=325,in=125] (v2);
    \draw[flow] (v2) to[out=100,in=260] (v4);
    \draw[flow] (v4) to[out=290,in=80] (v2);
    \draw[flow] (v3) to[out=10,in=170] (v4);
    \draw[flow] (v4) to[out=190,in=350] (v3);
    \draw[flow] (v3) to[out=180-72+10,in=-72-10] (v5);
    \draw[flow] (v5) to[out=-72+10,in=180-72-10] (v3);
    \draw[flow] (v4) to[out=72+10,in=180+72-10] (v7);
    \draw[flow] (v7) to[out=180+72+10,in=72-10] (v4);
    \draw[flow] (v4) to[out=180-72+10,in=-72-10] (v6);
    \draw[flow] (v6) to[out=-72+10,in=180-72-10] (v4);
    \draw[flow] (v3) to (v7);
    \draw[flow] (v6) to (v5);
    \draw[flow] (v7) to (v6);
  \end{tikzpicture}
  &
  \begin{tikzpicture}[scale=0.45,>=Latex,
    flow/.style={->, shorten <=2pt, shorten >=2pt}]
    \tikzmath{
      \a = 3;
    }
    \coordinate (v1) at (0, 0) node at (v1) [label=left:$v_1$] {};
    \coordinate (v2) at (\a, 0) node at (v2) [label=right:$v_2$] {};
    \coordinate (v3) at (0, \a) node at (v3) [label=left:$v_3$] {};
    \coordinate (v4) at (\a, \a)  node at (v4) [label=right:$v_4$] {};
    \coordinate (v5) at ($(v3)+(180-72:\a)$) node at (v5) [label=left:$v_5$] {};
    \coordinate (v6) at ($(v5)+(36:\a)$) node at (v6) [label=right:$v_6$] {};
    \coordinate (v7) at ($(v4)+(72:\a)$) node at (v7) [label=right:$v_7$] {};
    \fill[fill=white, draw] (v1) circle (3pt) (v2) circle (3pt);
    \fill[fill=white, draw] (v3) circle (3pt) (v4) circle (3pt);
    \fill[fill=white, draw] (v5) circle (3pt) (v6) circle (3pt);
    \fill[fill=white, draw] (v7) circle (3pt);
    \draw[flow] (v3) to (v5);
    \draw[flow] (v4) to (v7);
    \draw[flow] (v4) to (v6);
    \draw[flow] (v3) to (v7);
    \draw[flow] (v6) to (v5);
    \draw[flow] (v7) to (v6);
  \end{tikzpicture}
  \\
  (i) $G$ & (ii) The propagation graph $H^{\propag}_\MS$ of $\MS^{{\deg}-,{\deg}+}_{G,(1, 2)}$
  \end{tabular}
  \caption{A counterexample for the in-dominating seed property of the multi-layer system $\MS^{{\deg}-,{\deg}+}_{G,(k, \ell)}$.
  In this example,
  $v_4$ does not belong to any seed,
  but there is no arc from $v_4$ to any seed in the propagation graph $H^{\propag}_\MS$ of $\MS^{{\deg}-,{\deg}+}_{G,(1, 2)}$.}
  \label{fig:multi_in-out_counterexample}
\end{figure}

}

\invis{
\paragraph{Attributed Graphs.}
We define an \emph{attributed graph}
to be a tuple $\MI=(G,I,\sigma)$ of an undirected graph $G$;
a set $I=[1,q]$ of items, where $q$ denotes the number of items; and
a function $\sigma:V\to 2^I$ that assigns a subset of items
to each vertex.
Attributed graphs are used to represent
social networks~\cite{XX} and protein-protein interaction networks~\cite{XX}. 

Given an attributed graph $\MI=(G=(V,E),I,\sigma)$,
one may ask for the maximal subset $S\subseteq V$
such that $G[S]$ is a $k$-core graph
and that some prescribed items appear frequently over $S$. 
Such $S$ could be interesting since
it is not only cohesive in terms of topology
{\color{red}but also ...}
}

\subsubsection{Enumeration of All \texorpdfstring{$\theta$}{theta}-Solutions}
Finally, we show how to enumerate all $\theta$-solutions
in a monotone system, utilizing our framework. 

For a ground set $U$ and $\MF\subseteq 2^U$,
a set system $(U,\MF)$ satisfies the \emph{subset-disjoint}
(\emph{SD}) property~\cite{TH.2025}
if, for any $X,X'\in\MF$ with $X'\subsetneq X$
and any MinRS $Y$ of $X$,
either $Y\subseteq X'$ or $Y\cap X'=\emptyset$ holds. 
We define a \emph{MinRS oracle} to be an oracle
that
returns a MinRS $Y$ of $X$ such that $Y\cap Z=\emptyset$
if one exists, where 
$X\in\MF$ and $Z\subseteq X$ are given as a query to the oracle. 


For an enumeration algorithm,
the \emph{delay} is a crucial performance measure~\cite{JYP.1988},
which refers to time taken
(i) between the start of the algorithm and the output of the first solution;
(ii) between the output of any two consecutive solutions; and
(iii) between the output of the last solution and the termination of the algorithm.
An algorithm is said to be \emph{polynomial delay} 
if its delay is bounded by a polynomial 
function of the input size.

\begin{thm}[Tada and Haraguchi~\cite{TH.2025}]
  \label{thm:TH.2025}
  For a set system $(U,\MF)$ with the SD property,
  we can enumerate all subsets in $\MF$ in
  $O(|U|+\Tmin)$ delay, where $\Tmin$ denotes
  the computation time of a MinRS oracle. 
\end{thm}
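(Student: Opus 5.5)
The plan is to build a binary-partition style recursion. The key observation is that every member of $\MF$ strictly inside a current set $X$ is separated from $X$ by some MinRS of $X$. The SD property then makes these MinRSs behave like disjoint ``branching letters''. I assume, as in the intended application, that the recursion is started from the maximum member $X_0$ of $\MF$. This is $U$ itself if $U\in\MF$; in a monotone system it is the unique maximal $\theta$-solution given by \lemref{unique}.

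\textbf{The procedure.} Define $\mathrm{Enum}(X,Z)$ for $X\in\MF$ and $Z\subseteq X$. It outputs $X$ and then loops as follows. Query the MinRS oracle with $(X,Z)$. If it returns a MinRS $Y$ of $X$ with $Y\cap Z=\emptyset$, call $\mathrm{Enum}(X\setminus Y,Z)$, set $Z\gets Z\cup Y$, and repeat. If no such $Y$ exists, stop. Note that $X\setminus Y\in\MF$ by the definition of an RS, and $Z\subseteq X\setminus Y$. The claim to prove is that $\mathrm{Enum}(X,Z)$ outputs every $X'\in\MF$ with $Z\subseteq X'\subseteq X$ exactly once.

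\textbf{Completeness.} Take $X'\in\MF$ with $Z\subseteq X'\subsetneq X$. Then $X\setminus X'$ is a nonempty set whose removal leaves $X'\in\MF$, so it is an RS of $X$ and contains a MinRS $Y^\circ$ with $Y^\circ\cap X'=\emptyset$. Consequently $Y^\circ\cap Z=\emptyset$ at every stage at which $Z$ is still contained in $X'$.

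Now let $Y_1,Y_2,\dots$ be the MinRSs returned by the oracle in this call. By the SD property (applied to $X'\subsetneq X$), each $Y_j$ is either contained in $X'$ or disjoint from it. Let $i$ be the first index with $Y_i\cap X'=\emptyset$. Such an index exists: as long as every earlier $Y_j\subseteq X'$, the running set $Z$ stays inside $X'$, so $Y^\circ$ remains a valid answer and the loop cannot stop first. At step $i$ the current $Z=Z\cup Y_1\cup\dots\cup Y_{i-1}$ satisfies $Z\subseteq X'\subseteq X\setminus Y_i$. Induction on $|X|$ then shows that $X'$ is output by the $i$-th child.

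\textbf{Uniqueness.} The $i$-th child only outputs sets avoiding $Y_i$. Every later child has $Y_i\subseteq Z$, so it only outputs sets containing $Y_i$. Hence the children's ranges are pairwise disjoint, and none of them contains $X$ itself, which is output only by the parent.

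\textbf{Delay.} Each recursive call costs $O(|U|)$ for the set operations ($X\setminus Y$, $Z\cup Y$, and the copy or undo of these) plus one oracle call per child, plus one final failing oracle call. Each call also produces exactly one output. The recursion depth is at most $|U|$, since $|X|$ strictly decreases. The expected main obstacle is that a naive top-down printing could force long runs of returning upward without any output. I would handle this with the standard alternating-output trick: print $X$ on entry at even depths and on exit at odd depths. Then between any two consecutive outputs only $O(1)$ call frames are entered or left, which gives delay $O(|U|+\Tmin)$. It remains to check that the undo of $Z$ and $X$ on return can be charged to the same frame, which keeps the per-frame cost at $O(|U|)$ plus the oracle time.
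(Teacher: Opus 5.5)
Your argument is correct, and it is essentially the partition-based scheme of Tada and Haraguchi, which the paper imports without proof. The family $\{X'\in\MF : Z\subseteq X'\subseteq X\}$ splits into $\{X\}$ and the subfamilies attached to the successive MinRSs $Y_1,Y_2,\dots$ that avoid the growing $Z$. The SD property forces the first $Y_i$ not contained in $X'$ to be disjoint from $X'$. Uno-style alternating output then turns the $O(|U|+\Tmin)$ work per frame transition into the delay bound.

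The one point to make explicit is scope. You need a known maximum member $X_0$ (for example $U\in\MF$), and the statement as phrased here does not supply one. This is harmless for the paper's application: \lemref{unique} gives the unique maximal $\theta$-solution $V\setminus\rho(\emptyset)$, which one run of \algref{rho} computes before the first output. For an SD system with several maximal members, however, running your recursion from each of them would output their common subsets more than once.
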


We can enumerate all $\theta$-solutions
in a monotone system $\MS(V,\MZ,\omega,\theta)$
efficiently by \thmref{TH.2025} and
the following lemma. 

\begin{lem}
  \label{lem:SD}
  Any monotone system satisfies SD property. 
\end{lem}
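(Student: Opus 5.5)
We must show: for $\theta$-solutions $X'\subsetneq X$ and any MinRS $Y$ of $X$ (so $X\setminus Y$ is a $\theta$-solution and $Y\neq\emptyset$ is minimal), either $Y\subseteq X'$ or $Y\cap X'=\emptyset$. The argument mirrors the uniqueness proof of \lemref{unique}: combine two removable sets and contradict minimality via monotonicity.

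\textbf{Step 1: reduce to one removable set.} Suppose $Y\not\subseteq X'$ and $Y\cap X'\neq\emptyset$. Put $Y'\coloneqq Y\setminus X'$. Then $Y'$ is nonempty, because $Y\not\subseteq X'$, and $Y'\subsetneq Y$, because $Y\cap X'\neq\emptyset$. It therefore suffices to show that $X\setminus Y'$ is a $\theta$-solution, which contradicts the minimality of $Y$.

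\textbf{Step 2: identify $X\setminus Y'$.} We have $X\setminus Y'=(X\setminus Y)\cup(Y\cap X')$. Since $X'\subseteq X$, this equals $(X\setminus Y)\cup X'$. Call this set $S$.

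\textbf{Step 3: check every element of $S$ against $\theta$.}
\begin{itemize}
\item For $u\in X\setminus Y$: $X\setminus Y$ is a $\theta$-solution and $X\setminus Y\subseteq S$, so monotonicity gives $\omega_S(u)\geqslant\omega_{X\setminus Y}(u)\geqslant\theta$.
\item For $u\in X'$: $X'$ is a $\theta$-solution and $X'\subseteq S$, so $\omega_S(u)\geqslant\omega_{X'}(u)\geqslant\theta$.
\end{itemize}
Hence $S=X\setminus Y'$ is a $\theta$-solution, which is the contradiction needed. In short, the union of two $\theta$-solutions is again a $\theta$-solution by monotonicity.

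\textbf{Main obstacle.} The only delicate point is the set identity in Step 2 and the requirement $Y'\neq\emptyset$, since an RS must be nonempty. Both follow directly from $X'\subseteq X$ and the assumption $Y\not\subseteq X'$. The argument uses only that $X'$ is a $\theta$-solution; that $X$ is one is not needed.
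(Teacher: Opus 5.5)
Your proof is correct and follows essentially the same route as the paper. You set $Y'=Y\setminus X'$, check that $X\setminus Y'$ is a $\theta$-solution by monotonicity (using $X\setminus Y$ for one part and the smaller solution $X'$ for the other), and conclude that a nonempty $Y'$ contradicts the minimality of $Y$; writing $X\setminus Y'=(X\setminus Y)\cup X'$ as a union of two $\theta$-solutions is just a slightly cleaner packaging of the paper's element-wise check, and you are right that the hypothesis that $X$ itself is a $\theta$-solution is never used.
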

\begin{proof}
  For any monotone system $\MS(V,\MZ,\omega,\theta)$,
  let $X_1,X_2$ be $\theta$-solutions
  such that $X_1\supsetneq X_2$.
  For any MinRS $Y$ of $X_1$, we show that
  either $Y\cap X_2=\emptyset$ or $Y\subseteq X_2$ holds. 
  Suppose that $Y\cap X_2\ne\emptyset$.
  Let $Y'\coloneqq Y\setminus X_2$, where $Y'\subsetneq Y$.
  We see $X_2\subsetneq X_1\setminus Y'$, and
  for $v\in Y\cap X_2$, we have
  $\theta\leqslant\omega_{X_2}(v)\leqslant\omega_{X_1\setminus Y'}(v)$. 
  For $u\in X_1\setminus Y$, we have
  $\theta\leqslant\omega_{X_1\setminus Y}(u)\leqslant\omega_{X_1\setminus Y'}(u)$. 
  If $Y'\ne\emptyset$ (i.e., $Y\not\subseteq X_2$), then
  $Y'$ would be an RS of $X_1$,
  contradicting the minimality of $Y$. 
\end{proof}

For our purpose, it suffices to design 
an algorithm to generate MinRSs of any $\theta$-solution.
We can utilize \algref{local_minrs}
although it is designed to enumerate MinRSs of the ground set $V$;
for $S\subseteq V$, the restriction of the monotone system to $S$
is also a monotone system. 

For local monotone systems, we have the following corollary
to Theorems~\ref{thm:main_local_system} and~\ref{thm:in-dominating}. 
\begin{cor}
  \label{cor:main_local_system}
  For a local monotone system $\MS=\MS(V,\MZ,\omega,\theta; G)$,
  we can enumerate all $\theta$-solutions in $\MS$
  in $O((n+m)n\tau_\omega)$ delay and $O(n+m+\kappa_\omega)$ space.
  Furthermore, if $\MS$ satisfies the in-dominating seed property,
  then the delay is $O((n+m)\log n\cdot\tau_\omega)$. 
\end{cor}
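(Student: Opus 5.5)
The plan is to combine \thmref{TH.2025} with \lemref{SD}, so everything reduces to implementing a MinRS oracle fast enough. By \lemref{SD}, the family $\MS$ of all $\theta$-solutions has the SD property. \thmref{TH.2025} then enumerates all of $\MS$ with delay $O(n+\Tmin)$, where $\Tmin$ is the cost of one oracle call. It therefore suffices to show that, given a $\theta$-solution $X$ and $Z\subseteq X$, we can return a MinRS $Y$ of $X$ with $Y\cap Z=\emptyset$, or report that none exists, within the claimed bound.

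To build the oracle, restrict the system to $X$. Define $\omega^X_S(u)\coloneqq\omega_{S}(u)$ for $S\subseteq X$, $u\in X$. This is monotone, and it is local with respect to $G[X]$, since $\omega_S(u)=\omega_{S\cap N_G(u)}(u)=\omega_{S\cap N_{G[X]}(u)}(u)$ for $S\subseteq X$. Its $\theta$-solutions are exactly the $\theta$-solutions of $\MS$ contained in $X$. Hence the MinRSs of $X$ in $\MS$ coincide with the MinRSs of the ground set $X$ in the restricted system.

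\thmref{main_local_system}, applied to the restricted system with $|X|\le n$ and $|E(G[X])|\le m$, lists all MinRSs of $X$ in $O((n+m)n\tau_\omega)$ time and $O(n+m+\kappa_\omega)$ space. The MinRSs are pairwise disjoint by the argument for \lemref{unique}, so their total size is at most $n$. We can therefore scan them in $O(n)$ time, using a marker array for $Z$, and return one disjoint from $Z$ if it exists. This gives $\Tmin=O((n+m)n\tau_\omega)$ and the first claim.

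For the second claim, the main obstacle is that the in-dominating seed property is assumed for $\MS$ on $V$, but \thmref{in-dominating} must be applied to the restricted system on each queried $X$. I would first try to argue that the property is inherited by restrictions to $\theta$-solutions. If this fails in general, I would read the corollary's hypothesis as requiring the property for every restriction to a $\theta$-solution. This is the case for $k$-cores, because any $X$ inducing a $k$-core graph again satisfies \lemref{deg-at-least-k_propagation_graph}(v) with $G[X]$ in place of $G$.

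Granting the property for the restricted system, \thmref{in-dominating} yields $\Tmin=O((n+m)\log n\cdot\tau_\omega)$, again with the $O(n)$ filtering step. The space bound follows because \thmref{TH.2025} adds only $O(n)$ working space beyond the oracle. I would verify this point against the construction in~\cite{TH.2025}, treating it as the remaining bookkeeping step.
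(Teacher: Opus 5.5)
This is the paper's own route. \lemref{SD} gives the SD property, \thmref{TH.2025} reduces the task to a MinRS oracle, and the oracle runs the MinRS enumeration of Theorems~\ref{thm:main_local_system} and~\ref{thm:in-dominating} on the restriction of $\MS$ to the queried $\theta$-solution $X$. You check two things the paper leaves implicit: that the restriction is local with respect to $G[X]$, and that pairwise disjointness of the MinRSs of $X$ lets you filter against $Z$ in $O(n)$ time. With these, the first claim goes through.

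For the second claim your worry is justified, and the first branch of your plan cannot work: the in-dominating seed property is \emph{not} inherited by restrictions to $\theta$-solutions. Here is a counterexample.
Take the weighted network of \figref{weight-and-sumdegree_counterexample} with threshold $2$ as $G[X]$.
Add a clique $K$ of at least three new vertices with edge weights $2$, and join every vertex of $X$ to two vertices of $K$ by weight-$2$ edges.
In the enlarged network, deleting any single vertex leaves every other vertex with weighted degree at least $2$. Hence $H^{\propag}_\MS$ has no arcs, every vertex is a seed, and $\MS$ vacuously has the property.
Yet $X$ is a $\theta$-solution whose restricted system is exactly the one in the figure, where the property fails (e.g., at $v_2$).
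On such a query the oracle can only use \thmref{main_local_system}, so the $O((n+m)\log n\cdot\tau_\omega)$ delay does not follow from the hypothesis as literally stated.

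The paper's one-line derivation has the same gap. What is actually needed is your fallback, the hereditary hypothesis that every restriction to a $\theta$-solution has the property. As you note, this holds for $k$-cores by \lemref{deg-at-least-k_propagation_graph}(v) applied to $G[X]$, so \corref{deg-at-least-k} is unaffected. Likewise, the $O(n+m+\kappa_\omega)$ space bound rests on the construction in \cite{TH.2025}, which the paper does not justify any further than you do.
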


We also have the following corollary to \thmref{deg-at-least-k}. 
\begin{cor}
  \label{cor:deg-at-least-k}
  Given an undirected graph $G=(V,E)$ and a positive integer $k$,
  we can enumerate all subsets $S\subseteq V$ that
  induce $k$-core graphs in $O((n+m)\log n)$ delay
  and $O(n+m)$ space. 
\end{cor}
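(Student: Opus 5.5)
The plan is to read \corref{deg-at-least-k} as a direct instance of \thmref{TH.2025}. We combine \lemref{SD} with \thmref{deg-at-least-k}, and the only real content is implementing the MinRS oracle within the required bound. Let $\MS=\MS(V,\MZ,\omega,k;G)$ with $\omega_S(u)=\deg_{G[S\cup\{u\}]}(u)$. Its $k$-solutions are exactly the subsets inducing $k$-core graphs, including $\emptyset$. By \lemref{SD}, $(V,\MS)$ has the SD property. Hence \thmref{TH.2025} gives delay $O(n+\Tmin)$, and it suffices to show $\Tmin=O((n+m)\log n)$ with $O(n+m)$ working space.

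To answer a query $(X,Z)$ with $X\in\MS$ and $Z\subseteq X$, we first build $G[X]$ in $O(n+m)$ time. We then observe that $G[X]$ is a $k$-core graph, and that the restriction of $\MS$ to $X$ is exactly the system of \thmref{deg-at-least-k} on $G[X]$. Therefore the algorithm behind \thmref{deg-at-least-k} lists all MinRSs of $X$ in $O((n+m)\log n)$ time and $O(n+m)$ space. By \lemref{minRS_necsuf} and the remark in the introduction, these MinRSs are pairwise disjoint and there are at most $n$ of them. We scan them with a boolean marker array for $Z$ and return the first $Y$ with $Y\cap Z=\emptyset$, or report that none exists. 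The scan costs $O(n)$ because the MinRSs are disjoint.

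The step I expect to need care is the space and delay accounting across the enumeration of \thmref{TH.2025}. That framework keeps a recursion or stack structure whose size must stay $O(n)$ or $O(n+m)$. Each oracle call should reuse a single $O(n+m)$ workspace, including the degree array described in the proof of \thmref{deg-at-least-k}, which is reset after each call. If the framework's own bookkeeping is only $O(|U|)=O(n)$, as its delay bound suggests, then the total space is $O(n+m)$. The delay is $O(n+(n+m)\log n)=O((n+m)\log n)$. The degenerate case $k$ larger than every degree is already handled, since the only solution is then $\emptyset$ and it is output immediately.
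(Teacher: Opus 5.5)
Your proposal follows the paper's route: \lemref{SD} gives the SD property, \thmref{TH.2025} reduces the task to a MinRS oracle, and the oracle is implemented by running the algorithm of \thmref{deg-at-least-k} on the restriction to $G[X]$ (a $k$-core graph) and discarding MinRSs that meet $Z$; you spell out this filtering step more explicitly than the paper does. The one soft spot is the space bound, because an $O(|U|+\Tmin)$ delay does not by itself imply $O(|U|)$ bookkeeping, so the $O(n+m)$ space has to come from the space guarantee of the cited Tada--Haraguchi framework, which is also what the paper implicitly relies on.
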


\paragraph{Comparison with Boley et al.'s Work~\cite{BHPW.2010}.}
A set system $(U,\MF)$ is \emph{strongly accessible}
if it is \emph{accessible} (i.e., for all $X\in\MF$, there is $e\in X$ such that $X\setminus\{e\}\in\MF$); and for any $X,Y\in\MF$ such that $X\subsetneq Y$,
there is $e\in Y\setminus X$ such that $X\cup\{e\}\in\MF$. 
Given a strongly accessible system $(U,\MF)$
and a closure operator $\rho:\MF\to\MF$,
Boley et al.~\cite{BHPW.2010}
developed an algorithm that enumerates all fixed points
with respect to $\rho$ (i.e., all sets $C\in\MF$ that satisfy $\rho(C)=C$)
in $O(n(\Tmem+\Tclo+n))$ delay,
where $n=|U|$ and $\Tmem$ and $\Tclo$ denote the computation time
for a membership oracle and for computing the closure $\rho(S)$ of
any subset $S\subseteq U$, respectively.

Their framework on a strongly accessible system $(U,\MF)$
can be used to find all $\theta$-solutions in a
monotone system $\MS(V,\MZ,\omega,\theta)$
by setting $U=V$, $\MF=2^V$ and $\rho$ to the closure operator
that we defined in \secref{mmw_prop}. 
For example, for a given undirected graph $G=(V,E)$, let us consider the problem
of enumerating all $S\subseteq V$ such that $G[S]$ is a $k$-core graph. 
We can implement their algorithm so that $\Tmem=O(1)$ and $\Tclo=O(n+m)$ hold,
and hence the delay is $O((n+m)n)$. 
It is notable that \corref{deg-at-least-k} achieves a better delay bound 
than this straightforward application of Boley et al.'s
framework. 

\section{Concluding Remarks}
\label{sec:conc}

In this paper, we proposed a general framework based on
monotone systems to systematically study the enumeration of MinRSs
and $\theta$-solutions, generalizing $k$-cores and their various extensions.
By introducing the propagation digraph and the SR digraph, we characterized the structural properties of MinRSs.
Under this framework, we developed efficient algorithms
for local monotone systems, achieving a time complexity of $O((n+m)\log n \cdot \tau_{\omega})$ for systems satisfying the in-dominating seed property.
For undirected $k$-cores, we proved that this property holds,
leading to an $O((n+m)\log n)$ time algorithm for MinRS enumeration
and an $O((n+m)\log n)$ delay algorithm for $\theta$-solution enumeration.
The latter delay bound is a significant improvement over the $O((n+m)n)$ delay
achieved by Boley et al.'s strong accessibility-based framework~\cite{BHPW.2010}.

We conclude this paper by highlighting several directions for future work.
(1) Our algorithm still has much room for further investigation.
In particular, although we introduced two auxiliary digraphs
and tailored them for our purpose, 
a deeper theoretical understanding of their properties is required.
It would be interesting to explore the applicability
of the algorithm to problems of finding minimal closures.
(2) Characterization of the in-dominating seed property.
We showed that the in-dominating seed property enables a more efficient algorithm.
However, within the scope of this study, the standard $k$-core is
the only model for which this property holds;
it fails to hold for many of its extensions.
Clarifying the underlying factors that cause this difference remains
a key open question.
We are also interested in investigating weaker versions of
the property that still guarantee computational efficiency.
(3) Empirical evaluations on large-scale real-world networks
are needed to demonstrate the practical performance and scalability
of our proposed algorithms compared to baseline approaches.

\clearpage
\bibliographystyle{abbrvnat}
\bibliography{ref} 

\end{document}